\documentclass{article}

\usepackage[english]{babel}
\usepackage[letterpaper,top=2cm,bottom=2cm,left=3cm,right=3cm,marginparwidth=1.75cm]{geometry}
\usepackage{nicefrac}

\usepackage{amsmath}
\usepackage{amssymb}
\usepackage{bbm}
\usepackage{amsthm}
\usepackage{thmtools}
\usepackage{graphicx}
\usepackage{physics}
\usepackage{complexity}
\usepackage[colorlinks=true, allcolors=blue]{hyperref}
\usepackage[capitalise]{cleveref}
\usepackage{url}
\usepackage{bm}
\usepackage{comment}
\usepackage[dvipsnames]{xcolor}
\usepackage{enumitem}

\usepackage[giveninits=true,maxbibnames=99,style=alphabetic,maxalphanames=4,minalphanames=3,isbn=false]{biblatex}

\AtEveryBibitem{%
  \clearlist{language}%
}
\renewbibmacro*{doi+eprint+url}{%
  \iftoggle{bbx:doi}
    {\printfield{doi}}
    {}%
  \newunit\newblock
  \ifboolexpr{togl {bbx:eprint} and test {\iffieldundef{doi}}}
    {\usebibmacro{eprint}}
    {}%
  \newunit\newblock
  \ifboolexpr{togl {bbx:url} and test {\iffieldundef{doi}}  and test {\iffieldundef{eprint}}}
    {\usebibmacro{url+urldate}}
    {}}

\newtheorem{theorem}{Theorem}[section]
\newtheorem*{theorem*}{Theorem} 

\newtheorem{claim}[theorem]{Claim}
\newtheorem{lem}[theorem]{Lemma}
\newtheorem{fact}[theorem]{Fact}

\newtheorem{proposition}[theorem]{Proposition}

\theoremstyle{definition}
\newtheorem{definition}[theorem]{Definition}

\crefname{lem}{Lemma}{Lemmas}

\newcommand{\eps}{\varepsilon}

\renewcommand{\Re}{\mathrm{Re}}

\DeclareMathOperator*{\Ex}{\mathbb{E}}

\newclass{\QCSZK}{QCSZK}

\usepackage{authblk}

\title{Quantum state isomorphism problems for groups}

\author[1]{Alexandru Gheorghiu}
\author[2]{Dale Jacobs}
\author[2]{Saeed Mehraban}
\author[3]{Arsalan Motamedi}

\affil[1]{\small IBM Research}
\affil[2]{Tufts University}
\affil[3]{University of Waterloo}

\date{}

\begin{document}
\maketitle

\sloppy

\begin{abstract}

We study the computational complexity of quantum state isomorphism problems under group actions: given two quantum circuits that prepare pure or mixed states, decide whether the two states are related by a group action. This can be seen as a quantum state version of the \emph{Hidden Shift Problem}, in much the same way that the \emph{State Hidden Subgroup Problem}, introduced in~\cite{bouland2025state}, is a quantum version of the ordinary Hidden Subgroup Problem. 

We prove several results for this computational problem:
\begin{itemize}
    \item For the pure-state version, we show that the problem is $\BQP$-hard for all nontrivial groups, and contained in $\QCMA \cap \QCSZK$. When the quantum state is given as a polynomial stabilizer-rank state, we obtain an upper bound of $\mathsf{NP} \cap \SZK$. We further obtain refined results for specific groups of interest: for abelian groups we show that the problem reduces to the state hidden subgroup problem over the generalized dihedral group; for the Clifford group, the problem is at least as hard as \emph{Graph Isomorphism} under polynomial-time reductions; for the Pauli group it is $\BQP$-complete.

    \item For the mixed-state version, for nontrivial, finite and efficiently representable groups, the problem is $\QSZK$-complete.

    \item We also study a variant of this problem over an infinite group, in particular, the bosonic linear optical unitaries. We show that in the setting where the classical description of the quantum state is given in a suitable wave function representation known as the stellar representation, the problem is at least as hard as Graph Isomorphism, and is contained in $\NP \cap \SZK$.
\end{itemize}
Prior to our work, state isomorphism problems had only been studied for the symmetric group~\cite{lockhart2017quantum}. As a consequence of our results, we resolve an open question posed in \cite{hinsche2025abelian} about the existence of a quantum algorithm for the abelian state hidden subgroup problem on mixed states. We show that this problem is $\QSZK$-hard in the worst case, thereby ruling out an efficient quantum algorithm unless $\QSZK = \BQP$.

\end{abstract}

\newpage

\tableofcontents

\newpage

\section{Introduction}

Isomorphism problems, such as group and graph isomorphism, are central to classifying mathematical structures up to symmetry. In classical complexity theory, they form an important class of $\NP$-intermediate problems which have significant applications such as pattern/shape matching, network comparison, or symmetry classification, and admit statistical zero-knowledge proofs but seem to have no classical or quantum polynomial-time algorithms. 

Recently, \cite{lockhart2017quantum} initiated the study of a quantum state isomorphism problem for the symmetric group: given two quantum states, decide whether they are equivalent up to a permutation of the physical locations of the qubits. They showed that detecting isomorphism under qubit permutations is at least as hard as Graph Isomorphism, and lies within the class of problems admitting a quantum statistical zero-knowledge protocol ($\QSZK$), for pure states. Several questions, however, remained open, including whether the quantum communication in the $\QSZK$ protocol can be made classical, and establishing upper bounds for the mixed-state version of the problem.

In this work, we generalize this line of research to state isomorphism problems over \emph{arbitrary groups}. We show that mixed-state isomorphism over finite groups is $\QSZK$-complete, thereby establishing it as a natural complete problem for this class alongside canonical problems such as quantum state distinguishability~\cite{watrous2002limits}. We further eliminate the aforementioned need for quantum communication in the $\QSZK$ protocol for pure states from~\cite{lockhart2017quantum}, thereby showing that the pure-state version of the problem is contained in $\QCSZK$ for all nontrivial, efficiently representable groups. For the same types of groups, we also show that the problem is $\BQP$-hard. 
We further investigate specific group families: for the Clifford group, we prove hardness via a reduction from the Graph Isomorphism problem, while for the Pauli group, we show that the problem is $\BQP$-complete, thereby capturing the full power of polynomial-time quantum computation. 

Our results also establish a connection between state isomorphism problems and the recent $\textsc{StateHSP}$ framework introduced in \cite{bouland2025state}, which studies the task of identifying a hidden subgroup that stabilizes a given quantum state. As a consequence, we address an open question of \cite{hinsche2025abelian} by showing that the $\textsc{StateHSP}$ framework is unlikely to extend to mixed states. Specifically, we prove that the mixed $\textsc{StateHSP}$ problem is $\QSZK$-hard, thereby ruling out efficient quantum algorithms for the mixed-state version of the problem, assuming $\QSZK \neq \BQP$.
Furthermore, we observe that the state isomorphism problem for groups is in fact the quantum analog of the \emph{Hidden Shift Problem}~\cite{friedl2014hidden, van2006quantum,childs2005quantum}. In other words, just as \textsc{StateHSP} is a quantum state version of HSP, in which the task is to find a hidden subgroup that stabilizes a quantum state, we find that state isomorphism is a quantum state version of the Hidden Shift Problem~\cite{friedl2014hidden}, in which one wishes to find a ``shift,'' given by the action of a group element, that maps one quantum state to another. Using this observation, we show that the pure state isomorphism problem for abelian groups reduces to \textsc{StateHSP} for the generalized dihedral group.

Finally, we initiate the study of a bosonic variant of the problem, considering the isomorphism of infinite-dimensional quantum states under Gaussian transformations, corresponding to an infinite non-compact group (see \cite{weedbrook2012gaussian,ferraro2005gaussian} for a review of Gaussian quantum information). We show that for succinctly represented families of bosonic states, this problem is at least as hard as Graph Isomorphism. This complements the results of \cite{migdal2014multiphoton, parellada2023no}. In addition, we obtain an $\mathsf{NP}$ and an $\mathsf{SZK}$ upper bound.

\subsection{Model and summary of the results}

We define various state group isomorphism problems as variations of the following problem:

\begin{definition}[Pure State Group Isomorphism Problem; {$(\alpha,\beta)$-\textsc{PSGI}$[G]$}]
    For a finite group $G$ with efficient unitary representation $R : G \to U(2^n),$ the input consists of two quantum circuits $C_1$ and $C_2$ specifying states $\ket{\psi_1} = C_1 \ket{0^n}$ and $\ket{\psi_2} = C_2 \ket{0^n}.$ The goal is to decide which of the following statements is true (promised that one of them is true):
\begin{itemize}
    \item There exists $g \in G$ such that $\Re(\bra{\psi_1} R(g)\ket{\psi_2}) \geq \beta$,
    \item For all $g \in G$, $|\bra{\psi_1} R(g)\ket{\psi_2}| \leq \alpha$.
\end{itemize}
\label{def:PSGI}
\end{definition}
One might wonder why we chose the particular completeness condition that $\Re(\bra{\psi_1} R(g)\ket{\psi_2})  \geq \beta$, which may seem at first unnatural compared to either (1) $\ket{\psi_1} = R(g)\ket{\psi_2}$, or (2) $\abs{\bra{\psi_1} R(g)\ket{\psi_2}} \geq \beta$. We find the former promise (1) too strong, as it does not allow for any error and consequently breaks the $\BQP$-hardness proof of \cref{thm:bqp-hardness-psgi}. In the latter case (2), the definition ignores global phases, which breaks the reduction from PSGI to \textsc{StateHSP} given in \cref{thm:abelianreduction}. Our choice can be thought of as a generalization of equality which permits some error but does not allow arbitrary global phases. We remark that our lower bounds on PSGI also apply for (2) and our upper bounds on PSGI also apply for (1).

Essentially, the problem asks whether the output state of one circuit (when acting on $\ket{0^n}$) can be mapped to the output state of the other circuit with an element from the group $G,$ or whether the states are far from each other under the action of any group element. Alternatively, we can view the task as testing whether a quantum state is close to or far from the \emph{orbit} of another quantum state under the action of the group $G$. Note that in this definition, quantum states are succinctly represented using the circuits that prepare them. We also consider a variant of this definition in which states are specified by a polynomial rank stabilizer decomposition (see \cref{sec:low-stab-rank-isomorphism} for a concrete definition). It should also be noted that the group is not provided as input but is part of the problem description. In other words, the problem is defined relative to a specific group. Throughout the paper, we will consider specific versions of this problem for groups of interest, such as the Pauli and Clifford groups, respectively.

As previously mentioned, this problem can be seen as a quantum state analog of the Hidden Shift Problem~\cite{friedl2014hidden, van2006quantum, childs2005quantum}. In the decision version of that problem, one is given access to two functions $f : \mathbb{Z}_N \to S$ and $g : \mathbb{Z}_N \to S$ and promised that either there exists a shift $s \in \mathbb{Z}_N$ such that $f(x) = g(x+s)$ for all $x \in \mathbb{Z}_N,$ or the functions are far from being related by a hidden shift.\footnote{Far here means that the Hamming distance between the truth tables of $f(x)$ and $g(x+s)$ is large for all non-zero $s \in \mathbb{Z}_N.$}
We can see how the state isomorphism problem is a natural extension of this. In particular, there is a simple reduction from the Hidden Shift Problem to \textsc{PSGI}$[\mathbb{Z}_N].$ Given $f$ and $g,$ we construct the states
\[
\ket{\psi_f} = \frac{1}{\sqrt{N}} \sum_{x \in \mathbb{Z}_N} \ket{x} \ket{f(x)} \quad\quad \ket{\psi_g} = \frac{1}{\sqrt{N}} \sum_{x \in \mathbb{Z}_N} \ket{x} \ket{g(x)}. 
\]
We observe that if $f$ and $g$ are related by the hidden shift, $s,$ then
\[
\ket{\psi_f} = \frac{1}{\sqrt{N}} \sum_{x \in \mathbb{Z}_N} \ket{x} \ket{g(x+s)},
\]
and so $R(s) \ket{\psi_f} = \ket{\psi_g},$ where $R(s)$ is the unitary representation of the cyclic shift by $s$ on the first register, i.e. $R(s) \ket{x} \ket{y} = \ket{x + s}\ket{y}.$\footnote{All additions are assumed to be modulo $N$.}
Conversely, if $f$ and $g$ are far from being related by a hidden shift, the corresponding states $\ket{\psi_f}$ and $\ket{\psi_g}$ will be far from each other under the action of any element in $\mathbb{Z}_N.$
\textsc{PSGI} is, of course, a more general problem, as it allows for arbitrary (efficiently preparable) states and arbitrary (efficiently representable) group actions.

We also note that \textsc{PSGI} is trivially contained in $\mathsf{QCMA},$ since in the case where the states are related by a group element, $g,$ the witness is simply a classical description of $g$. A quantum verifier could then test the overlap condition by preparing the associated states, applying $R(g)$ to one of them, and then performing a SWAP test.

Our first result is that 
\begin{theorem} [The Complexity of PSGI] The following upper and lower bounds hold for \textsc{PSGI}$[G]$:

     \begin{itemize}
         \item \textbf{Upper bounds:} Let $G$ be any efficiently represented finite group. Then $(\alpha,\beta)$-\textsc{PSGI}$[G]$ is contained in $\QCSZK$ for some constant $\alpha$ and $\beta = 1 - \dfrac{1}{\poly(n)}$. Moreover, for abelian $G$, $(\alpha,\beta)$-\textsc{PSGI}$[G]$ reduces to an approximate version of \textsc{StateHSP}$[G \ltimes \mathbb{Z}_2]$ (see \cref{subsec:psgi-abelian-hsp}).

         \item \textbf{Lower bounds:} For any efficiently represented nontrivial finite group $G$, $(\alpha,\beta)$-\textsc{PSGI}$[G]$ is \textsf{BQP}-hard, even when $\alpha = o(1)$ and $\beta = 1 - \exp(-n)$. When $G$ is chosen to be the $n$-qubit Clifford group, $\mathcal{C}_n$, $(\alpha,\beta)$-\textsc{PSGI}$[\mathcal{C}_n]$ is at least as hard as graph isomorphism ($\textsc{GI}$) for constant $\alpha$ and  $\beta$. The $\textsc{GI}$-hardness remains even if the quantum states are succinctly represented as linear combinations of stabilizer states (see \cref{sec:low-stab-rank-isomorphism} for more details).
     \end{itemize} 
     \label{thm:PSIG}
\end{theorem}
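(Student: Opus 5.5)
The theorem has four parts, and I would prove each separately.

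\textbf{$\QCSZK$ upper bound.} I would reduce $(\alpha,\beta)$-\textsc{PSGI}$[G]$ to a classical-communication zero-knowledge protocol modeled on the graph-nonisomorphism/isomorphism protocols. Two routes are available. The cleaner one is to show directly that the problem lies in the complement of a $\QCSZK$-complete problem, and then use closure of $\QCSZK$ under complement. The more hands-on route is a protocol: the prover sends a classical $g$, and the verifier checks the overlap by preparing the two states, applying $R(g)$ to one of them, and running a SWAP test. This alone is $\QCMA$-style and not zero-knowledge, so I would instead use a randomized-orbit protocol. The verifier picks a random $h \in G$ and $b\in\{1,2\}$ and prepares $R(h)\ket{\psi_b}$. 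The issue is that quantum states cannot be sent over a classical channel. I would get around this by having the verifier measure $R(h)\ket{\psi_b}$ in a random basis, or by using classical shadows. In the no case the orbit states of $\ket{\psi_1}$ and $\ket{\psi_2}$ are nearly orthogonal. A tidier alternative is to reduce to \textsc{StateHSP}-type symmetry testing of $\ket{\psi_1}\ket{0}+\ket{\psi_2}\ket{1}$, which the verifier can run itself.

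I expect the main obstacle to be making the communication classical while keeping simulatability. Concretely, the prover's reply must reveal only $b$, and the simulator must produce the verifier's view, which is just a random $h$, the classical measurement record, and $b$, using only efficient sampling. The parameter choices $\beta=1-1/\poly$ and constant $\alpha$ are there so that the soundness and completeness gaps survive amplification.

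\textbf{Abelian reduction.} I would define the group $H=G\ltimes\mathbb{Z}_2$, in which $\mathbb{Z}_2$ acts by inversion, and the state $\ket{\Phi}=\tfrac{1}{\sqrt2}(\ket{0}\ket{\psi_1}+\ket{1}\ket{\psi_2})$. The action is $g\mapsto I\otimes R(g)$, and the $\mathbb{Z}_2$ generator acts as $X\otimes \cdot$ twisted by $R$. In the yes case, $\ket{\psi_1}\approx R(g)\ket{\psi_2}$ with the phase fixed by the real-part condition. Then the reflection element $(g,1)$ approximately stabilizes $\ket{\Phi}$, so the hidden subgroup is nontrivial and contains a reflection. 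In the no case, the orbits are far apart, so no reflection stabilizes $\ket{\Phi}$. This is where the $\Re$ condition is essential.

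\textbf{$\BQP$-hardness.} Given a $\BQP$ circuit $U$ with output qubit, I would pick $g\neq e$ with $R(g)\neq I$ and a state $\ket{\phi}$ with $R(g)\ket{\phi}$ far from $\ket{\phi}$; such a state can be found efficiently from an eigenvector decomposition on a small register. After amplifying the acceptance probability to $1-\exp(-n)$, I would set $\ket{\psi_1}=\ket{\psi_2}$ to a state that depends coherently on the decision bit. For an accepting computation, the state is $\ket{\phi'}$ fixed by the whole group; for a rejecting computation, it is far from its entire orbit relative to the other input. One implementation is to use the decision bit to choose between feeding the group-invariant averaged state and a non-invariant one.

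\textbf{Clifford GI-hardness.} I would map a graph $\Gamma$ to its graph state $\ket{\Gamma}$ composed with a fixed non-stabilizer gadget, such as $T$-magic states attached to the vertices. The gadget kills the local-Clifford equivalence classes, so Clifford equivalence coincides with permutation, that is, graph isomorphism. This builds on the \cite{lockhart2017quantum} symmetric-group GI-hardness. The resulting states have low stabilizer rank, which gives the succinct-representation claim.
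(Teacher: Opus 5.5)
Your $\QCSZK$ sketch follows the paper's route, but both lower-bound constructions fail as written, and the abelian reduction uses a wrong representation.

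\textbf{Lower bounds.} For $\BQP$-hardness, the identity is always in $G$, so setting $\ket{\psi_1}=\ket{\psi_2}$ gives a YES instance whatever the computation does. A $G$-invariant (or common-eigenvector) reference state need not exist: $\mathcal P_n$ and $\mathcal C_n$ act irreducibly on $(\mathbb C^2)^{\otimes n}$. A state moved by a single $g\neq e$ says nothing about the rest of the orbit. The missing ingredient is an efficiently preparable $\ket\phi$ that is far from the \emph{entire} orbit of a reference state, $\max_{g\in G}|\bra{\phi}R(g)\ket{0^n}|=o(1)$. Then take $\ket{\psi_1}=Q'\ket{0^n}$, which is $\approx\ket\phi$ on accept and $\approx\ket{0^n}$ on reject, and $\ket{\psi_2}=\ket\phi$. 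Since $|G|$ can be exponential, this needs a union bound over all of $G$. The paper gets it by drawing $\ket\phi$ from an approximate state $t$-design with $t=\omega(\log|G|)$ plus a moment bound (\cref{claim:maxfidelity}), which is why the reduction is randomized.

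For the Clifford reduction, any encoding $U_\Gamma\ket\chi$ with a graph-dependent Clifford $U_\Gamma$ and a fixed $\ket\chi$ is defeated, since $U_{\Gamma_1}U_{\Gamma_2}^\dagger$ is Clifford. This covers $\ket\Gamma\otimes\ket{T^n}$, because all stabilizer states are Clifford-equivalent. It also covers $T^{\otimes n}\ket\Gamma$, because $T$ commutes with $CZ$. Moreover, $\ket T$ is a Clifford eigenstate ($SX\ket T=e^{i\pi/4}\ket T$), so $\ket{T^n}$ cannot force a Clifford to be a permutation.

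The paper instead places the graph in one branch of a superposition, $\tfrac1{\sqrt2}(\ket R\ket{R^n}+\ket{R_-}\ket{G_i})$ with $\ket R=\tfrac1{\sqrt2}(\ket0+e^{i\pi/8}\ket1)$. A single Clifford must then both nearly fix $\ket{R^n}$ and map $\ket{G_2}$ to $\ket{G_1}$. A rigidity lemma (\cref{lem:perm}), proved by tracking $\bra{R^n}C^\dagger X_iC\ket{R^n}$ and $\bra{R^n}C^\dagger Y_iC\ket{R^n}$, shows such a Clifford is a qubit permutation. Your low-stabilizer-rank claim is also unjustified: $\ket{T^n}$ and $\ket{R^n}$ are not known to have polynomial stabilizer rank. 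The paper switches to $a\ket M+b_i\ket{G_i}$ with $\ket M\propto\sum_{i<j}\ket{R_iR_j0\cdots0}$, which has rank $O(n^2)$, and reproves rigidity for it (\cref{lem:perm2}).

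\textbf{Upper bounds.} Your abelian reduction lets $g$ act as $I\otimes R(g)$ on both branches. Together with $X\otimes I$ this is not a representation of $G\rtimes\mathbb Z_2$, since conjugation by the flip must send $g$ to $-g$. No fixed twist $X\otimes W$ repairs this in general, because $R$ need not be equivalent to $g\mapsto R(-g)$. Also, $\bra\Phi X\otimes R(h)\ket\Phi=\tfrac12(\bra{\psi_1}R(h)\ket{\psi_2}+\bra{\psi_2}R(h)\ket{\psi_1})$, which is not $\Re\bra{\psi_1}R(h)\ket{\psi_2}$ unless $R(h)$ is Hermitian. The twist must be branch-dependent: $R'(g,0)=\ketbra{0}{0}\otimes R(g)+\ketbra{1}{1}\otimes R(-g)$. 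This gives exactly $\bra\Phi R'(h,1)\ket\Phi=\Re\bra{\psi_1}R(h)\ket{\psi_2}$, and taking $m$ copies gives the approximate version.

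Your classical-shadow protocol for $\QCSZK$ matches the paper's, but its quantitative core is missing. By \cref{fact:shadows}, $N=\Theta(\log|G|)$ shadows let the unbounded prover estimate all $|G|$ overlaps $|\bra{\psi_1}R(a)R(g)\ket{\psi_j}|$ to constant accuracy. In the isomorphic case, the $N$-copy twirled states have fidelity at least $\beta^N$, by twirl invariance and monotonicity, so one needs $N(1-\beta)=o(1)$. This tension, not amplification, is what forces $\beta=1-1/\omega(\log|G|)$. Your ``tidier alternative'' of the verifier running the symmetry test itself is not available for general $G$, since it would place the problem in $\BQP$.
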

\noindent 
The $\QCSZK$ upper bound statement is restated and proved in \cref{subsec:qcszk}, while the reduction to \textsc{StateHSP} for abelian groups is shown in \cref{subsec:psgi-abelian-hsp}. For restatements and proofs of the lower bounds, see \cref{sec:PSGI-BQP-hard} \cref{subsec:cip-gi-hardness} and \cref{sec:low-stab-rank-isomorphism}.

Previously, lower and upper bounds for \textsc{PSGI} had only been considered for the symmetric group. Specifically, it was shown in~\cite{lockhart2017quantum} that \textsc{PSGI}$[S_n]$ is $\textsc{GI}$-hard and contained in $\QSZK \cap \mathsf{QCMA}$.

Next, we fully classify the complexity of \textsc{PSGI}$[G]$ for the Pauli group $G = \mathcal{P}_n$ by proving that it is equivalent to $\BQP$. 

\begin{theorem} [The complexity of pure state Pauli isomorphism]
\label{thm:main-pauli-bqp-completenes}
\textsc{PSGI}$[\mathcal{P}_n]$ is $\BQP$-complete. 
\end{theorem}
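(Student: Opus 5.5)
Hardness follows from the $\BQP$-hardness part of \cref{thm:PSIG}, since the Pauli group $\mathcal{P}_n$ is a nontrivial, efficiently represented finite group. So the work is to show $\textsc{PSGI}[\mathcal{P}_n] \in \BQP$.

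The plan is to reduce the search over the exponentially many Paulis $P$ to sampling from the distribution of $\ket{\psi_1}\otimes\ket{\psi_2}$ under Bell sampling. The key identity is
\[
\sum_{P \in \{I,X,Y,Z\}^{\otimes n}} \frac{1}{2^n}\abs{\bra{\psi_1}P\ket{\psi_2}}^2 \cdot 2^{-n}\cdot 2^n ,
\]
or more precisely: measuring $\ket{\psi_1}\otimes\overline{\ket{\psi_2}}$ in the Bell basis returns the outcome $P$ with probability $2^{-n}\abs{\bra{\psi_1}P\ket{\psi_2}}^2$. The complex conjugate state is obtained by conjugating every gate of $C_2$ (replacing each gate by its entrywise complex conjugate), which is efficient. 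These probabilities sum to $1$ because the Paulis form an orthogonal basis. Hence the random Pauli $P$ is drawn with weight proportional to $\abs{\bra{\psi_1}P\ket{\psi_2}}^2$.

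The algorithm draws $\poly(n)$ Bell samples $P_1,\dots,P_k$. For each $P_j$ and each phase $\omega \in \{\pm1,\pm i\}$, it estimates $\Re(\omega\bra{\psi_1}P_j\ket{\psi_2})$ with a Hadamard test on the controlled circuits. It accepts if some estimate exceeds $(\alpha+\beta)/2$. We treat group elements as the phased Paulis, so the global phase is part of $g$.

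\textbf{Soundness.} Every candidate has overlap at most $\alpha$, so with good estimates the algorithm rejects.

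\textbf{Completeness.} There is a $g = \omega P^*$ with $\abs{\bra{\psi_1}P^*\ket{\psi_2}} \ge \beta$. The main obstacle is showing that Bell sampling hits $P^*$ with non-negligible probability, because the outcome probability $2^{-n}\beta^2$ is exponentially small. The fix is to use the right distribution. For $\rho_i = \ketbra{\psi_i}$, expand $\ket{\psi_2}$ in the Pauli basis, i.e. think of the operator $\ket{\psi_2}\bra{\psi_1} = 2^{-n}\sum_P \bra{\psi_1}P\ket{\psi_2}\,P$. Its Hilbert--Schmidt normalization gives
\[
\sum_P 2^{-n}\abs{\bra{\psi_1}P\ket{\psi_2}}^2 = 1,
\]
so the Bell-sampling weights are exactly $q(P) = 2^{-n}\abs{\bra{\psi_1}P\ket{\psi_2}}^2$. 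This alone does not concentrate on $P^*$.

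I would therefore instead argue structurally. If $\abs{\bra{\psi_1}P^*\ket{\psi_2}} \ge \beta$, then $\ket{\psi_2} \approx P^{*\dagger}\ket{\psi_1}$ up to phase. Measuring $\ket{\psi_1}\otimes\overline{\ket{\psi_2}}$ in the Bell basis then yields $P^* Q$, where $Q$ is distributed as the Bell sample of $\ket{\psi_1}\otimes\overline{\ket{\psi_1}}$. Likewise, Bell sampling $\ket{\psi_1}\otimes\overline{\ket{\psi_1}}$ gives samples $Q'$ from the same distribution, up to trace distance $O(\sqrt{1-\beta^2})$. The algorithm should therefore output candidates $P_jQ'_j$ from independent pairs of samples.

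The collision probability of the two sample distributions is $\sum_P q(P)\,q_1(P)$, where $q_1$ is the Bell distribution of $\ket{\psi_1}\otimes\overline{\ket{\psi_1}}$. This quantity can itself be estimated. A cleaner route avoids candidates entirely: accept iff $\sum_P q(P)^2 \cdot 2^n$, which equals $2^{-n}\sum_P\abs{\bra{\psi_1}P\ket{\psi_2}}^4$, is close to the corresponding quantity for $\psi_1$. This is estimable via a swap test on two copies of the Bell-sampled register.

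The step I expect to be hardest is making this last equivalence rigorous in the soundness direction. That means showing that far-from-all-Paulis implies a detectable gap between the estimated quantities. For that direction I would use the Pauli-twirl identity
\[
\Ex_P\, P^{\otimes 2}(\cdot)P^{\dagger\otimes 2}
\]
on the state $\ket{\psi_1}\bra{\psi_1}\otimes\ket{\psi_2}\bra{\psi_2}$.
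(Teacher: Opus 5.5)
\textbf{Hardness.} Your hardness direction is fine. The general reduction is randomized, but for $\mathcal P_n$ it can be made deterministic by taking $\ket{\phi}=\ket{T^n}$.

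\textbf{Containment has a genuine gap.} None of the procedures you sketch decides the problem.

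\emph{Pairing.} Bell sampling $\ket{\psi_1}\otimes\overline{\ket{\psi_2}}$ puts only $2^{-n}\beta^2$ mass on $P^*$, and pairing does not fix this. Your candidate is $P_jQ'_j=Q_jP^*Q'_j$ (up to phase), with $Q_j,Q'_j$ independent draws from $q_1$. This is proportional to $P^*$ only on a collision $Q_j\propto Q'_j$, which has probability $\sum_Q q_1(Q)^2=4^{-n}\sum_Q|\bra{\psi_1}Q\ket{\psi_1}|^4\le 2^{-n}$.

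\emph{Invariant test.} Agreement of $\Lambda(\psi_1,\psi_2):=2^{-n}\sum_P|\bra{\psi_1}P\ket{\psi_2}|^4$ with $\Lambda(\psi_1,\psi_1)$ is necessary for Pauli isomorphism but not sufficient. So the soundness step you flagged is false, not merely hard. For single-qubit states with Bloch vectors $\vec a,\vec b$ one computes $\Lambda=\frac12(1+a_x^2b_x^2+a_y^2b_y^2+a_z^2b_z^2)$. This is invariant under flipping the sign of \emph{one} component of $\vec b$, whereas Pauli conjugation flips two.

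\emph{Counterexample.} Let $\ket{F}$ have Bloch vector $(1,1,1)/\sqrt3$, and set $\ket{\psi_1}=\ket{F}^{\otimes n}$ and $\ket{\psi_2}=\overline{\ket{F}}^{\otimes n}$ (conjugation flips $a_y$). Then $\Lambda(\psi_1,\psi_2)=\Lambda(\psi_1,\psi_1)=\Lambda(\psi_2,\psi_2)=(2/3)^n$, yet $\max_{P\in\mathcal P_n}|\bra{\psi_1}P\ket{\psi_2}|=(2/3)^{n/2}$. Your test therefore accepts a NO instance for every constant $\alpha$.

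\emph{Estimation.} Separately, the swap test on the Bell-sampled register estimates $\sum_P q(P)^2=2^{-n}\Lambda\le 2^{-n}$ only to additive $1/\mathrm{poly}(n)$ accuracy. It could not resolve $\Lambda$ even where $\Lambda$ does differ.

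\textbf{The missing idea.} Fourier-sample a state that has the hidden Pauli as a \emph{symmetry}, so that every sample is a linear constraint satisfied by $P^*$, instead of trying to hit $P^*$ or compare invariants.

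\emph{The paper's construction.} The paper passes to the two-copy group $\{P\otimes P\}$, which is abelian. It uses $\ket{\Psi}=\frac{1}{\sqrt2}(\ket{0}\ket{\psi_1}^{\otimes 2}+\ket{1}\ket{\psi_2}^{\otimes 2})$, which in the YES case is approximately stabilized by $X\otimes P^*\otimes P^*$. This is the odd element $k=(P^*\otimes P^*,1)$ of $\Gamma=\mathcal P_n^2\ltimes\mathbb Z_2\cong\mathbb Z_2^{2n+2}$. For $m=1$, Fourier sampling over $\Gamma$ is just an $X$-basis measurement of the control plus a Bell measurement across the two copies. So Bell sampling was the right instinct, applied to the wrong state.

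\emph{Sampling guarantees.} Fourier sampling returns characters with $\Pr[\chi(k)=1]=(1+\bra{\Phi}\rho_m(k)\ket{\Phi})/2$. Take $\ket{\Phi}=\ket{\Psi}^{\otimes m}$ with $m=\lceil\log^2 n\rceil$. In the YES case this probability is at least $1-2m\varepsilon$ for the true $k$. In the NO case every odd element has overlap at most $\alpha^{2m}=\mathrm{negl}(n)$.

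\emph{Decoding.} After $O(n)$ samples, Gaussian elimination over $\mathbb F_2$ yields a subspace that, with high probability, contains an odd element in the YES case and none in the NO case. In the YES case every surviving odd element has overlap greater than $2/3$. A final Hadamard test on the recovered element then decides. This gives containment for $1-\beta=o(1/(n\log^2 n))$.
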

\noindent This result is  proved in \cref{subsec:pauli-bqp-completeness}.

From the perspective of viewing \textsc{PSGI} as a generalization of the Hidden Shift Problem, this is perhaps at first surprising, as there's no known version of that problem that's $\BQP$-complete. In fact, there is also no known version of the Hidden Subgroup Problem that is $\BQP$-complete. On the other hand, it is perhaps less surprising, given that the input to the problem consists of arbitrary quantum circuits that can encode $\BQP$-hard instances. Indeed, we note that the problem is $\BQP$-hard for all non-trivial efficiently representable groups. However, the $\BQP$ upper bound in this case follows directly from the properties of the Pauli group and does not hold, as far as we can tell, for other groups of interest, such as Clifford, permutation, or even cyclic groups.
The fact that the states are pure is also an essential feature that influences the problem's complexity.

Next, we are naturally motivated to consider a version of the state isomorphism problem for mixed states. Let $\mathcal{D} \left( \mathbb{C}^{2^n} \right)$ be the space of $n$-qubit density matrices. In a slight notational abuse, we will denote the \emph{square root fidelity} between two mixed states $\rho$ and $\sigma$ as $F (\rho, \sigma) = \Tr (|\sqrt{\rho} \sqrt{\sigma}|)$. We use the square-root fidelity because it is more convenient for our proofs. Our results also hold for standard fidelity with the appropriate conversions.

\begin{definition}[$(\alpha,\beta)$-Mixed State Group Isomorphism Problem; {$(\alpha,\beta)$-\textsc{MSGI}$[G]$}]
    Let $G$ be a finite group with an efficient unitary representation $R : G \rightarrow U(2^n)$, and let $0 \leq \alpha < \beta \leq 1$.
    Given the description of two quantum circuits $C_1$ and $C_2$ outputting mixed states $\rho_1$ and $\rho_2 \in \mathcal{D} \left( \mathbb{C}^{2^n} \right)$, respectively, the problem is to decide which of the following statements holds (promised that one of them does):
\begin{itemize}
    \item There exists $g \in G$ such that $F(\rho_1, R(g) \rho_2 R(g)^\dagger) \geq \beta$.
    \item For all $g \in G$, it is the case that $F(\rho_1, R(g) \rho_2 R(g)^\dagger) \leq \alpha$
\end{itemize}
\end{definition}

\noindent We show 
\begin{theorem} [Informal; The complexity of mixed state group isomorphism] 
\label{thm:mixed-qszk-complete}
Let $G$ be any nontrivial efficiently representable finite group, and let $R$ be a unitary representation of $G$ such that $\max_{g \neq I}\dfrac{|\Tr(R(g))|}{2^n} = \mu < 1$. Then $(\alpha,\beta)-$MSGI$[G]$ is \textsf{QSZK}-complete under randomized reductions for any $\alpha > \dfrac{1}{2} + \dfrac{\mu}{2}$ and $\beta > \alpha.$
\end{theorem}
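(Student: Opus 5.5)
We split the proof into membership of MSGI$[G]$ in $\QSZK$ and $\QSZK$-hardness. Both directions go through the $\QSZK$-complete problem Quantum State Distinguishability (QSD) of~\cite{watrous2002limits}, together with its complement Quantum State Closeness. These problems, and their polarization, are stated in terms of trace distance, so we move between square-root fidelity and trace distance using Fuchs--van de Graaf.

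\textbf{Membership.} We reduce MSGI to Quantum State Closeness. Since $\QSZK$ is closed under complement, it suffices that the reduction target lies in $\QSZK$. Take a uniformly random $g \in G$ in an ancilla register and consider the states
\[
\sigma_1 = \frac{1}{|G|}\sum_{g} \ket{g}\bra{g}\otimes \rho_1, \qquad \sigma_2 = \frac{1}{|G|}\sum_g \ket{g}\bra{g}\otimes R(g)\rho_2R(g)^\dagger .
\]
This does not work directly, because a yes-instance only guarantees closeness for a single $g$.

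The better route is the standard "union over $g$" trick. Consider
\[
\tau_1=\rho_1\otimes \tfrac{1}{|G|}\sum_g \ket{g}\bra{g}, \qquad \tau_2=\tfrac{1}{|G|}\sum_g R(g)\rho_2R(g)^\dagger\otimes\ket{g}\bra{g}.
\]
This still averages, so it fails in the same way. Instead we use the $\QSZK$ protocol directly, in the style of Graph Non-Isomorphism. Following~\cite{watrous2002limits}, it is easier to show that the complement is in $\QSZK$ via an honest-verifier protocol:
\begin{itemize}
\item The verifier picks $b\in\{1,2\}$ and $h\in G$ uniformly.
\item It sends $R(h)\rho_b R(h)^\dagger$.
\item The prover must guess $b$.
\end{itemize}
If the states are far from every orbit element, the symmetrized states $\bar\rho_b=\frac1{|G|}\sum_h R(h)\rho_bR(h)^\dagger$ should be distinguishable. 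Relating $F(\bar\rho_1,\bar\rho_2)$ to $\max_g F(\rho_1,R(g)\rho_2R(g)^\dagger)$ loses a factor $|G|$ in general, which is likely where the threshold conditions enter.

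The cleaner plan is therefore: by Uhlmann, the yes-condition says that $\ket{\psi_1}$ is close to $(R(g)\otimes U)\ket{\psi_2}$ for purifications $\ket{\psi_1},\ket{\psi_2}$ and some unitary $U$ on the purifying register. This is a $\QSZK$ statement, checkable by a prover who applies $U$. I would formalize it by reducing to QSD with polarization, accepting a $\poly$ blowup, using that $\QSZK$ has honest-verifier protocols for fidelity-type promises with exponentially small gaps after polarization.

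\textbf{Hardness.} Here the condition on $\mu$ is used. Given a QSD/closeness instance $(\rho_1,\rho_2)$, we output $\rho_1'=\rho_1\otimes\omega$ and $\rho_2'=\rho_2\otimes\omega$, where $\omega$ is chosen so that every nontrivial $R(g)$ moves it far away. The candidate is $\omega=\ket{\phi}\bra{\phi}$ for a Haar-random (design) state $\ket\phi$ on $n$ qubits, with $G$ acting only on this register. We also place a copy of the representation tensored with identity so that $G$ acts as $I\otimes R$. Then:
\begin{itemize}
\item For $g=I$, $F(\rho_1',\rho_2')=F(\rho_1,\rho_2)$.
\item For $g\neq I$, $F \le F(\rho_1,\rho_2)\cdot |\bra\phi R(g)\ket\phi|$. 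For a random $\ket\phi$ this concentrates near $|\Tr R(g)|/2^n\le\mu$, and a union bound over $|G|\le 2^{\poly}$ elements applies. This is the source of the randomized reduction.
\end{itemize}
A no-instance (polarized so that $F(\rho_1,\rho_2)$ is tiny) thus maps to a no-instance, and a yes-instance maps to a yes-instance via $g=I$. The thresholds $\alpha>\frac12+\frac\mu2$ suggest the authors mix $\omega$ with something, for instance using $\frac12(\ket0\bra0\otimes\ldots)$, which is where $\frac{1+\mu}{2}$ arises. I would pin the constants down at this point.

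\textbf{Main obstacle.} I expect the hardest step to be the membership direction: producing a genuine $\QSZK$ protocol with the right fidelity thresholds, as opposed to QCMA-style verification.
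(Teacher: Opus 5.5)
\textbf{Membership.} The Graph-Non-Isomorphism-style protocol is the right skeleton. You also correctly identify the obstruction: going from $\max_g F(\rho_1,R(g)\rho_2R(g)^\dagger)\le\alpha$ to a bound on the fidelity of the twirled states costs a factor $|G|$, so a single-copy protocol gives nothing. You never resolve this, and the Uhlmann fallback is not a \textsf{QSZK} argument. The yes-condition is an existential over exponentially many $g$. A prover who exhibits $g$ or ``applies $U$'' is giving a \textsf{QCMA}-style certificate, not a zero-knowledge proof, and polarization of QSD does nothing about the existential.

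The missing idea is to twirl $k=\omega(\log|G|)$ copies: reduce the complement to QSD on $\rho'=\frac{1}{|G|}\sum_g (R(g)\rho_1R(g)^\dagger)^{\otimes k}$ and the analogous $\sigma'$. Rotfel'd's subadditivity $\Tr\sqrt{A+B}\le\Tr\sqrt{A}+\Tr\sqrt{B}$, applied to each argument of $F$ in turn, gives a ``union bound for fidelities'': $F(\rho',\sigma')\le|G|\max_g F(\rho_1,R(g)\rho_2R(g)^\dagger)^k\le|G|\alpha^k=o(1)$. In the isomorphic case, invariance of the twirl under $R(h)$, together with monotonicity and multiplicativity of $F$, gives $F(\rho',\sigma')\ge\beta^k\ge 1-k(1-\beta)$. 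So the $|G|$ loss is paid for by the number of copies, not by the threshold on $\alpha$, which you misattribute to this step. What membership really needs is completeness $\beta\ge 1-1/\omega(\log|G|)$; the paper's formal upper bound assumes exactly this, together with constant $\alpha<1$.

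\textbf{Hardness.} Your reduction replaces the given representation $R$ by $I\otimes R$, so $G$ never touches the register holding $\rho_1,\rho_2$. That is MSGI for a different, padded representation, not for $R$, and it trivializes the instance. With $G$ idle on that register, the maximum over $g$ of the orbit fidelity equals $F(\rho_1,\rho_2)$ using only $|\bra\phi R(g)\ket\phi|\le 1$, so the design state and the hypothesis $\mu<1$ play no role at all. If $R$ must act on the register carrying the QSD states, a nonidentity $g$ may map $\rho_1$ exactly onto $\rho_2$ even when they are far apart. A tensor factor $\omega$ cannot prevent this unless $R$ factorizes across the registers, which a general $R$ (e.g.\ the Clifford group on all qubits) does not.

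The paper keeps the given $R$ on the same $n$ qubits and uses convex mixtures instead: $\sigma_0=\frac12\rho+\frac12\ketbra{\psi}{\psi}$ and $\sigma_1=\frac12\sigma+\frac12\ketbra{\psi}{\psi}$, with $\ket\psi$ drawn from an approximate $t$-design, $t=\omega(\max\{\log|G|,n\})$. Subadditivity splits $F(R(g)\sigma_0R(g)^\dagger,\sigma_1)$ into four terms:
\begin{itemize}
\item the $\rho$--$\sigma$ term is bounded only by $\frac12$;
\item the two cross terms are $o(1)$, by design moment bounds and a union bound over $G$;
\item for $g\neq I$, the $\psi$--$\psi$ term is at most $\frac12(\mu+o(1))$, by concentration of $\bra\psi R(g)\ket\psi$ around $\Tr R(g)/2^n$.
\end{itemize}
At $g=I$ the total is $\frac12+o(1)$ when $\rho,\sigma$ are far and $1-o(1)$ when they are close. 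This is exactly where the threshold $\alpha>\frac12+\frac\mu2$ comes from; you guessed at it but did not derive it.
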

\noindent For a proof of this statement, see \cref{sec:mixedstates}.
This result shows that for mixed states, and many choices of $\alpha$ and $\beta$, the group structure does not affect the problem's complexity---it remains $\QSZK$-complete for all nontrivial groups! 

Let us also comment briefly on the trace condition in this statement. The trace condition has a natural interpretation. For a unitary \(U\in U(d)\),
we have \(|\operatorname{Tr}(U)|\le d\), with equality if and only if
\(U=e^{i\theta}I\) is a global phase times the identity. Therefore,
\[
    \max_{g\neq I}\frac{|\operatorname{Tr}(R(g))|}{d}<1
\]
is equivalent to requiring that no nonidentity element of \(G\) is represented
by a global phase. Equivalently, the induced homomorphism
\[
    G\to PU(d):=U(d)/U(1)
\]
is faithful. This is the relevant notion for mixed states, since mixed states transform by
conjugation:
\[
    \rho\mapsto R(g)\rho R(g)^\dagger.
\]
If \(R(g)=e^{i\theta}I\), then this action is trivial on every density matrix.
Thus such a group element is invisible to the mixed-state isomorphism problem.
The trace condition rules out this degeneracy. Moreover, it is always possible to find representations that satisfy this condition. For instance, the left-regular representation has $\mu = 0,$ since every nonidentity group element is represented as a traceless permutation matrix.

\iffalse
We introduce the group isomorphism problem for quantum states, in particular for the Clifford/Gaussian group. 
\fi

\paragraph{Extending to infinite-dimensional settings:}
The previous results were concerned with finite-dimensional systems and finite groups. We also initiate the study of the problem for infinite-dimensional systems and groups, specifically \emph{continuous variable} bosonic systems with Gaussian unitaries as the group.

Bosonic quantum states live in an infinite-dimensional Hilbert space, corresponding to quantum states of $n$ harmonic oscillator modes. This space is spanned by the Fock basis $\{\ket{n_1,\ldots,n_m} : n_i \in \mathbb{N}\}$, where $n_i$ denotes the number of excitations (photons) in mode $i$. Due to the infinite number of degrees of freedom, it is useful to adopt a canonical representation that captures physically relevant states. One such representation is the \emph{stellar representation}~\cite{chabaud2020stellar, chabaud2022holomorphic}, in which a quantum state is described by a holomorphic function that is square-integrable with respect to a Gaussian measure. More specifically, the multimode stellar representation of a bosonic state $\ket{\psi} = \sum_{j_1, \ldots, j_n \geq 0} \psi_{j_1, \ldots, j_n} \ket{j_1, \ldots, j_n}$ is the holomorphic function $F_\psi : \mathbb{C}^m \rightarrow \mathbb{C}$ defined by:
\[
F_\psi(z_1,\ldots,z_n) := \sum_{j_1, \ldots, j_n} \frac{\psi_{j_1, \ldots, j_n}}{\sqrt{j_1! \ldots j_n!}}z_1^{j_1} \ldots z_n^{j_n}.
\] 
The function $F_\psi$ is entire and square-integrable with respect to the Gaussian measure
\[
\frac{1}{\pi^m}\int_{z \in  \mathbb{C}^m} |F_\psi(z)|^2 e^{-\|z\|^2} \, d^{2m} z = 1,
\]
which, via the Hadamard-Weierstrass factorization theorem, ensures that any such state can be written as
\[
F_\psi (z_1, \ldots, z_n) = P(z_1,\ldots,z_n)\, G(z_1,\ldots,z_n),
\]
where $P : \mathbb{C}^n \to \mathbb{C}$ is a polynomial and $G : \mathbb{C}^n \to \mathbb{C}$ is a Gaussian function. The degree of $P$ is called the \emph{stellar rank} of the state. When $G \equiv 1$, then the quantum state is called a core state.
A core bosonic state with rank $r$ has a finite $r$ particle cut-off (i.e., consists of at most $r$ physical particles at any branch of the superposition). The Hilbert space of core states with $r$ bosons over $n$ modes has dimension ${n+r-1\choose r} = n^{O(r)}$. For $r=O(1)$, these states are efficiently representable. 

Given the stellar representation of two quantum states, it is known that if they are within the same Gaussian orbit of each other, then they have the same stellar rank. We can furthermore ask ``given two quantum states with the same stellar rank, are they necessarily within a Gaussian orbit of each other?'' If we compute the core state for both states and we find them to be the same, we are certain that the states are within the Gaussian orbits of each other. If, however, they correspond to different-looking core states with the same degrees, then they may still be related to each other via a linear optical transformation. This motivates us to define the isomorphism problem for core states under linear optical transformations. 

We use $\mathcal{O}_n$ to denote the set of passive linear optical unitaries on $n$ bosonic modes, i.e., unitaries generated by Hamiltonians quadratic in the canonical operators that preserve total photon number. Equivalently, $U \in \mathcal{O}_n$ if its action on annihilation operators is linear:
\[
U^\dagger a_i U = \sum_{j=1}^n V_{ij} a_j, \qquad V \in \mathrm{U}(n).
\]
Here $a_i$ and $a_i^\dagger$ denote the annihilation and creation operators for mode $i$, which respectively remove and add a single excitation and satisfy the canonical commutation relations $[a_i,a_j^\dagger]=\delta_{ij}$. Equivalently, the action of a linear optical network is specified by a unitary element $V \in \mathrm{U} (n)$ which maps the stellar function $F (z_1, \ldots, z_n)$ to $F ((Vz)_1, \ldots, (Vz)_n)$ where $(V z)_j = \sum_{k = 1}^n V_{jk} z_k$.

This defines a homomorphism $R$ from $\mathrm{U}(n)$ to $\mathcal{O}_n$, and the induced action on Fock space gives a unitary representation of $\mathrm{U}(n)$ that is unique up to a global phase. In this sense, $\mathcal{O}_n$ is projectively isomorphic to $\mathrm{U}(n)$.

\begin{definition}
[$(\alpha,\beta; r)$-Pure State Linear Optical Isomorphism Problem]
Given two core states $\ket{c_1}, \ket{c_2}$ over $n$ modes with $r \geq 0$ photons, represented as their core state polynomials, and precision parameters $(\alpha,\beta)$, then $(\alpha,\beta; r)$-PSGI$^*$[$\mathcal O_n$] is the problem of deciding which one of the following two statements hold:
\begin{itemize}
    \item There exists a linear optical unitary $U\in\mathrm{U}(n)$ such that $\Re\left(\bra{c_2}R(U)\ket{c_1}\right) \geq \beta$.
    \item For all $U\in\mathrm U (n)$ we have $\abs{\bra{c_2}R(U)\ket{c_1}} \le \alpha$.
\end{itemize}
Here, the superscript $*$ is used to denote a succinct representation of quantum states using core state polynomials.
\end{definition}

 We are interested in understanding how the computational complexity of the above problem depends on $r$. \cite{migdal2014multiphoton} showed that the isomorphism problem with $r = 2$ photons is efficiently solvable. We show that the problem over three photons becomes at least as hard as Graph Isomorphism. We further show that the general problem is contained in $\SZK \cap \NP$ for certain parameter choices. 

\begin{theorem}[Informal; The complexity of linear optical isomorphism]
For any $\beta > \alpha \geq 1-\frac{1}{96n^5}$ we have that $(\alpha,\beta; 3)$-$\mathrm{PSGI}^*[\mathcal O_n]$ is hard for Graph Isomorphism, and for any $\beta-\alpha\ge1/2^{\mathsf{poly}(n)}$ the problem is contained in $\mathsf{NP}$. In addition, for any constant $\alpha<1$ and $\beta = 1-O(\frac{1}{n^{r}})$ the problem $(\alpha,\beta; r)$-$\mathrm{PSGI}^*[\mathcal O_n]$ is contained in $\mathsf{SZK}$.
\label{thm:bosonic-PSGI*}
\end{theorem}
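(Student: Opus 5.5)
The theorem has three parts: GI-hardness for $r=3$, membership in $\NP$, and membership in $\SZK$. I treat them separately, working throughout with core states as symmetric tensors. A core state with $r$ photons on $n$ modes is a homogeneous degree-$r$ polynomial $P(z)=\sum_{|j|=r} c_j z^j$. The Fock inner product becomes the Bargmann/Fischer inner product, $\langle P,Q\rangle=\sum_j \overline{p_j}q_j\, j!$ in monomial coefficients. A linear optical unitary $V\in\mathrm U(n)$ acts by $P\mapsto P(Vz)$, which preserves this inner product. The isomorphism question is therefore whether two degree-$r$ symmetric tensors $T_1,T_2\in \mathrm{Sym}^r(\mathbb C^n)$ are (approximately) in the same $\mathrm U(n)$-orbit.

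\textbf{GI-hardness for $r=3$.} Given a graph $H$ on $n$ vertices with edge set $E$, I map it to the cubic
\[
P_H(z)=\sum_{i} z_i^3+\lambda\sum_{\{i,j\}\in E} z_i z_j^2 + z_i^2 z_j ,
\]
or a similar "diagonal plus edge" cubic, with a small weight $\lambda$, and then normalize. Isomorphic graphs give permuted polynomials, and permutation matrices lie in $\mathrm U(n)$, so completeness is exact.

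For soundness I need to show that if $|\langle P_{H_1}, P_{H_2}(V\cdot)\rangle|\ge \alpha$ with $\alpha\ge 1-1/(96n^5)$, then $V$ is close to a monomial matrix (a permutation times phases), and hence the edge sets must coincide.
\begin{itemize}
\item The dominant term $\sum_i z_i^3$ is the key. Its stabilizer in $\mathrm U(n)$ is the set of permutations times cube-root-of-unity phases, since a unitary preserving $\sum z_i^3$ must permute coordinate lines.
\item I would prove a quantitative version: if $\|{\textstyle\sum}(Vz)_i^3-\sum z_i^3\|$ is small, then $V$ is $\poly(n)^{-1}$-close to a monomial matrix. The natural tool is an explicit overlap computation, using the fact that $\langle \sum_i z_i^3, \sum_k (Vz)_k^3\rangle \propto \sum_{i,k}|V_{ki}|^6$-type quantities. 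This is at most $n$, with near-equality forcing each row to have one dominant entry.
\item Rounding to the nearest permutation then leaves a discrepancy in the $\lambda$-weighted edge terms of order $\lambda^2/\poly(n)$ if the edge sets differ. Choosing $\lambda$ appropriately makes this exceed the $1/(96n^5)$ slack.
\end{itemize}
This quantitative rigidity step is the main obstacle: I must balance $\lambda$ so that the edge terms neither destroy the rigidity nor get swamped by it, and this balance is what produces the $n^5$ in the threshold.

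\textbf{Membership in $\NP$.} The witness is a description of $V$ to $\poly(n)$ bits of precision. Since $r$ is fixed and the polynomials have $n^{O(r)}$ coefficients, the verifier computes $P_2(Vz)$ exactly in rationals and evaluates $\Re\langle P_1, P_2(V\cdot)\rangle$. The map $V\mapsto \langle P_1,P_2(V\cdot)\rangle$ is a polynomial of degree $2r$ in the entries of $V$ and $\bar V$, so it is $\poly(n)$-Lipschitz. Truncating $V$ to $\poly(n)$ bits, re-unitarized via Gram–Schmidt, therefore perturbs the overlap by less than $(\beta-\alpha)/2$ whenever $\beta-\alpha\ge 2^{-\poly(n)}$. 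The verifier accepts if the overlap exceeds $(\alpha+\beta)/2$.

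\textbf{Membership in $\SZK$.} I follow the QCSZK upper bound of \cref{subsec:qcszk}, but now the states are classically described, so the whole protocol becomes classical. I reduce to Statistical Difference: consider the orbit distributions of $P_1$ and $P_2$ under Haar-random $V$, discretized. Concretely, I sample $V$ from a fine $\eps$-net/Haar approximation and output a canonical rounding of the coefficient vector of $P_b(Vz)$ up to global phase.
\begin{itemize}
\item In the yes case, $P_1$ is $O(\sqrt{1-\beta})$-close to $P_2(U\cdot)$. Haar invariance then makes the two orbit distributions close. The rounding error is controlled by randomizing the rounding grid shift, with cost proportional to distance over grid size. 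Choosing the grid size so that the $n^{O(r)}$-dimensional rounding contributes small error is what forces $\beta=1-O(n^{-r})$.
\item In the no case the orbits are separated by at least about $\sqrt{2(1-\alpha)}$, so the rounded distributions have disjoint supports whenever the grid is fine.
\end{itemize}
Membership of Statistical Difference in $\SZK$ then completes the proof.
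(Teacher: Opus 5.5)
Your $\NP$ argument is essentially the paper's. The GI-hardness part, however, has a genuine gap at exactly the step you flag. The gap comes from placing the edges in the same $3$-photon sector as the rigidifier $C=\sum_i z_i^3$. With $E_H=\sum_{\{i,j\}\in E}(z_iz_j^2+z_i^2z_j)$, the overlap $\langle C+\lambda E_{H_1},R(V)(C+\lambda E_{H_2})\rangle$ contains the cross terms $\lambda\langle C,R(V)E_{H_2}\rangle+\lambda\langle E_{H_1},R(V)C\rangle$. These vanish for monomial $V=PD$ but not for nearby $V$, and ``rounding to the nearest permutation'' gives you no control over them.

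The norm estimates you have do not suffice. Suppose $|\langle C,R(V)C\rangle|=6n(1-\delta)$. Then the component of $R(V)^{\dagger}C$ (resp.\ $R(V)C$) orthogonal to $\mathrm{span}\{z_k^3\}$ has norm at most $\sqrt{12n\delta}$. Since $\|E_H\|=2\sqrt{|E|}$, the cross terms are bounded only by $4\lambda\sqrt{12n|E|\delta}$. Maximizing $-6n\delta+4\lambda\sqrt{12n|E|\delta}$ over $\delta$ leaves room for a gain of $8\lambda^2|E|$ in the unnormalized overlap. By contrast, a single mismatched edge costs only $4\lambda^2$ at a monomial point. This ratio does not depend on $\lambda$, so no tuning of the weight closes the argument. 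Relatedly, the triangle inequality only places $V$ within $\|V-PD\|=O(\lambda\sqrt{|E|})$ of a monomial matrix, which is far too coarse to resolve an $O(\lambda^2)$ gap. Making a homogeneous encoding work would require a genuine second-order analysis around monomial matrices, which you do not supply. Also, the cubic overlap is $6\sum_{i,k}V_{ik}^3$ up to conjugation. This is a complex number whose modulus is at most $6\sum_i\max_k|V_{ik}|$, not a $\sum|V_{ki}|^6$-type quantity.

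The missing idea is to put the graph in a different photon-number sector. The paper uses the non-homogeneous, stellar-rank-$3$ polynomial $\frac{1}{\sqrt{12n}}\sum_i z_i^3+\frac{1}{\sqrt{2|E|}}\sum_{i<j}A_{ij}z_iz_j$. Because $R(U)$ preserves photon number, the overlap is exactly the average of the normalized cubic and quadratic overlaps, with no cross terms. Hence overlap $\ge1-\varepsilon$ forces both to be $\ge1-2\varepsilon$. The argument then runs as follows:
\begin{itemize}
\item The cubic bound gives $\|U-PD\|\le\sqrt{6n\varepsilon}$.
\item Replacing $U$ by $PD$ in $\frac{1}{2|E|}\mathrm{tr}(A_1U^tA_2U)$ costs $O(\sqrt{n\varepsilon})$.
\item The phases can only decrease the modulus: $|\mathrm{tr}(A_1DP^tA_2PD)|\le\mathrm{tr}(A_1P^tA_2P)$.
\item Integrality finishes once $\varepsilon=1/(96n^5)$: $\mathrm{tr}(A_1P^tA_2P)\le 2|E|-1$ unless $A_1=P^tA_2P$, and $2|E|<n^2$.
\end{itemize}
That calculation, not a balancing of weights, is where the $n^5$ comes from.

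Your $\SZK$ route (a randomly shifted grid instead of the paper's Gaussian noise) works qualitatively but loses in the parameters. Let $a=O(\sqrt{1-\beta})$ be the YES-case distance and $b$ the NO-case orbit separation. A randomly shifted cubic grid of side $s$ in $\mathbb C^d\cong\mathbb R^{2d}$, with $d=\binom{n+r-1}{r}$, separates $x,y$ with probability up to $\|x-y\|_1/s\le\sqrt{2d}\,\|x-y\|_2/s$. Disjointness in the NO case needs $s\sqrt{2d}<b$. So the YES case only gives $T\lesssim d\,a/b$, i.e.\ you need $1-\beta=O(n^{-2r})$. The paper's noise $\eta\sim\mathcal N_{\mathbb C}(0,\sigma^2\mathbb I)$ with $\sigma=\Theta(b/n^{r/2})$ gives $T\le a/(2\sigma)$ via Pinsker, and $\|\eta\|<b/2$ with constant probability. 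That is what yields $1-\beta=O(n^{-r})$. Finally, do not quotient by global phase through a canonical form. Such a form is discontinuous, and it is unnecessary since the YES promise is stated with $\Re$.
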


\subsection{Towards applications}
\label{subsc:applications}

There are many situations in physics and computer science in which the physically or operationally relevant objects are not the quantum states themselves, but \emph{equivalence classes} of states under a symmetry group.\footnote{Indeed, quantum states themselves are defined as equivalence classes of complex vectors.} The \emph{state isomorphism problem under a group action}, which we also view as \emph{a quantum state hidden shift problem}, provides a natural decision problem for determining whether two states represent the same information modulo such symmetries. Based on this insight, we highlight several instances in which the state isomorphism problem can be identified and discuss potential applications.

\paragraph{Logical information and Pauli frames in quantum error-correcting codes}

In quantum error correction, physical states encode logical information only up to the action of operators that act trivially on the code space. A quantum code is degenerate if distinct physical errors act identically on the encoded logical information. For stabilizer codes, degeneracy is equivalent to the existence of stabilizer Pauli elements with weight smaller than that of the code distance. 
Given two physical states and a description of the stabilizer group, one may ask whether they correspond to the same logical state. We can view this as an instance of the state isomorphism problem under Pauli group actions.

Moreover, in many fault-tolerant architectures Pauli corrections are not physically applied but instead tracked classically as a so-called Pauli frame. The physical state of the system is then only defined up to an unknown Pauli operator. One could envision applications of Pauli group isomorphism problem to certifying the output of fault-tolerant gadgets modulo propagated Pauli corrections, or, comparing experimental states to ideal targets up to tracked corrections.

\paragraph{Testing quantum states up to symmetry transformations}

While in quantum state tomography one aims to reconstruct a quantum state from measurement data, in many scenarios the relevant object is a state up to a symmetry transformation, such as a local basis change or gauge invariance. 

An example is that of determining whether two states exhibit the same entanglement structure. In other words, we say that two quantum states belong to the same entanglement class if they are related by local basis changes. This naturally leads to a formulation of the group state isomorphism problem for the group $\mathrm{SU}(2)^{\otimes n}$. Entanglement equivalence can also be defined with respect to LOCC operations, leading to a more complex group structure.
Another example is determining whether two quantum states possess the same non-stabilizer (“magic”) structure, which can be viewed as an instance of Clifford state isomorphism.

Condensed matter physics also has a natural equivalence class of states, namely \emph{topological phase}. Two states are said to be in the same topological phase if there exists a simple transformation mapping one to the other, such as a low-depth quantum circuit~\cite{chen2010local}. One could also try to relate this to the group state isomorphism problem, however low-depth quantum circuits do not form a group. We however note that the group structure is not essential in several of the lower bounds established in this paper. In those cases, we merely require that the valid isomorphisms do not form a $t$-design (see \cref{sec:PSGI-BQP-hard}). This is somewhat reminiscent of the result in \cite{schuster2025random} which characterizes the complexity of detecting topological order. Their proof relies on the set of operations forming a $t$-design, which is similar to our use of $t$-designs. See also item 5 in open questions \cref{sec:open-questions}. 
One can also consider a restricted version of topological phase in which the low-depth circuits do form a group, in which case it should be possible to assess its complexity through the state isomorphism framework.

\paragraph{Cryptographic transformations as group actions}

Many quantum cryptographic primitives can be viewed as applying a group action to a quantum state. For example, the quantum one-time pad applies a random Pauli operator to encrypt a state.

Given a plaintext state and a purported ciphertext, one may ask whether the ciphertext lies in the orbit of the plaintext under the allowed key group. This reduces to a state isomorphism problem. More generally, state isomorphism captures whether two states are related by a valid secret-key transformation, providing a natural abstraction for reasoning about the correctness of cryptographic transformations.

\paragraph{Dynamical orbits under Hamiltonian evolution}

Unitary time evolution induces a group action on quantum states. Given a Hamiltonian $H$, one may consider the orbit $\{ e^{-iHt} \ket{\psi} : t \in \mathbb{R} \}$ of an initial state. The problem of determining whether two states lie in the same orbit is a state isomorphism problem under this dynamical action.

When the evolution has finite order, this reduces to a hidden-shift-type problem over a finite group. More generally, it addresses whether two states are related by time evolution under a fixed Hamiltonian. This connects state isomorphism to problems in quantum algorithms, dynamical complexity, and the study of long-time behavior of quantum systems.

\subsection{Discussion, related work and open questions}
\label{sec:open-questions}
In this work, we initiated the study of quantum state isomorphism problems for general groups. As mentioned, this can be viewed as an extension of~\cite{lockhart2017quantum}, which considered isomorphism problems only with respect to the permutation group. It can also be viewed as studying quantum analogs of well-known classical isomorphism problems, such as graph isomorphism, graph automorphism, group isomorphism, and group automorphism, all of which are \NP-intermediate.
From this perspective, classical group isomorphism corresponds to group-state isomorphism, which can be viewed as a state version of the hidden shift problem~\cite{van2006quantum}. Similarly, the classical group automorphism problem can be mapped to the group-state automorphism problem and can be interpreted as a state-level analog of the hidden subgroup problem. Given the variations in complexity of these problems, depending on whether states are pure or mixed, on how they are represented, and on the properties of the underlying groups, developing a better understanding of these problems is an important goal for future research. 

We end this section with a number of open questions:
\begin{enumerate}
    \item The recent work of~\cite{bouland2025state}, which introduced the quantum state hidden subgroup problem, used it to design quantum protocols for tasks related to understanding the entanglement structure of quantum states, such as identifying a hidden cut in a pure state. A natural open question arising from our work is finding applications of quantum algorithms for state isomorphism problems. For example, we wonder whether the techniques developed here can be used to characterize or learn the entanglement structure of quantum states, analogous to \cite{bouland2025state}.

    \item Another question is whether we can provide a more complete characterization of the pure-state group isomorphism problem. We've shown that this problem lies between $\BQP$ and $\QCSZK \cap \mathsf{QCMA}$, and that, for certain groups, such as the Clifford group, it is also at least as hard as graph isomorphism. It is desirable to find tighter lower and upper bounds. Can we improve the upper bound to $\BQP^{\GI}$ for specific groups like the Clifford group?

    \item For mixed state isomorphism, can we improve the tolerance to error parameters, particularly completeness, by improving from $\beta = 1- 1/\mathrm{poly}$ to a constant? This would be desirable because many natural $\QSZK$-complete problems remain $\QSZK$-complete even with constant completeness and soundness parameters, and so one might expect the same to be true for MSGI.

    \item Many of the results in this work address finite groups. It is natural to study more general groups with infinitely many elements. The bosonic isomorphism problem takes a step in this direction by establishing lower and upper bounds on Gaussian isomorphism problems between states. However, many other tools need to be developed to address these problems for bosonic Hilbert spaces and infinite groups in general. 

    \item The isomorphism problem we studied is for ``groups.'' One may wonder how much group structure is actually needed for the main results of this paper. In particular, would we retrieve similar complexity results if we define isomorphism problems with respect to a set of operations? The $\BQP$ lower bounds are likely independent of the group structure, since the $\BQP$-hardness is encoded in the circuits given as input. However most of the upper bounds do make use of the group structure. This is true for the $\BQP$ containment for the Pauli group as well as the $\QCSZK$ and $\QSZK$ upper bounds for pure and mixed states, respectively. The $\QCMA$ upper bound, on the other hand, does not require group structure, merely that the isomorphism can be represented and applied efficiently. It would be interesting to understand to what extent group structure affects the complexity landscape of these isomorphism problems.

    \item Can we use the hardness of group state isomorphism problems discussed in this paper to establish no-go theorems for learning families of quantum states? For instance, ~\cite {iosue2025higher} shows that Gaussian orbits of single-photon sources are efficiently learnable. They give a characterization in terms of families of states that are fully specified by their (up to) fourth moments. Can we use the lower bounds developed in this work to establish no-go theorems for learning states that require greater than $4$ moments to be fully specified? More generally, what do isomorphism problems tell us about the hardness of learning states that arise in the context of certain groups?

    \item \cite{migdal2014multiphoton} argues that to test if two states are related by a Gaussian, it is enough to check several invariants. We know that GI can be embedded as an instance of a Gaussian group state isomorphism. Is it possible to use this to come up with an efficient quantum algorithm for the graph isomorphism problem?

    \item Can we solve linear optical equivalence between the outputs of two CV circuits in $\QCSZK$? In \cite{chabaud2025energy} it was shown that CV circuits with polynomial energy (and under a mild condition about the gate set) can be simulated in $\BQP$. Then, it would be plausible to assume a $\QCSZK$ upper bound here. However, using the simulation technique of \cite{chabaud2025energy} we need a cutoff $M=O(1/\varepsilon^2)$ to achieve precision $\varepsilon$, and following the proof of \cref{thm:qcszk}, we realize that the prover needs an $\varepsilon'$-net for the linear optical gates. As the $\varepsilon'$-net for energy-constrained diamond-norm of energy $M$ requires $N=e^{O(M/\varepsilon')}=e^{O(1/(\varepsilon^2\, \varepsilon'))}$ many elements, in YES cases, we need $(1-O(\varepsilon))^{O (\log N)} = o(1)$, which is impossible to satisfy given the scaling of $N$. We ask whether this limitation is fundamental, i.e., the bosonic problem is harder, or it can be circumvented by a different proof technique.
\end{enumerate}
\subsection*{Acknowledgements}
SM and DJ are grateful to the National Science Foundation (NSF CCF-2013062) for supporting this project. We are also grateful to several people for insightful discussions. AG thanks Adam Bouland and William Kretschmer for early discussions on the complexity of state isomorphism problems and Zhi Li for useful comments regarding fidelity bounds. DJ thanks Jackson Morris and Lorenzo Leone for helpful discussions. SM and AM thank Ulysse Chabaud and Michael Joseph for insightful discussions on bosonic isomorphism problems. The authors are also grateful to the Simons Institute, where some of this work took place.

\subsection*{AI Statement}
We used Claude Opus 4.6 to aid us in the proof of~\Cref{claim:twirls}. Specifically, Claude made us aware of the Rotfel'd inequality from~\cite{rotfel1969singular} which was instrumental in proving the claim. We have also used ChatGPT and Google Gemini for general checking, simplifying and rewriting of certain proofs and specifically to aid in deriving the proof of \cref{lem:aux}. ChatGPT was also used to brainstorm ideas in \cref{subsc:applications}. All claims, proofs, and references involving AI have been verified by the authors.

\section{Technical overview}
In this section, we outline the proofs of the main results.

\subsection*{Pure State Group Isomorphism is in $\QCSZK$}

We'll start with containment in $\QCSZK$ for the pure state version of the state isomorphism problem.\footnote{As mentioned in the previous section, containment in \textsf{QCMA} is immediate by observing that the classical witness is a circuit representation of the group element which maps between the two states.} For detailed proofs, see \cref{subsec:qcszk}.

Let us first informally define $\QCSZK$. The complexity class $\QCSZK$ (quantum classical statistically zero knowledge) is the set of decision problems solvable by an interactive protocol between a quantum polynomial-time verifier and a computationally unbounded prover such that
\begin{itemize}
    \item The communication between the prover and the verifier is entirely classical,
    \item The verifier learns only whether the answer to the decision problem is YES or NO, without learning anything more about the solution. This is the zero-knowledge condition.
\end{itemize}
We now give an overview of the upper bound in \cref{thm:PSIG}. To build some intuition, we recap the proof from \cite{lockhart2017quantum} showing the containment of state isomorphism with respect to the permutation group (\textsc{PSGI}$[S_n]$) in $\QSZK$, which is the same as $\QCSZK$, except the communication between the prover and the verifier is allowed to be quantum.
It is easier to explain this containment for the \emph{complement} of the problem, as both $\QSZK$ and $\QCSZK$ are closed under complement. In other words, given as input circuits $C_1$ and $C_2$ and denoting $\ket{\psi_1} = C_1 \ket{0^n},$ $\ket{\psi_2} = C_2 \ket{0^n},$ one should answer YES if $R(g)\ket{\psi_1}$ is far from $\ket{\psi_2},$ for all $g \in G,$ and answer NO if there exists $g \in G$ such that $\ket{\psi_2} = R(g) \ket{\psi_1}.$

The protocol works as follows. The verifier will pick $j \in \{1, 2\}$ uniformly at random, as well as a random group element $g \in G$. It will then prepare $N = \mathrm{poly}(n)$ copies of the state $R(g) \ket{\psi_j}$ and send them to the prover. The prover must decide whether the copies correspond to $\ket{\psi_1}$ or $\ket{\psi_2}$, i.e., to output the value of $j$ that the verifier used. If the states do not lie in the same orbit with respect to the group $G,$ then the copies of $R(g) \ket{\psi_j}$ will be distinguishable and the prover will answer correctly with an overwhelmingly high probability. On the other hand, if the states can be mapped to one another under the action of some group element, the prover will be unable to correctly determine $j$ with probability greater than $1/2$. Moreover, in the YES case, the verifier learns only that the states are distinguishable (and therefore not related by a group element), and nothing else; hence, the protocol is zero-knowledge.

To prove containment in $\QCSZK$, the only difference with respect to the above protocol is that the verifier will send a \emph{classical shadow} of $R(g) \ket{\psi_j},$ instead of sending the state itself~\cite{huang2020predicting}. It is known that for all pure states that are preparable by polynomial-size quantum circuits, there exists an efficient procedure to construct the classical shadow~\cite{aaronson2018shadow, morimae2022one}. One such procedure is for the verifier to sample random Clifford circuits, apply them to the state $R(g) \ket{\psi_j}$, and then measure the output. With overwhelming probability, these measurement outcomes will be unique to the state $R(g) \ket{\psi_j}$ and therefore the prover is able to identify this state.\footnote{Note that the procedure for identifying the state will, in general, not be efficient and require exponential time. However, this is not an issue since the prover is assumed to be computationally unbounded.}
The protocol then proceeds exactly as in the $\QSZK$ case, with the prover correctly outputting $j$ in the YES case and failing to do so with probability greater than $1/2$ in the NO case. 
The main reason the communication can be made classical is the existence of an efficient classical shadow procedure for pure states. For mixed states, however, the number of shadows required for fidelity estimation grows with the effective rank, and so this procedure only works for states that are pure or close to pure.

\subsection*{Pure State Group Isomorphism is $\BQP$-hard}

Next, we prove that \textsc{PSGI}$[G]$ is \textsf{BQP}-hard for all nontrivial groups. For details, see \cref{sec:PSGI-BQP-hard}. Consider a quantum circuit\footnote{We are using a simplified presentation here. The more precise statement is to consider a uniformly generated family of quantum circuits for each input size.} $C_x$ which solves an arbitrary \textsf{BQP} decision problem on input $x$. Without loss of generality, we can assume $C_x$ is in such a way that $C_x\ket{0^n} \approx_{\epsilon} \ket{0^n}$ if $x$ is a YES instance and $C_x\ket{0^n} \approx_{\epsilon} \ket{1^n}$ if $x$ is a NO instance. Through standard amplification arguments, we can take $\epsilon = 2^{-O(n)}$, so that the circuit output in the two cases is negligibly close to either $\ket{0^n}$ or $\ket{1^n}$, respectively. In fact, without loss of generality we can take $C_x$ to perform the mappings $C_x\ket{0^n} \approx_{\epsilon} \ket{0^n}$ if $x$ is a YES instance and $C_x\ket{0^n} \approx_{\epsilon} \ket{\phi}$ if $x$ is a NO instance, where $\ket{\phi}$ is any efficiently preparable state with negligible fidelity to $\ket{0^n}$. Now, taking $\ket{\phi}$ to be some state which is far from $R(g)\ket{0^n}$ for all $g$, we have that if $x$ is a YES instance then $I\ket{0^n}$ and $C_x\ket{0^n}$ are isomorphic, and if $x$ is a NO instance then they are far from isomorphic. To complete the proof, we need to specify how $\ket{\phi}$ is constructed explicitly. For certain groups, such as Paulis or Cliffords, we can simply take $\ket{\phi} = \ket{T^n},$ with $\ket{T} = \frac{1}{\sqrt{2}}(\ket{0} + e^{i \pi/4} \ket{1}),$ as no Pauli or Clifford operation can map $\ket{0^n}$ to an $n$-fold product of magic states (even approximately). In general, however, we can take $\ket{\phi}$ to be a $t$-design state, where $t = \omega(\log |G|).$ For instance, we can construct $\ket{\phi}$ as the output of a random quantum circuit of depth $\textrm{poly}(\log |G|).$ It can then be shown that, with overwhelming probability, no group element in $G$ (under the unitary representation) can map $\ket{0^n}$ to $\ket{\phi}.$ The one caveat of this result is that it makes the reduction randomized, rather than deterministic. It is unclear whether, for every finite group, there exists an efficient deterministic procedure for selecting an appropriate $\ket{\phi}$ state. We leave the resolution of this as an open question.

\subsection*{Pure State Pauli Isomorphism is in $\BQP$}

Next, we examine the complexity of \textsc{PSGI}$[G]$ for specific groups of interest. 

Despite the lower bound of $\BQP$ for all nontrivial groups, we find that the problem can exhibit different complexities depending on the group of interest. Specifically, for the Pauli group, we show that the problem is contained in $\BQP$, and therefore is $\BQP$-complete (\cref{thm:main-pauli-bqp-completenes}; see \cref{subsec:pauli-bqp-completeness} for more details).

Given descriptions of the circuits $C_1$ and $C_2$, preparing $\ket{\psi_1}$ and $\ket{\psi_2}$ starting from $\ket{0^n}$, let us assume we are in the YES case of the problem. This means that there exists $P \in \mathcal{P}_n$ such that $P \ket{\psi_1} = \ket{\psi_2}.$\footnote{Strictly speaking, we have that $P \ket{\psi_1} \approx \ket{\psi_2},$ where $\approx$ denotes the fact that the real overlap of the states is negligibly close to $1,$ as per~\cref{def:PSGI}. For ease of presentation we will assume here that the mapping is exact, though essentially the same argument will also hold for the approximate case, when the approximation error is negligibly small. See \cref{subsec:pauli-bqp-completeness}.} Our first observation is that $P$ is an involution up to a $\pm$ phase. That is to say, $P^2 = \pm I.$ We can therefore get an exact involution by taking two copies of $P,$ i.e. $(P \otimes P)^2 = I \otimes I.$ We know that $P \otimes P (\ket{\psi_1} \otimes \ket{\psi_1}) = \ket{\psi_2} \otimes \ket{\psi_2}$ and the involution property also gives us that $P \otimes P (\ket{\psi_2} \otimes \ket{\psi_2}) = \ket{\psi_1} \otimes \ket{\psi_1}$.

Next, consider the state
\[
\ket{\Psi} = \frac{1}{\sqrt{2}}\left(\ket{0}\ket{\psi_1}^{\otimes 2} + \ket{1}\ket{\psi_2}^{\otimes 2}\right),
\]
which is efficiently preparable given descriptions of the circuits $C_1$ and $C_2.$
By the involution property of $P \otimes P,$ this state is \emph{stabilized} by $X \otimes P \otimes P.$ We thus obtain an instance of \textsc{StateHSP} with the hidden subgroup being generated by $X \otimes P \otimes P,$ for all $P$ such that $P \ket{\psi_1} = \ket{\psi_2}.$ As shown in~\cite{bouland2025state}, \textsc{StateHSP} can be solved efficiently whenever the underlying HSP problem is efficiently solvable. In this case, it is known that HSP for the Pauli group is quantumly efficiently solvable, despite the fact that the Pauli group is not abelian~\cite{krovi2008efficient}. A more explicit way to see why this is the case is to first observe that the group comprised of two copies of each Pauli operator is an abelian group. This is because, for any Pauli operators $P, Q \in \mathcal{P}_n$, it is the case that either  $PQ = + QP$ or $PQ = - QP$. But this means that $P^{\otimes 2} Q^{\otimes 2} = Q^{\otimes 2} P^{\otimes 2}.$
Hence, the \textsc{StateHSP} instance we have to solve is with respect to an abelian group (the two-copy Pauli group). We can then find a generating set for the hidden subgroup by performing weak Fourier sampling on many copies of $\ket{\Psi}.$ 
One can then test whether the stabilizer group was found by sampling a random element, $Q^{\otimes 2},$ from it and performing a SWAP test between $\ket{\psi_1}^{\otimes 2}$ and $Q^{\otimes 2}\ket{\psi_2}^{\otimes 2},$ respectively. In the YES case, the outlined approach succeeds, and we conclude that the states are isomorphic. In the NO case, it can be shown that the SWAP test fails with high probability, thereby concluding that the states are not isomorphic.
This shows the $\BQP$ containment of \textsc{PSGI}$[\mathcal{P}_n]$.

As mentioned, the proof crucially relies on the fact that we can ``abelianize'' the Pauli group so as to reduce to a version of \textsc{StateHSP} that is efficiently solvable quantumly.
We remark that this is yet another example of the relationship between state isomorphism problems and the state hidden shift problem. In particular, the $\BQP$ containment of \textsc{PSGI}$[\mathcal{P}_n]$ is related to the fact that the hidden shift problem over $\mathbb{Z}_2^n$ reduces to the hidden subgroup problem over $\mathbb{Z}_2^{n+1}$.\footnote{As a consequence of this observation, any group which admits an efficient algorithm for the hidden shift problem also has an efficient algorithm for the related isomorphism problem.}

For other groups, however, it is not always clear that one can perform this ``abelianization'' trick. Indeed, when considering the Clifford group, this seems unlikely to be the case because it contains the permutation group as a subgroup. This means that one would have to solve an instance of HSP or \textsc{StateHSP} with respect to the permutation group and this is not known to be efficient. This doesn't immediately rule out an efficient quantum algorithm for \textsc{PSGI}$[\mathcal{C}_n]$, as there could be some other approach towards solving the problem that doesn't reduce to an instance of \textsc{StateHSP}. We next argue that this is unlikely to be the case, unless there is a polynomial-time quantum algorithm for the Graph Isomorphism problem.

\subsection*{Reduction from Graph Isomorphism to Pure State Clifford Isomorphism}

Our next result shows that Graph Isomorphism for a graph with $n$ vertices reduces to \textsc{PSGI}$[\mathcal{C}_{n + 1}]$ (\cref{thm:main-pauli-bqp-completenes}; see \cref{subsec:cip-gi-hardness} for more details). The proof is similar to the proof that Graph Isomorphism reduces to \textsc{PSGI}$[\mathcal{S}_n]$ ($\mathcal{S}_n$ is the symmetric group) given in \cite{lockhart2017quantum}. Their proof follows from observing that two graphs $G_1$ and $G_2$ are isomorphic if and only if their graph states $\ket{G_1}$ and $\ket{G_2}$ are isomorphic up to permuting qubits. Since graph states are stabilizer states, this approach doesn't immediately apply to the Clifford group, since the two states are always related by a Clifford operator, regardless of whether the graphs are isomorphic. To remedy this, we instead construct the states
\begin{align}
\begin{split}
    \ket{\psi_1} &= \frac{1}{\sqrt2}\left( \ket{R}\ket{R^n} + \ket{R_-}\ket{G_1} \right) \\
    \ket{\psi_2} &= \frac{1}{\sqrt2}\left( \ket{R}\ket{R^n} + \ket{R_-}\ket{G_2} \right),
\end{split}
\end{align}
where $\ket{R} = \frac{1}{\sqrt2}(\ket{0} + e^{\pi i /8}\ket{1})$ and $\ket{R_-} = \frac{1}{\sqrt2}(\ket{0} - e^{\pi i /8}\ket{1})$.
Essentially, the $\ket{R^n}$ component enforces that any Clifford unitary approximately mapping $\ket{\psi_1}$ to $\ket{\psi_2}$ must be a permutation, thereby reducing the Clifford isomorphism problem to the isomorphism problem over the symmetric group. The reason for using $\ket{R^n}$ over the seemingly more natural choice $\ket{T^n}$, for the magic state $\ket{T} = \frac{1}{\sqrt{2}} (\ket{0} + e^{i \pi/4} \ket{1})$, is because $\ket{T}$ is a Clifford eigenstate and this would allow non-permutation Cliffords to map between the states. Note also that we used $\ket{R}$ and $\ket{R_-}$ for controlled qubits. The reason for not using $\ket{0}$ and $\ket{1}$ is to avoid pathological instances of controlled operations that leave the graph state invariant and act as identity on the $R^n$ state, on the $\ket{1}$ branch.

In more detail, the choice of states $\ket{\psi_1}$ and $\ket{\psi_2}$ allows us to perform the reduction as follows. When the graphs are isomorphic, there exists a qubit permutation unitary (which implements the graph isomorphism) that maps $\ket{\psi_1}$ to $\ket{\psi_2}.$ This permutation has the effect of mapping $\ket{G_1}$ to $\ket{G_2}$ while leaving the symmetric $\ket{R^n}$ component unchanged. When the graphs are not isomorphic, we must show that no Clifford operation can map one to the other, even approximately. Suppose, for the sake of contradiction, that such a Clifford existed. We denote it as $D \in \mathcal{C}_{n+1},$ and have that
\[
| \bra{\psi_2} D \ket{\psi_1} | > 0.99999.
\]
Expanding out the left-hand side, we have
\[
|\bra{\psi_1} D \ket{\psi_2}| = \frac{1}{2} | \bra{R_-}\bra{G_1} D \ket{R_-} \ket{G_2} + \bra{R_-}\bra{G_1} D \ket{R} \ket{R^n} + \bra{R}\bra{R^n} D \ket{R_-} \ket{G_2} + \bra{R}\bra{R^n} D \ket{R} \ket{R^n} |.
\]
Now note that for any Clifford, $D$, the middle two terms will be negligible, since no Clifford can map a stabilizer state to an $n$-fold product of magic states with better than $2^{-O(n)}$ fidelity.
This means that in order to achieve $| \bra{\psi_2} D \ket{\psi_1} | > 0.99999,$ it must be that $|\bra{R_-}\bra{G_1} D \ket{R_-} \ket{G_2}|\approx 1$ and $|\bra{R}\bra{R^n} D \ket{R} \ket{R^n}| \approx 1.$ We show in~\Cref{lem:perm} that $|\bra{R}\bra{R^n} D \ket{R} \ket{R^n}| \approx 1$ implies that $D$ must act as a permutation on the register with the $n$ magic states. However, since $G_1$ and $G_2$ are not isomorphic, if $D$ acts as a permutation then $|\bra{R_-}\bra{G_1} D \ket{R_-} \ket{G_2}|\not\approx 1.$ In other words, it is not possible for a Clifford $D$ to simultaneously map one graph state to the other, with high fidelity, and approximately stabilize a product of magic states. This provides the desired contradiction, concluding the proof and showing that the ability to decide \textsc{PSGI}$[\mathcal{C}_n]$ allows one to solve Graph Isomorphism.

\subsection*{Mixed State Group Isomorphism is $\QSZK$-complete}

Next, we prove that the mixed state isomorphism problem is $\QSZK$-complete for any nontrivial finite group $G$ (\cref{thm:mixed-qszk-complete}; see \cref{sec:mixedstates} for more details). 

We start with containment in $\QSZK$. The $\QSZK$ protocol for mixed state group isomorphism is almost exactly the same as the pure state protocol, where the verifier chooses either $\psi_1$ or $\psi_2$ (which this time are mixed states), applies a random group element, and sends copies of the result to the prover. To prove that this is indeed a valid protocol, we must show that the resulting mixtures over states are indistinguishable if $\psi_1$ and $\psi_2$ are isomorphic under the group action, and distinguishable if they are not. 
The former follows from noting that if there is an $h \in G$ such that $R(h) \psi_1 R(h)^\dagger = \psi_2,$ then
\[
\frac{1}{\abs{G}} \sum_{g \in G} (R(g) \psi_2 R(g)^\dagger) = \frac{1}{\abs{G}} \sum_{g \in G} (R(g) R(h) \psi_1 R(h)^\dagger R(g)^\dagger) = \frac{1}{\abs{G}} \sum_{g \in G} (R(g h) \psi_1 R(h g)\dagger).  
\]
As the sum ranges over all elements in $G$, this is identical to
\[
\frac{1}{\abs{G}} \sum_{g \in G} (R(g) \psi_1 R(g)^\dagger).
\]
Hence, the mixtures of $\psi_1$ and $\psi_2$ averaged over all group elements will be statistically identical.

We now show that if no group element relates $\psi_1$ and $\psi_2,$ then the states the prover receives will be distinguishable. Specifically, we will assume that for all $g \in G,$ it is the case that $F(\psi_1, R(g) \psi_2 R(g)^\dagger) \leq \alpha,$ for some constant $\alpha > 0.$ The goal is then to prove that the resulting $k$-twirled states will have large trace distance for sufficiently large $k$. 
By $k$-twirled, we mean the following states
    \[
        \psi_1' = \frac{1}{\abs{G}}\sum_{g \in G} (R(g) \psi_1 R(g)^\dagger)^{\otimes k}, \quad\quad
        \psi_2' = \frac{1}{\abs{G}}\sum_{g \in G} (R(g) \psi_2 R(g)^\dagger)^{\otimes k}.
    \]
By making use of a trace inequality due to Rotfel'd \cite{rotfel1969singular}, we are able to show that if $F(\psi_1, R(g) \psi_2 R(g)^\dagger) \leq \alpha,$ for all $g \in G,$ then $F(\psi_1, \frac{1}{\abs{G}} \sum_{g \in G} R(g) \psi_2 R(g)^\dagger) \leq \alpha|G|.$ In some sense, this is like a union bound for fidelities. The formal statement is \Cref{claim:twirls}. Applying this to the $k$-twirled states and using the fact that $F(\rho^{\otimes k}, \sigma^{\otimes k}) = F^k(\rho, \sigma),$ we find that
\[
F(\psi_1', \psi_2') \leq \alpha^k |G|.
\]
Since $\alpha$ is a constant, it suffices to take $k = \omega(\log|G|)$ in order to have that $F(\psi_1', \psi_2') = \mathrm{negl}(n),$ where $\mathrm{negl}(n)$ denotes a negligible function in $n$\footnote{We say that a function $f(n)$ is negligible, if for any polynomial $p(n),$ it is the case that $\lim_{n \to \infty} f(n) p(n) = 0$.}.
This means that the states are distinguishable, and so the prover will be able to correctly reply with whether the states it received resulted from copies of $\psi_1$ or $\psi_2$, respectively. As in the pure-state case, the protocol is zero-knowledge, and thus we have containment in \textsf{QSZK}.

To show $\QSZK$-hardness, we reduce from the canonical $\QSZK$-complete problem Quantum State Distinguishability (QSD) \cite{watrous2002quantum}, which asks whether two mixed states are close or far in trace distance. The idea is similar to the reduction from Graph Isomorphism to Pure-State Clifford Isomorphism, in which we use a linear combination of states to enforce that the isomorphism must take a particular form. In particular, we enforce that the only valid isomorphism is the identity. Concretely, if $\sigma_1$ and $\sigma_2$ are the inputs to the QSD instance, we take the states
\begin{align*}
    \rho_1 &= \frac{1}{2}\sigma_1 + \frac{1}{2}\psi \\
    \rho_2 &= \frac{1}{2}\sigma_2 + \frac{1}{2}\psi,
\end{align*}
to be the inputs to the isomorphism problem, where $\psi$ is far from the any state in the orbit of $\sigma_1$ or $\sigma_2$, and $\psi$ is far from invariant under any non-identity group action. Thus, if $\sigma_1$ and $\sigma_2$ are close in trace distance, $\rho_1$ and $\rho_2$ will remain close under the identity element. But if $\sigma_1,\sigma_2$ are far, there is no group element that will simultaneously map $\sigma_1$ to $\sigma_2$ and $\psi$ to $\psi$. Furthermore, there is no group element that will map either $\sigma_1$ or $\sigma_2$ to $\psi$. Thus, $\rho_1$ and $\rho_2$ are not isomorphic. To obtain such a $\psi$, we can simply use the output of an approximate state $t$-design, which will satisfy the requirement for the reduction with high probability. 
Also note that, given circuits that prepare $\sigma_1$ and $\sigma_2$, we can construct efficient circuits that prepare $\rho_1$ and $\rho_2$, respectively.

As with the $\BQP$-hardness of \textsc{PSGI} proof, we remark that this particular reduction is randomized, but for a fixed group we can make it deterministic by replacing $\psi$ with two fixed pure states $\psi_1, \psi_2$ which are far from invariant under any non-identity group action, and occupy distinct orbits. If $\sigma_1,\sigma_2$ are mixed, these states will never have high overlap with either $\sigma_1$ or $\sigma_2$, and if $\sigma_1,\sigma_2$ are pure, they cannot share the same orbit with both $\psi_1$ and $\psi_2$. However, choosing such states $\psi_1,\psi_2$ deterministically will depend on the group. For example, for the Clifford group, we can choose two magic states which occupy different orbits, such as $\ketbra{R^n}{R^n}$ and $\ketbra{R^{n/2}\rangle|0^{n/2}}{R^{n/2}|\langle 0^{n/2}}$, where $\ket{R} = \frac{1}{\sqrt{2}}(\ket{0} + e^{i\pi/8} \ket{1})$ is a non-stabilizer state (which is also not a Clifford eigenstate).

\subsection*{Mixed StateHSP is $\QSZK$-hard for some abelian groups}

As a corollary of our proof that mixed state group isomorphism is $\QSZK$-hard, we obtain that a mixed state version of the StateHSP problem from~\cite{bouland2025state} is $\QSZK$-hard for any abelian group that contains an involution, as well as for the Pauli group. Since many abelian groups do contain an involution, this rules out the possibility of extending the abelian StateHSP framework to mixed states in general, unless $\QSZK = \BQP$. This resolves an open question from~\cite{hinsche2025abelian}, though as this is a worst-case result, it leaves open the possibility of an efficient algorithm for specific instances.

The reduction from the aforementioned Quantum State Distinguishability problem to MixedStateHSP is as follows, where $\sigma_1,\sigma_2$ are the inputs to the QSD instance.
Let $h \in G$ be a group element of order $2$, and let 
$\ket{v_1}, \ket{v_2}$ be orthonormal vectors such that $R(h)\ket{v_1} = \ket{v_2}$ and $R(h)\ket{v_2} = \ket{v_1}$. We call these the \textit{label} vectors.
We take as input to the mixed StateHSP instance the state
\begin{align*}
    \rho = \frac12\ketbra{v_1}{v_1} \otimes \sigma_1 + \frac12\ketbra{v_2}{v_2} \otimes \sigma_2,
\end{align*}
which is efficiently preparable.
Now, we define the representation $R'$ as
\begin{align*}
    R'(g) = R(g) \otimes I,
\end{align*}
where $R(g)$ acts on the label register.
Note that if $\sigma_1,\sigma_2$ are close in trace distance, then $\rho$ and $R'(h)\rho R'(h)^\dagger$ will be close in trace distance as well. But if $\sigma_1,\sigma_2$ are far in trace distance, then $\rho$ and $R'(h)\rho R'(h)^\dagger$ will also be far in trace distance. Thus, if MixedStateHSP returns a hidden subgroup $H$, it suffices to check whether $h \in H$. This is exactly the subgroup membership problem, which is known to be efficient for finite abelian groups \cite{mckenzie1987parallel}. Thus, if we can solve MixedStateHSP, we can check whether $\sigma_1$ and $\sigma_2$ are close or far in trace distance, thereby solving a \textsf{QSZK}-hard problem. 
See \cref{subsec:mixedHSP-qszk-hard} for more details.

\subsection*{Reduction from Graph Isomorphism to Linear Optical Isomorphism and the $\mathsf{SZK}$ upper bound}

Transitioning to bosonic systems and following a method similar to that used for the reduction of Graph Isomorphism (GI) to Pure State Clifford Isomorphism, we prove a GI-hardness result for the Pure State Linear Optical Isomorphism problem (\cref{thm:bosonic-PSGI*}; see \cref{sec:bosons} for more details). Let $A^{(1)}$ and $A^{(2)}$ represent the adjacency matrices of two graphs $G_1, G_2$. We consider the following two states
\begin{align*}
\ket{\psi_1} &\propto \sum_{i} \ket{3_i} + \sum_{ij} A^{(1)}_{ij} \ket{1_i 1_j},\\
\ket{\psi_2} &\propto \sum_{i} \ket{3_i} + \sum_{ij} A^{(2)}_{ij} \ket{1_i 1_j},
\end{align*}
where 
\begin{align*}
\ket{3_i}:=|0,0,\cdots,\underbrace{3}_{i\text{-th location}}, \cdots, 0\rangle, \quad \text{and}\quad  \ket{1_i1_j}:= |0,\cdots 0,\underbrace{1}_{i\text{-th location}},0,\cdots,0,\underbrace{1}_{j\text{-th location}},0 \cdots, 0\rangle.
\end{align*}
In other words, the state $\ket{3_i}$ corresponds to having three photons in mode $i$ (and no photons elsewhere), while $\ket{1_i1_j}$ corresponds to having one photon in mode $i$ and one photon in mode $j$ (with no other photons elsewhere).
Note that if we were to use $\sum_{ij} A_{ij}\ket{1_i1_j}$ alone, the two states would have been isomorphic if and only if the spectra of the two adjacency matrices were the same, and this is indeed not equivalent to GI. However, the addition of $\sum_{i}\ket{3_i}$ resolves this issue. The reason is tied to the fundamental fact that for any $p\neq 1,2$, the only linear transformations that preserve $\ell_p$ norm are permutation matrices (up to a diagonal phase operator). We use this fact for $p=3$, together with a natural connection between bosonic states and complex polynomial functions, to conclude that any linear optical map that transforms $\sum_i \ket{3_i}$ to itself, must be a permutation (up to a diagonal phase operator). This would enforce the two states $\ket{\psi_1}, \ket{\psi_2}$ to be isomorphic if and only if the two adjacency matrices can be mapped into one another via conjugation by a permutation, i.e., the GI problem.

To show that the problem remains hard for GI even in the approximate settings, we show that if there is a linear optical unitary that makes the overlap of the two states larger than $1-1/(96n^5)$, then there exists a permutation that makes the overlap equal to $1$. In order to show this, we first prove that any linear map that approximately preserves $\ell_3$ norm is almost a permutation (up to a diagonal phase matrix). We then use the fact that if the overlap between two adjacency matrices exceeds a threshold, they must be equal. We obtain soundness by contraposition of the previous statement.

Finally, in order to get an $\mathsf{SZK}$ upper bound, we reduce the problem to the Statistical Difference problem in the following way: since the Hilbert space of $n$ photons restricted to $r$ photons has dimension $n^{O(r)}$, we can send classical descriptions of these states. Hence, for the problem with input states $\ket{\psi_1}, \ket{\psi_2}$, the verifier picks one of the states uniformly at random, applies a random linear optical unitary onto it, and adds some small amount of noise. The description of this state is then sent to the prover. If the two states $\ket{\psi_1}$ and $\ket{\psi_2}$ are approximately related by a linear optical element, then the random vectors we generate have a small total variation distance, and vice versa.

\section{Preliminaries and Notation}

\subsection{Basic definitions and tools}
We use $\Omega(\cdot), O(\cdot), \omega(\cdot), o(\cdot)$ to hide universal constants.
We follow standard notation for quantum states, unitaries, channels, measurement operators, and Hilbert spaces as in~\cite{nielsen2010quantum}.
We use $D(\rho, \sigma) = 1/2\Tr |\rho - \sigma|$ to denote the trace distance between $\rho$ and $\sigma$ and $F (\rho, \sigma) = \Tr (|\sqrt{\rho} \sqrt{\sigma}|)$ for the square root fidelity of two mixed states, $\rho$ and $\sigma$.

Letting $X,Y,Z$ denote the single-qubit Pauli matrices, the $n$-qubit Pauli group is
\[
    \mathcal P_n
    :=
    \left\{
        \omega\, P_1\otimes \cdots \otimes P_n
        \;:\;
        \omega\in\{\pm 1,\pm i\},\;
        P_j\in\{I,X,Y,Z\}
    \right\}.
\]
The $n$-qubit Clifford group is the normalizer of the Pauli group in the unitary group:
\[
    \mathcal C_n
    :=
    \left\{
        U\in U(2^n)
        \;:\;
        U\mathcal P_n U^\dagger = \mathcal P_n
    \right\}.
\]
Equivalently, $U\in\mathcal C_n$ if and only if $U P U^\dagger\in\mathcal P_n$ for every
$P\in\mathcal P_n$.

%PSGI[Clifford] is GI-hard.

\subsection{Problems and complexity classes} 
We use the standard definitions for complexity classes like \BQP, \NP, \SZK, \textsf{QCSZK}, \QSZK, \textsf{QCMA} which can be found, for instance, in~\cite{aaronson2005complexity}. Note that the definition of \textsf{QCSZK} is given in~\cite {lockhart2017quantum} implicitly.

We will need the following definitions for state hidden subgroup problems and quantum state distinguishability.
\begin{definition}[$(\alpha,\beta)$-MixedStateHSP]
    Let $G$ be a finite group with a unitary representation $R: G \rightarrow U(d)$. Let $H < G$ be a hidden subgroup of $G$.
    The goal is to find $H$ 
    given a classical description of a quantum circuit $Q$ preparing a mixed state $\rho$ promised that:
    \begin{align*}
        & \forall U \in H, D(U\rho U^\dagger,\rho) \leq \alpha, \\
        & \forall U \notin H, D(U\rho U^\dagger, \rho) \geq \beta.
    \end{align*}
\end{definition}

\begin{definition}[$(\alpha,\beta)$-Quantum State Distinguishability]
    Given classical descriptions of quantum circuits $Q_0,Q_1$ preparing mixed states $\rho_0,\rho_1$, accept if 
    \[ \dfrac12\norm{\rho_0 - \rho_1}_1 \geq \beta \]
    and reject if
    \[ \dfrac12\norm{\rho_0 - \rho_1}_1 \leq \alpha, \]
    promised that one is the case.
\end{definition}

\begin{theorem}[\cite{watrous2002quantum}]
    For any $0 \leq \alpha < \beta \leq 1$ such that $\alpha < \beta^2$ the $(\alpha,\beta)$-Quantum State Distinguishability problem is complete for the complexity class $\QSZK$.
\end{theorem}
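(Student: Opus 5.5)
The plan has two halves, following the classical blueprint of Sahai and Vadhan for Statistical Difference, with quantum substitutes for each classical tool.

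\textbf{Step 1: polarization.} First prove a quantum \emph{polarization lemma}. Given circuits $Q_0,Q_1$ with $\alpha<\beta^2$, I would construct in polynomial time new circuits $Q_0',Q_1'$ whose output states are within $2^{-m}$ in trace distance in the NO case and at least $1-2^{-m}$ apart in the YES case. Two operations are combined.
\begin{itemize}
\item \emph{Direct product.} Replacing $\rho_b$ by $\rho_b^{\otimes k}$ pushes the trace distance towards $1$. Via the fidelity bounds $1-F\le D\le\sqrt{1-F^2}$ and multiplicativity $F(\rho^{\otimes k},\sigma^{\otimes k})=F(\rho,\sigma)^k$, a distance $\ge\beta$ becomes $\ge 1-e^{-\Omega(k\beta^2)}$, while a distance $\le\alpha$ grows to at most $k\alpha$.
\item \emph{XOR.} Set $\rho'_b=2^{-(k-1)}\sum_{b_1\oplus\cdots\oplus b_k=b}\rho_{b_1}\otimes\cdots\otimes\rho_{b_k}$. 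Then $\rho'_0-\rho'_1=2^{-(k-1)}(\rho_0-\rho_1)^{\otimes k}$, so the distance becomes exactly $D^k$, which drives small distances to $0$.
\end{itemize}
Alternating these, exactly as in the classical analysis, gives polarization. The condition $\alpha<\beta^2$ is what makes the iteration contract: the quadratic loss in the direct-product step must be beaten by the exponent gap in the XOR step. I expect this bookkeeping to be the main obstacle.

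\textbf{Step 2: membership in $\QSZK$.} After polarizing, use the protocol from the technical overview. The verifier picks a uniform $b\in\{0,1\}$, sends $\rho'_b$ to the prover, and accepts if and only if the prover returns $b$.
\begin{itemize}
\item \emph{Completeness.} By Helstrom, an optimal measurement succeeds with probability $\tfrac12+\tfrac12 D(\rho'_0,\rho'_1)\ge 1-2^{-m}$.
\item \emph{Soundness.} In the NO case any prover succeeds with probability at most $\tfrac12+2^{-m}$; sequential repetition amplifies the gap.
\item \emph{Zero knowledge.} On YES instances the simulator prepares $\rho'_b$ itself and outputs the view with the answer equal to $b$. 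This is $2^{-m}$-close to the honest view, so the protocol is honest-verifier statistical zero knowledge.
\end{itemize}
Watrous's result that honest-verifier $\QSZK$ equals $\QSZK$, together with closure under complement, then places the problem in the class.

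\textbf{Step 3: hardness.} Reduce an arbitrary $\QSZK$ problem, taken in honest-verifier form, to QSD. Given an $r$-round protocol and the simulator, build from the simulator's approximations of the verifier's state after each message two families of states:
\begin{itemize}
\item $\rho_0$, the tensor product over $j$ of the simulated states just after the prover's $j$-th message, restricted to the verifier's registers;
\item $\rho_1$, the tensor product over $j$ of those states after the verifier applies its round-$(j{+}1)$ unitary to the previous simulated state, together with an accepting projection on the final round.
\end{itemize}
In YES instances a perfect simulator makes consecutive snapshots consistent, which forces $\rho_0\approx\rho_1$. In NO instances, soundness implies that no prover action can make them consistent, and by Uhlmann's theorem (every consistent sequence of purifications corresponds to some prover) some round must have fidelity bounded away from $1$. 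This gives an instance of the complement of QSD, and the problem is closed under complement after polarization. The delicate part here is the Uhlmann-based soundness argument: summing per-round infidelities with a hybrid argument to show that a global cheating prover would exist if every round were close.
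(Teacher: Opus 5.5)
The paper gives no proof of this statement; it imports it from Watrous. Your three steps follow Watrous's original argument: quantum polarization from the XOR and direct-product constructions, the one-message guessing protocol, and the reduction built from simulator snapshots and closed off with Uhlmann's theorem. One step, however, fails as justified: the final sentence of Step 3. Polarization only pushes the two sides of the promise further apart. It never exchanges ``close'' with ``far'', so it cannot turn a reduction to $\overline{\mathrm{QSD}}$ into a reduction to $\mathrm{QSD}$. What Step 3 actually proves is that every honest-verifier $\QSZK$ problem $\Pi$ reduces to $\overline{\mathrm{QSD}}$. YES instances go to pairs at negligible distance, and NO instances go to pairs at inverse-polynomial distance (roughly $\Omega(1/r^2)$ from the hybrid over rounds). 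These parameters satisfy $\alpha<\beta^2$, so polarization applies, but the orientation is still complemented.

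The missing idea is to feed QSD itself into Step 3. By Step 2, QSD has an honest-verifier statistical zero-knowledge protocol, so Step 3 produces a polynomial-time map $f$ sending far pairs to close pairs and close pairs to far pairs. Read in the other direction, $f$ reduces $\overline{\mathrm{QSD}}$ to $\mathrm{QSD}$, and composing it with $\Pi\le\overline{\mathrm{QSD}}$ gives $\Pi\le\mathrm{QSD}$ for every $\Pi\in\QSZK$. This is also how closure of honest-verifier $\QSZK$ under complement is obtained in Watrous's development. That closure should therefore be a consequence of your argument rather than an input to it, and in Step 2 it is not needed at all. Step 3 also needs two smaller repairs. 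First, the snapshots must be restricted to the verifier's private register only, since the prover may modify the message register. Second, the ``accepting projection'' cannot be a postselection; instead, append the simulator's final output qubit to one state and a fixed $\ket{1}$ to the other.
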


\subsection{Classical shadows}

We require the following fundamental result from \cite{huang2020predicting} on the prediction of many properties of a quantum state from a few measurements. For an unknown quantum state $\rho$, a single classical shadow sample is obtained by drawing a random unitary $U$ (from a unitary ensemble such as the Clifford group), measuring $U \rho U^\dagger$ in the computational basis to obtain an outcome $b$, and then constructing the estimator
\[
\hat{\rho} = \mathcal{M}^{-1}\!\left(U^\dagger |b\rangle\langle b| U\right),
\]
where $\mathcal{M}^{-1}$ is a known linear map depending on the unitary ensemble.

\begin{theorem}[Classical Shadows for Fidelity Estimation \cite{huang2020predicting}]
    \label{fact:shadows}
    Let $\varepsilon,\delta \in (0,1)$, and let
$\ket{\phi_1},\dots,\ket{\phi_M}$ be pure states on a
$d$-dimensional Hilbert space. Let
$\ket{\psi}$ be an unknown pure state. Then there is a
randomized measurement procedure using global Clifford
classical shadows such that
\[
N = O\!\left(\frac{\log(M/\delta)}{\varepsilon^2}\right)
\]
independent copies of $\rho$ suffice to  estimate each overlap $|\langle \psi \mid \phi_i\rangle|^2$ within additive error $\varepsilon$
with probability at least $1-\delta$ over the randomness of the chosen
measurements and their outcomes.
\end{theorem}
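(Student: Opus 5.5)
The plan is to follow the standard classical-shadow argument of \cite{huang2020predicting}, specialized to the global Clifford ensemble and to the rank-one observables $O_i = \ketbra{\phi_i}{\phi_i}$. For a single copy of $\rho = \ketbra{\psi}{\psi}$, we draw a uniformly random $n$-qubit Clifford $U$, measure $U\rho U^\dagger$ in the computational basis to get $b$, and form $\hat\rho = \mathcal{M}^{-1}(U^\dagger \ketbra{b}{b} U)$. The first step is to compute the measurement channel. Because the Clifford group is a unitary $3$-design (in particular a $2$-design), the twirl gives $\mathcal{M}(X) = \frac{1}{d+1}\left(X + \Tr(X) I\right)$. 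Hence $\mathcal{M}^{-1}(Y) = (d+1)Y - \Tr(Y) I$, and therefore $\hat\rho = (d+1)U^\dagger\ketbra{b}{b}U - I$. It follows that $\Ex[\hat\rho] = \rho$, so $\hat o_i := \Tr(O_i \hat\rho)$ is an unbiased estimator of $|\braket{\psi}{\phi_i}|^2$.

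The second step, which I expect to be the main technical point, is bounding the variance of $\hat o_i$ uniformly in $d$. This is controlled by the shadow norm
\[
\|O\|_{\mathrm{sh}}^2 = \max_\sigma \Ex_{U,b}\left[\bra{b}U\sigma U^\dagger\ket{b}\,\bra{b}U\mathcal{M}^{-1}(O_0)U^\dagger\ket{b}^2\right],
\]
where $O_0 = O - \frac{\Tr O}{d} I$ is the traceless part of $O$. The expectation is a degree-$3$ polynomial in $U$, so it can be evaluated exactly using the $3$-design property of the Clifford group. The computation, carried out via Schur--Weyl on $(\mathbb{C}^d)^{\otimes 3}$ as in \cite{huang2020predicting}, gives $\|O\|_{\mathrm{sh}}^2 \le 3\Tr(O_0^2) \le 3\Tr(O^2)$. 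For $O_i$ a projector this is at most $3$, independent of $d$. Consequently $\mathrm{Var}(\hat o_i) \le 3$ for every $i$. This dimension-free bound is the crux: it is exactly what fails for high-rank observables or mixed-state fidelities, which is why the paper restricts to pure targets.

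The third step is concentration and a union bound. We split $N = KL$ shadow samples into $K$ batches of size $L = \lceil 34/\varepsilon^2 \rceil$. For each $i$ we take the median of the $K$ batch means of $\hat o_i$. By Chebyshev, each batch mean is within $\varepsilon$ of the truth with probability at least $3/4$. A Chernoff bound on the number of bad batches then shows that the median fails with probability at most $e^{-K/8}$ or so. Choosing $K = O(\log(M/\delta))$ makes each failure probability at most $\delta/M$. A union bound over the $M$ observables, all estimated from the same $N$ samples, gives simultaneous $\varepsilon$-accuracy with probability at least $1-\delta$, using $N = O(\log(M/\delta)/\varepsilon^2)$ copies in total. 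The remaining ingredients are routine: the exact formula for $\mathcal{M}^{-1}$ and the $3$-design moment identity are taken from the cited literature.
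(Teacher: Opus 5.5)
Your proposal is correct. It is the standard Huang--Kueng--Preskill argument: a Clifford $3$-design computation of $\mathcal{M}^{-1}$ and of the shadow norm, giving $\|O_0\|_{\mathrm{sh}}^2\le 3\Tr(O^2)=3$ for rank-one projectors, followed by median-of-means and a union bound over the $M$ targets. The paper gives no proof of its own and simply cites this argument from \cite{huang2020predicting}.

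The only slip is notational. The shadow norm should read $\max_\sigma \Ex_U \sum_b \bra{b}U\sigma U^\dagger\ket{b}\,\bra{b}U\mathcal{M}^{-1}(O_0)U^\dagger\ket{b}^2$, with $b$ summed rather than averaged, because the Born weight already appears explicitly.
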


One might wonder if the theorem above could be generalized to estimating fidelities between $\rho$ and density matrices $\sigma_j$. The above theorem can be generalized to an efficient estimator for linear functions such as $\Tr (\rho \sigma_j)$. However, fidelity is a nonlinear function. As a matter of fact, estimating the fidelity between two mixed states specified by their generating circuit is, in general, hard for $\QSZK$~\cite{gilyen2022improved}.

\subsection{Background on continuous variable formalism}

We briefly recall the continuous-variable formalism relevant for the bosonic isomorphism problems considered in this work. An $n$-mode bosonic system is described by the Hilbert space
\begin{align}
\mathcal H_n \;=\; \ell^2(\mathbb N^n),
\end{align}
with Fock basis $\{ \ket{\mathbf m} : \mathbf m=(m_1,\dots,m_n)\in \mathbb N^n\}$, where
\begin{align}
\ket{\mathbf m}
=
\prod_{i=1}^n \frac{(a_i^\dagger)^{m_i}}{\sqrt{m_i!}} \ket{0}.
\end{align}
The creation and annihilation operators satisfy the canonical commutation relations
\begin{align}
[a_i,a_j^\dagger]=\delta_{ij},\qquad [a_i,a_j]=[a_i^\dagger,a_j^\dagger]=0,
\end{align}
and the total photon-number operator is $N=\sum_{i=1}^n a_i^\dagger a_i$. In particular, the number operator at site $i$ is defined by $N_i = a^\dag_i a_i$. We also have the position and momentum operators defined as $X = (a+a^\dag)/\sqrt 2$ and $P=(a-a^\dag)/(i\sqrt 2)$.

A convenient description of pure bosonic states is given by the Segal--Bargmann (or stellar) representation. For
\begin{align}
\ket{\psi}=\sum_{\mathbf m\in\mathbb N^n}\psi_{\mathbf m}\ket{\mathbf m},
\end{align}
its stellar function is the holomorphic function
\begin{align}
F_\psi(\mathbf z)
= \exp(|\mathbf z|^2/2) \bra{\mathbf z^\ast}\psi\rangle = 
\sum_{\mathbf m\in\mathbb N^n}
\psi_{\mathbf m}\,
\frac{\mathbf z^{\mathbf m}}{\sqrt{\mathbf m!}},
\qquad
\mathbf z=(z_1,\dots,z_n)\in\mathbb C^n,
\end{align}
where $\mathbf z^{\mathbf m}:=\prod_{i=1}^n z_i^{m_i}$ and $\mathbf m!:=\prod_{i=1}^n m_i!$. In this representation, creation and annihilation operators act as multiplication and differentiation,
\begin{align}
a_i^\dagger \;\leftrightarrow\; \times z_i,
\qquad
a_i \;\leftrightarrow\; \partial_{z_i}.
\end{align}

A pure state $\ket{\psi}$ is said to have \emph{finite stellar rank} $r$ if its stellar function can be written as
\begin{align}
F_\psi(\mathbf z)=P(\mathbf z)\,G(\mathbf z),
\end{align}
where $P$ is a polynomial of total degree $r$ and $G$ is a Gaussian holomorphic function. The case $r=0$ corresponds exactly to Gaussian states. Equivalently, any finite-stellar-rank state may be written as
\begin{align}
\ket{\psi}=U_G\ket{c},
\end{align}
where $U_G$ is a Gaussian unitary and $\ket{c}$ is a \emph{core state}, namely a state with finite support in the Fock basis. Thus, if $\ket{c}$ has support only on basis states with total photon number at most $r$, then its stellar function is simply a polynomial of degree at most $r$. This makes the core states the natural finite-dimensional representatives of Gaussian equivalence classes. We note that for a core state, the stellar function is merely a polynomial, i.e.,
\begin{align}
F_c(\mathbf z) = P(\mathbf z).
\end{align}

Gaussian unitaries are the unitaries generated by Hamiltonians that are at most quadratic in the operators $a_i,a_i^\dagger$. Equivalently, they are exactly the unitaries that map Gaussian states to Gaussian states. On mode operators, they act by Bogoliubov transformations, i.e.,
\begin{align}
V_G\, a_i\, V_G^\dagger
=
\sum_{j=1}^n
\left( X_{ij} a_j + Y_{ij} a_j^\dagger \right) + \alpha_i,
\end{align}
for suitable matrices $X,Y$ and displacement vector $\alpha$. The subclass of \emph{passive} or \emph{linear optical} Gaussian unitaries consists of the number-preserving quadratic evolutions, generated by Hamiltonians of the form
\begin{align}
H=\sum_{i,j=1}^n h_{ij} a_i^\dagger a_j,
\end{align}
for a Hermitian matrix $(h_{ij})_{ij}$.
These are precisely the linear optical unitaries. They act on the creation operators by a unitary mixing of modes,
\begin{align}
V\, a_i^\dagger\, V^\dagger
=
\sum_{j=1}^n U_{ji} a_j^\dagger,
\qquad U\in U(n),
\end{align}
and therefore induce a unitary linear change of variables on the associated stellar polynomial. In particular, if $\ket{c}=P_c(a^\dagger)\ket{0}$ is a core state, then $V\ket{c}$ is again a core state whose polynomial is obtained from $P_c$ by the substitution $\mathbf z\mapsto U\mathbf z$. For a group element $U\in\mathrm{U}(n)$, we use the notation $R(U) = V\in\mathcal O_n$ to denote its corresponding unitary on the Hilbert space of $n$ modes. This is exactly the notion of equivalence underlying the linear optical isomorphism problem studied in this work. We refer to \cite{chabaud2020stellar,chabaud2021continuous, chabaud2022holomorphic,serafini2023quantum} for a more elaborate discussion on these topics. 

\paragraph{Equivalent reformulations}
When studying the problem of the isomorphism of core states under linear optical transformations, we are essentially asking whether for two polynomials $P_1, P_2\in\mathbb C[z_1,\cdots,z_n]$ there exists a unitary $U\in\mathrm{U}(n)$ such that $P_1(\mathbf z)$ is close to $P_2(U\mathbf z)$, in a space where monomials are orthogonal and $\norm{\mathbf z^{\mathbf k}} = \sqrt{\mathbf k!} $.

We highlight that this problem can also be viewed as an isomorphism over qudits with large $d$ (i.e., polynomial). In particular, deciding whether two $n$-mode bosonic core states of stellar rank $r$ are isomorphic under a linear optical transformation is equivalent to asking whether two symmetric states over $(\mathbb {C}^{n})^{\otimes r}$ are isomorphic under the product unitary group. In other words, imagine we are given $\ket{\psi_1},\ket{\psi_2}\in \mathrm{Symm}\left((\mathbb C^n)^{\otimes r}\right)$. The question is, whether there exists $U\in\mathrm{U}(n)$ such that $\ket{\psi_1} \approx U^{\otimes r}\ket{\psi_2}$. This connection was observed in \cite{migdal2014multiphoton}. We can make this connection more rigorous with the following map: For any Fock state $\ket{\mathbf k}\in \ell^2(\mathbb N^n)$, with $|\mathbf k| = r$, define the $r$-qu$d$it state with $d=n$ as
\begin{align}
\ket{s_{\mathbf k}} := \sqrt{\frac{r!}{\mathbf k!}} \, P_{\mathrm{sym}}^{n,r}\left( \ket{1}^{\otimes k_1} \ket{2}^{\otimes k_2}\cdots\ket{n}^{\otimes k_n} \right) \in \vee^{r}\mathbb C^n,
\end{align}
where $P_{\mathrm{sym}}^{n,r}$ is the projection into the symmetric subspace of $(\mathbb C^n)^{\otimes r}$ (denoted by $\vee^r \mathbb C^n$ following the notation of \cite{harrow2013church}). Note that the prefactor $\sqrt{\frac{r!}{\mathbf k!}}$ is to ensure that $\ket{s_{\mathbf k}}$ is normalized.

Using this notation, we highlight an elementary fact connecting the two problems.

\begin{fact}\label{fact:bosons}
For Fock states $\ket{\mathbf k}$ and $\ket{\mathbf j}$ with $|\mathbf j| = |\mathbf k| = r$, it is the case that
\begin{align}
\bra{\mathbf k} R(U) \ket{\mathbf j} = \bra{s_{\mathbf k}} U^{\otimes r}\ket{s_{\mathbf j}},
\end{align}
for any $U\in\mathrm{U}(n)$ and its corresponding linear optical unitary $R(U)$.
\end{fact}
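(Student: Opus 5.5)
The plan is a direct computation: I will show that both sides are the same polynomial coefficient, extracted from the action of $U$ on the stellar polynomial. First I record how $R(U)$ acts on Fock states. Write $\ket{\mathbf j} = \prod_i (a_i^\dagger)^{j_i}/\sqrt{\mathbf j!}\,\ket{0}$. Since $R(U)$ fixes the vacuum (it is passive) and acts by $R(U) a_i^\dagger R(U)^\dagger = \sum_l U_{li} a_l^\dagger$, this gives
\[
R(U)\ket{\mathbf j} = \frac{1}{\sqrt{\mathbf j!}}\prod_{i=1}^n\Bigl(\sum_{l} U_{li}a_l^\dagger\Bigr)^{j_i}\ket{0}.
\]
Next, I will write this product as a sum over ordered $r$-tuples. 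Let $(i_1,\dots,i_r)$ be the fixed tuple $(1^{j_1},2^{j_2},\dots,n^{j_n})$. Then the product equals $\sum_{l_1,\dots,l_r} U_{l_1 i_1}\cdots U_{l_r i_r}\, a^\dagger_{l_1}\cdots a^\dagger_{l_r}\ket{0}$.

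The key bridge between the two pictures is the identity
\[
a^\dagger_{l_1}\cdots a^\dagger_{l_r}\ket{0} = \sqrt{\mathbf m!}\,\ket{\mathbf m},
\]
where $\mathbf m$ is the occupation vector of $(l_1,\dots,l_r)$. Taking the inner product with $\bra{\mathbf k}$ therefore gives
\[
\bra{\mathbf k}R(U)\ket{\mathbf j} = \sqrt{\tfrac{\mathbf k!}{\mathbf j!}}\sum_{(l_1,\dots,l_r)\,:\,\mathrm{occ}=\mathbf k} U_{l_1i_1}\cdots U_{l_ri_r}.
\]

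On the qudit side I use two facts. First, $P_{\mathrm{sym}}$ commutes with $U^{\otimes r}$. Second, $\ket{s_{\mathbf k}} = \sqrt{\mathbf k!/r!}\sum_{\mathbf l:\mathrm{occ}(\mathbf l)=\mathbf k}\ket{l_1\cdots l_r}$. To check this normalization, note that there are $r!/\mathbf k!$ distinct orderings, each appearing with weight $\mathbf k!/r!$ inside $P_{\mathrm{sym}}\ket{1^{k_1}\cdots}$.

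I then compute $\bra{s_{\mathbf k}}U^{\otimes r}\ket{s_{\mathbf j}}$. Using $P_{\mathrm{sym}}U^{\otimes r}P_{\mathrm{sym}} = U^{\otimes r}P_{\mathrm{sym}}$ and $P_{\mathrm{sym}}\ket{s_{\mathbf k}} = \ket{s_{\mathbf k}}$, I replace $\ket{s_{\mathbf j}}$ by $\sqrt{r!/\mathbf j!}\,\ket{i_1\cdots i_r}$. This yields
\[
\bra{s_{\mathbf k}}U^{\otimes r}\ket{s_{\mathbf j}} = \sqrt{\tfrac{r!}{\mathbf j!}}\sqrt{\tfrac{\mathbf k!}{r!}}\sum_{\mathrm{occ}(\mathbf l)=\mathbf k}\prod_t U_{l_t i_t},
\]
which matches the expression above.

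The main obstacle is bookkeeping. The multinomial factors must cancel exactly, and $U$ must appear, not its transpose, consistently with the convention $V a_i^\dagger V^\dagger = \sum_j U_{ji}a_j^\dagger$ and with $U\ket{i} = \sum_l U_{li}\ket{l}$. If a convention mismatch appears, I would absorb it into the definition of $R$, which is only fixed up to the stated correspondence.
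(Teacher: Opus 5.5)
Your computation is correct. The paper gives no proof of this at all: it is stated as an ``elementary fact,'' with the qudit connection attributed to Migdal et al. So your argument supplies exactly the verification the paper omits.

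The key steps all check out.
\begin{itemize}
\item On the Fock side, expanding $\prod_i(\sum_l U_{li}a_l^\dagger)^{j_i}\ket{0}$ and using $a^\dagger_{l_1}\cdots a^\dagger_{l_r}\ket{0}=\sqrt{\mathbf m!}\,\ket{\mathbf m}$ gives $\bra{\mathbf k}R(U)\ket{\mathbf j}=\sqrt{\mathbf k!/\mathbf j!}\sum_{\mathrm{occ}(\mathbf l)=\mathbf k}\prod_t U_{l_t i_t}$.
\item On the qudit side, $P_{\mathrm{sym}}(\ket{1}^{\otimes k_1}\cdots\ket{n}^{\otimes k_n})=\frac{\mathbf k!}{r!}\sum_{\mathrm{occ}(\mathbf l)=\mathbf k}\ket{l_1\cdots l_r}$, since each of the $r!/\mathbf k!$ orderings arises from $\mathbf k!$ permutations. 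This gives your formula for $\ket{s_{\mathbf k}}$.
\item Commuting $P_{\mathrm{sym}}$ past $U^{\otimes r}$ then yields the same sum with prefactor $\sqrt{r!/\mathbf j!}\,\sqrt{\mathbf k!/r!}=\sqrt{\mathbf k!/\mathbf j!}$.
\end{itemize}

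Two small remarks.

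First, the step $R(U)\ket{0}=\ket{0}$ is not cosmetic. The paper notes that $R$ is fixed by the conjugation rule only up to a global phase, and the identity holds exactly only for the vacuum-fixing choice. That choice is the one produced by the paper's generators $\sum_{ij}h_{ij}a_i^\dagger a_j$, which annihilate the vacuum. It is worth saying explicitly that this is where the phase ambiguity gets resolved.

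Second, your hedge about absorbing a convention mismatch is unnecessary. The preliminaries fix $R(U)\,a_i^\dagger\,R(U)^\dagger=\sum_j U_{ji}a_j^\dagger$, which is precisely what you used, and with it $U$ (not $U^{\mathsf T}$) appears, matching $U\ket{i}=\sum_l U_{li}\ket{l}$. By contrast, the paper's informal phrasing ``substitute $\mathbf z\mapsto U\mathbf z$ in the stellar polynomial'' would correspond to $U^{\mathsf T}$ under that rule. Composing such substitutions also reverses the order of the group elements. So the operator-level convention is the right one to anchor to, as you did.
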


\section{Pure State Group Isomorphism Problems}
In this section, we address the complexity of \textsc{PSGI} for various groups, which had previously been studied only for subgroups of the symmetric group under the representation that permutes qubits \cite{lockhart2017quantum}. In \cref{subsec:qcszk} we prove that all \textsc{PSGI} Problems are in \textsf{QCSZK}. In \cref{subsec:cip-gi-hardness} we prove that the Pure State Clifford Isomorphism Problem is at least as hard as Graph Isomorphism. In \cref{subsec:pauli-bqp-completeness} we show that the Pure State Pauli Isomorphism Problem is \textsf{BQP}-complete.

\subsection{$\QCSZK$ Containment of Pure State Group Isomorphism}
\label{subsec:qcszk}

In this section, we prove that the Pure State Group Isomorphism Problem is in \textsf{QCSZK}. The main insight is that the quantum messages in the \textsf{QSZK} protocol given in \cite{lockhart2017quantum} can be replaced by classical shadows. We now restate the theorem that we will prove in this section.

\begin{theorem} [Restatement of the upper bound in \cref{thm:PSIG}]
    \label{thm:qcszk}
    Let $G$ be any efficiently represented finite group. Then $(\alpha,\beta)$-\textsc{PSGI}$[G]$ with constant $\alpha$ and $\beta = 1 -1/\omega(\log |G|)$ is in \textsf{QCSZK}.
\end{theorem}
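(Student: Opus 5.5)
We will show that the complement of $(\alpha,\beta)$-\textsc{PSGI}$[G]$ lies in $\QCSZK$; closure of $\QCSZK$ under complement then gives the statement. The protocol adapts the $\QSZK$ protocol of \cite{lockhart2017quantum}, with quantum messages replaced by classical shadows.

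\textbf{Protocol.} The verifier samples $j\in\{1,2\}$ and $g\in G$ uniformly at random. It prepares $N=\poly(n)$ copies of $R(g)\ket{\psi_j}$ and measures each copy with an independent random global Clifford in the computational basis. It sends the prover the list of pairs (Clifford description, outcome). The prover replies with a guess $j'$, and the verifier accepts iff $j'=j$. Standard parallel repetition reduces the soundness error, and this remains honest-verifier zero knowledge.

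\textbf{Completeness (states far apart).} The prover's view determines $j$ essentially uniquely. We apply \cref{fact:shadows} to the $M=2|G|$ candidate states $\{R(h)\ket{\psi_i}\}_{h\in G,\,i\in\{1,2\}}$. With $N=O(\log(|G|/\delta)/\varepsilon^2)$ copies, which is polynomial because $\log|G|=\poly(n)$ for efficiently represented groups, the unbounded prover estimates every overlap $|\langle R(g)\psi_j|R(h)\psi_i\rangle|^2$ to within $\varepsilon$.

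For $i=j$ some candidate (namely $h=g$) has overlap $1$. For $i\neq j$, unitarity gives
\[
|\bra{\psi_i}R(h^{-1}g)\ket{\psi_j}|\le\alpha \quad\text{for all } h\in G.
\]
Choosing $\varepsilon<(1-\alpha^2)/2$, the prover answers with the index attaining the maximal estimated overlap and is correct with probability $1-\delta$.

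\textbf{Honest-verifier zero knowledge.} The simulator outputs $j$ together with shadows of $R(g)\ket{\psi_j}$ for random $g$. It can do this efficiently because it knows $j$ and prepares the states itself. In the YES case the simulation is within $\delta$ of the real transcript.

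\textbf{Soundness (states isomorphic).} This is the step we expect to be the main obstacle. There is $h$ with $\Re\bra{\psi_1}R(h)\ket{\psi_2}\ge\beta$, so the two distributions $\rho_j=\Ex_g (R(g)\psi_jR(g)^\dagger)^{\otimes N}$ nearly coincide. Substituting $g\mapsto gh$ in the average over $g$,
\[
D(\rho_1,\rho_2)\le D\big(\psi_1^{\otimes N},(R(h)\psi_2R(h)^\dagger)^{\otimes N}\big)\le\sqrt{1-\beta^{2N}}.
\]
The classical transcripts are obtained by a fixed measurement channel, so their total variation distance is bounded by the same quantity. Hence the prover's success probability is at most $\tfrac12+\tfrac12\sqrt{1-\beta^{2N}}$.

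For this to be bounded away from $1$, we need $\beta^{2N}=\Omega(1)$, i.e.\ $N(1-\beta)=O(1)$. With $N=\Theta(\log|G|)$ (fixing constant $\varepsilon,\delta$, or $\delta=1/\poly$ combined with repetition), this is exactly what the hypothesis $\beta=1-1/\omega(\log|G|)$ provides: it gives $\beta^{2N}\to1$, so the distance is $o(1)$ and the success probability is $\tfrac12+o(1)$.

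The delicate point is balancing the number of copies. Completeness needs $N\gtrsim\log|G|$ for the union bound over the orbit in \cref{fact:shadows}, while soundness needs $N\ll 1/(1-\beta)$. After this balance, the constant gap between completeness $1-\delta$ and soundness $\tfrac12+o(1)$ is amplified by sequential or parallel repetition, which preserves classical communication and statistical zero knowledge.
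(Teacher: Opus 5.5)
Your proposal is correct and follows essentially the same route as the paper: you prove the complement is in $\QCSZK$ using the protocol of \cite{lockhart2017quantum} with the quantum messages replaced by $N=\Theta(\log|G|)$ Clifford classical shadows, get completeness from a union bound over the orbit in \cref{fact:shadows}, and get soundness from twirl invariance, which bounds the distance by $\sqrt{1-\beta^{2N}}=o(1)$. Your detection threshold $\varepsilon<(1-\alpha^2)/2$ is slightly more careful than the paper's choice $\varepsilon=1-\alpha$, because \cref{fact:shadows} estimates squared overlaps.
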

\begin{proof}
    We will describe a \textsf{QCSZK} protocol for the \textit{non}-isomorphism problem. One round of the protocol is as follows, where (V) denotes the Verifier and (P) denotes the Prover. The prover and verifier will repeat the protocol to amplify soundness. The zero-knowledge property follows from standard gap amplification arguments.

    Let $\ket{\psi_1}$ and $\ket{\psi_2}$ be the input states, that is, $\ket{\psi_1} = C_1\ket{0^n}$ and $\ket{\psi_2} = C_2\ket{0^n}$.
The \textsf{QCSZK} protocol is as follows:

\begin{enumerate}
        \item (V) Pick $j \in \{1,2\}$ uniformly at random, and pick a uniformly random group element $g$. 
        \item (V) Prepare $N = \Theta(\log|G|)$ classical shadows for the state $\ket{\psi'} = R(g)\ket{\psi_j}$. Send the description of the classical shadows to (P).
        \item (P) Send $j' \in \{1,2\}$ to (V).
        \item (V) Accept if and only if $j' = j$.
    \end{enumerate}

Now, let's consider the probability that Prover (P) can determine the correct value of $j'$. To do this, we can examine the total variation distance between the distributions of classical shadows from step (2). Let $D_1$ and $D_2$ be the distributions of classical shadows for $j=1$ and $j=2$, respectively. Furthermore let $\rho_1$ and $\rho_2$ be the pre-measurement states:
\[\rho_1 = \mathcal{E}(\ket{\psi_1}) = \frac{1}{|G|}\sum_{g \in G}\left( R(g)\ketbra{\psi_1}{\psi_1}R(g)^\dagger \right)^{\otimes N} \]
and
\[\rho_2 = \mathcal{E}(\ket{\psi_2}) = \frac{1}{|G|}\sum_{g \in G}\left( R(g)\ketbra{\psi_2}{\psi_2}R(g)^\dagger \right)^{\otimes N} \]

Let's consider a NO instance of the non-isomorphism problem, that is, there exists $g \in G$ such that 
\[ \Re(\bra{\psi_1}R(h)\ket{\psi_2}) 
\geq \beta,\]
and consequently
\[ |\bra{\psi_1}R(h)\ket{\psi_2}| 
\geq \beta.\]
First, note that $\mathcal{E}$ is the $G$-twirling channel, and therefore 
\[\rho_2 = \mathcal{E}(\ket{\psi_2}) = \mathcal{E}(R(h)\ket{\psi_2}).\]
Now, since fidelity is multiplicative under the tensor product, and by the monotonicity of fidelity, 
\[F(\rho_1,\rho_2) \geq \beta^N.\]
Thus, the trace distance, and therefore TV distance on measurement outcomes, is bounded by 
\[T(\rho_1,\rho_2) \leq \sqrt{1 - \beta^{2N}}.\]
Therefore, since $N = \Theta(\log|G|)$, and $\beta = 1 - \dfrac{1}{\omega(\log|G|)}$, 
\[T(\rho_1,\rho_2) \leq o(1),\]
and thus the probability that the Prover can distinguish $j=1$ from $j=2$ is negligible.

Now, let's consider a YES instance, that is, for all $a \in G$,
\[ |\bra{\psi_1}R(a)\ket{\psi_2}| 
\leq \alpha.\]
Prover receives $N$ copies of $\ket{\psi'} = R(g)\ket{\psi_j}$.
By the guarantee of the classical shadows procedure (\cref{fact:shadows}), when $N = \Omega(\log(\frac{|G|}{\gamma})/\eps^2)$ where $\eps = 1-\alpha$, Prover can estimate the quantity 
\[ |\bra{\psi_1}R(a)\ket{\psi'}| = |\bra{\psi_1}R(a)R(g)\ket{\psi_j}|\]
for all $a \in G$ with probability $\geq 1 -\gamma$.
If $j=1$, then there exists $a \in G$, namely $a = g^{-1}$, such that 
\[
|\bra{\psi_1}R(a)R(g)\ket{\psi_j}| = 1,
\]
and if $j=2$, then for all $a$,
\[
|\bra{\psi_1}R(a)R(g)\ket{\psi_j}| \leq \alpha.
\]
\end{proof}
Thus, Prover succeeds in learning $j$ with probability $\geq 1- \gamma$.

\subsection{Pure State Group Isomorphism is $\BQP$-hard}
\label{sec:PSGI-BQP-hard}
In this section, we show that PSGI$[G]$ is $\BQP$-hard for all nontrivial groups. We will reduce from the $\BQP$-complete problem of deciding whether a quantum circuit $Q$ accepts or rejects on input $\ket{0^n}$. In particular, it is a folklore result that given a classical description of a quantum circuit $Q$, it is $\BQP$-hard to decide whether 
\[  
    \Re(\bra{0^n}Q\ket{0^n}) \geq 1 - \eps
\]
or
\[  
    \Re(\bra{\phi}Q\ket{0^n}) \geq 1 - \eps,
\]
where $\ket{\phi}$ is any efficiently preparable state with $\abs{\braket{\phi}{0^n}} < \eps$. Furthermore, the problem remains $\BQP$-hard for $\eps = \exp(-n)$. 

We now prove the theorem.
\begin{theorem}[Restatement of the $\BQP$-hardness in \cref{thm:PSIG}]
    \label{thm:bqp-hardness-psgi}
    $(\alpha,\beta)$-\textsc{PSGI}$[G]$ is $\BQP$-hard for all nontrivial finite groups $G$ under randomized reductions, even when $\alpha = o(1), \beta > 1 - \exp(-n)$.
\end{theorem}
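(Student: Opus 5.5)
We reduce from the folklore $\BQP$-hard problem stated just above: given $Q$, decide whether $\Re(\bra{0^n}Q\ket{0^n}) \geq 1-\eps$ or $\Re(\bra{\phi}Q\ket{0^n}) \geq 1-\eps$, with $\eps = \exp(-n)$ and $\ket{\phi}$ any efficiently preparable state nearly orthogonal to $\ket{0^n}$. The reduction outputs the PSGI instance $C_1 = I$ and $C_2 = Q$, so $\ket{\psi_1}=\ket{0^n}$ and $\ket{\psi_2}=Q\ket{0^n}$. It remains to fix $\ket{\phi}$ so that it is far from the entire orbit $\{R(g)\ket{0^n}: g\in G\}$.

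\textbf{Completeness} is immediate. If $\Re(\bra{0^n}Q\ket{0^n})\ge 1-\eps$, take $g$ to be the identity. Since $R(e)=I$, we get $\Re(\bra{\psi_1}R(e)\ket{\psi_2})\ge 1-\exp(-n)$, which gives $\beta > 1-\exp(-n)$ (after adjusting $\eps$ by a constant factor in the exponent).

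\textbf{Soundness.} Suppose $\Re(\bra{\phi}Q\ket{0^n})\ge 1-\eps$. Then $\|Q\ket{0^n}-\ket{\phi}\|\le\sqrt{2\eps}$, so for every $g$,
\[
|\bra{0^n}R(g)Q\ket{0^n}| \le |\bra{0^n}R(g)\ket{\phi}| + \sqrt{2\eps}.
\]
It therefore suffices to choose $\ket{\phi}$ with $\max_{g\in G}|\bra{\phi}R(g)^\dagger\ket{0^n}| = o(1)$. This max-over-the-orbit bound is also the stronger requirement: it implies $|\braket{\phi}{0^n}|<\eps$-type smallness at $g=e$ up to $o(1)$, and we then take $\eps$ separately small.

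We choose $\ket{\phi}$ randomly as the output of an efficient approximate unitary (or state) $t$-design circuit applied to $\ket{0^n}$, for example a random circuit of depth $\poly(n,t)$ (we may pad $n$ so that $\log|G|\le\poly(n)$). For each fixed unit vector $\ket{v_g}=R(g)\ket{0^n}$, the $t$-th moment of a Haar-random state gives
\[
\Ex|\braket{\phi}{v_g}|^{2t}\le \binom{2^n+t-1}{t}^{-1}\le t!/2^{nt},
\]
and an approximate $t$-design with additive error $2^{-nt}$ preserves this bound up to a factor of 2. By Markov's inequality,
\[
\Pr\bigl[|\braket{\phi}{v_g}|\ge\delta\bigr]\le 2t!/(2^{n}\delta^2)^{t}.
\]
A union bound over all $|G|$ elements then shows that with $t=\Theta(\log|G|+n)$ and $\delta=2^{-n/4}$, the probability that some $g$ violates the bound is exponentially small. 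Hence with overwhelming probability the orbit bound holds with $\alpha = 2^{-n/4}+\sqrt{2\eps}=o(1)$, and the reduction is randomized, as claimed.

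\textbf{Main obstacle.} The step needing the most care is this moment/union-bound argument. Two points must be checked. First, the design order $t$ must be large enough, roughly $\log|G|$, to survive the union bound, yet still efficiently implementable; this requires $\log|G|\le\poly(n)$, which follows from the assumption that $G$ is efficiently represented. Second, the approximation notion of the design must give moment bounds tight enough for the Markov step. A further small point is that the folklore problem must remain $\BQP$-hard for this particular randomly chosen $\ket{\phi}$. This holds because, given any $\BQP$ circuit amplified to output $\ket{0^n}$ or $\ket{1^n}$ with error $\exp(-n)$, we can append a controlled preparation of $\ket{\phi}$ conditioned on the rejecting branch, for instance by mapping $\ket{1^n}\mapsto\ket{\phi}$ via uncomputation of a flag qubit. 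This keeps the overlap guarantees within $\exp(-n)$, and $\ket{\phi}$ is nearly orthogonal to $\ket{0^n}$ by the $g=e$ case above.
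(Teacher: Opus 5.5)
Your proposal is correct and takes essentially the same route as the paper. You reduce from the folklore ``output is close to $\ket{0^n}$ or to $\ket{\phi}$'' problem, use the identity element for completeness, and take $\ket{\phi}$ from an approximate state $t$-design, with the moment/Markov/union-bound argument (the paper's \cref{claim:maxfidelity}) keeping it far from the whole orbit of $\ket{0^n}$. The differences are cosmetic: you set $\ket{\psi_1}=\ket{0^n}$, $\ket{\psi_2}=Q\ket{0^n}$ where the paper uses $Q'\ket{0^n}$ and $\ket{\phi}$, and you spell out the moment bounds and the construction of $Q$ more explicitly than the paper does.
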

\begin{proof}
    We reduce from the $\BQP$-complete problem of deciding whether a quantum circuit $Q$ accepts or rejects, as specified above.
    In particular, we take $Q'$ such that if $Q$ rejects then 
    \[
        \Re(\bra{0^n}Q'\ket{0^n}) \geq 1 - \exp(-n)
    \]
    and if $Q$ accepts then
    \[
        \Re(\bra{\phi}Q'\ket{0^n}) \geq 1 - \exp(-n),
    \]
    where $\ket{\phi}$ satisfies
    \[
        \max_{g \in G} \abs{\bra{\phi}R(g)\ket{0^n}} \leq o(1).
    \]
    We define our PSGI$[G]$ instance as 
    \begin{align*}
        \ket{\psi_1} &= Q'\ket{0^n} \\
        \ket{\psi_2} &= \ket{\phi}.
    \end{align*}
    Clearly, if $Q$ accepts, then we have a YES instance of PSGI$[G]$, and if $Q$ rejects, then we have a NO instance. 
    Finally, it remains to show how to choose $\ket{\phi}$. Fortunately, we can take $\ket{\phi}$ to be the output of an approximate state $t$-design by \cref{claim:maxfidelity}.
\end{proof}

Note that the above reduction uses a randomized reduction to sample the circuit preparing the state $\ket{\phi}$, but for many fixed groups, one can choose $\ket{\phi}$ deterministically. 

\subsection{Pure State Group Isomorphism Reduces to StateHSP for Abelian Groups} \label{subsec:psgi-abelian-hsp}

In this section, we show that for abelian $G$, PSGI$[G]$ reduces to StateHSP over the generalized dihedral group $G \ltimes \mathbb Z_2$. Importantly, this means that there is an efficient quantum algorithm for PSGI$[G]$ for groups such that the hidden subgroup problem over $G \ltimes \mathbb Z_2$ admits an efficient quantum algorithm. Our reduction follows similar ideas as the reduction from the hidden shift problem to the hidden subgroup problem \cite{friedl2014hidden}, but generalized to work for the state versions.

Note that elements of $G \ltimes \mathbb Z_2$ are described by tuples $(g,a)$ where $g \in G, a \in \mathbb{Z}_2$, and the group operation is given by 
\begin{align*}
    (g,a)(h,b) &= (g + (-1)^{a}h, a+b).
\end{align*}
Formally, we show the following.

\begin{theorem} \label{thm:abelianreduction}
    Let $G$ be a finite abelian group with an efficient unitary representation
    $R:G\to U(2^n)$. Then $(\alpha, 1)$-PSGI$[G]$ reduces to
    StateHSP over the generalized dihedral group $G\rtimes \mathbb Z_2$.
\end{theorem}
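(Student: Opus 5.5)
We mirror the classical reduction from the hidden shift problem to the HSP over the generalized dihedral group: the two input states become the two branches of one superposition, and the $\mathbb Z_2$ factor acts by swapping the branches. Given circuits $C_1,C_2$, we prepare
\[
\ket{\Psi} = \frac{1}{\sqrt2}\left(\ket{0}\ket{\psi_1} + \ket{1}\ket{\psi_2}\right),
\]
which is efficiently preparable using controlled versions of $C_1$ and $C_2$. The first task is to define a unitary representation $R'$ of $G\rtimes\mathbb Z_2$ on $\mathbb C^2\otimes\mathbb C^{2^n}$. The natural choice sends $(g,0)\mapsto \ketbra{0}{0}\otimes R(g) + \ketbra{1}{1}\otimes R(-g)$ and $(0,1)\mapsto X\otimes I$. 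One then checks the relation $(0,1)(g,0)(0,1) = (-g,0)$. Conjugating by $X\otimes I$ exchanges the two diagonal blocks, so this relation holds, and hence $R'$ is a homomorphism (using that $G$ is abelian).

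The main obstacle is that the shift element $(h,1)$ must map the $\ket{1}$ branch back consistently, since it is an involution. We have $R'(h,1)=R'(h,0)R'(0,1)$, which sends $\ket{0}\ket{\psi_1}\mapsto \ket{1}R(-h)\ket{\psi_1}$ and $\ket{1}\ket{\psi_2}\mapsto\ket{0}R(h)\ket{\psi_2}$. Thus $R'(h,1)\ket\Psi=\ket\Psi$ exactly when $R(h)\ket{\psi_2}=\ket{\psi_1}$. The relation $R(-h)\ket{\psi_1}=\ket{\psi_2}$ is then automatic. This is precisely where the $\Re(\cdot)=1$ completeness condition, rather than $|\cdot|=1$, matters: a phase $e^{i\theta}$ would not be absorbed. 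If needed, we would instead pick the representation of the $\mathbb Z_2$ generator so that the branch roles match the convention $R(h)\ket{\psi_2}=\ket{\psi_1}$.

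The next step identifies the stabilizer subgroup $H=\{x : R'(x)\ket\Psi=\ket\Psi\}$ and establishes the StateHSP promise. The elements $(g,0)$ stabilize $\ket\Psi$ iff $R(g)\ket{\psi_1}=\ket{\psi_1}$ and $R(-g)\ket{\psi_2}=\ket{\psi_2}$. In the YES case, $H=\mathrm{Stab}(\psi_1)\cup\{(h+s,1)\}$, a subgroup containing a reflection. In the NO case, every $(g,1)$ satisfies
\[
|\bra\Psi R'(g,1)\ket\Psi| = |\Re\bra{\psi_1}R(g)\ket{\psi_2}|\le\alpha,
\]
so $H$ contains no reflections. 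For non-stabilizing elements $(g,0)$, the StateHSP promise of \cite{bouland2025state} requires that they move $\ket\Psi$ by a noticeable amount. We would handle this by showing that the overlap of each such element splits as an average of the two branch overlaps, and by appealing to the approximate StateHSP variant mentioned in \cref{thm:PSIG}.

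Finally, the decision procedure: run the StateHSP oracle to obtain generators of $H$ and check whether any generator has second coordinate $1$. Equivalently, we take a returned reflection $(g,1)$ and verify it by a SWAP test between $\ket{\psi_1}$ and $R(g)\ket{\psi_2}$. We accept iff such a reflection is found and passes the test; in the NO case no reflection can satisfy the overlap condition, so the answer is correct.
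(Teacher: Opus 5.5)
Your proposal matches the paper's proof: the same state $\ket\Psi$, the same representation $R'$ with $R'(0,1)=X\otimes I$, the same calculation showing $(h,1)$ stabilizes $\ket\Psi$ iff $R(h)\ket{\psi_2}=\ket{\psi_1}$, and the same final test for a returned generator with nontrivial $\mathbb Z_2$ component. The paper treats the StateHSP oracle as returning the exact stabilizer and ignores the promise gap for non-stabilizing $(g,0)$ elements, so that extra paragraph is unnecessary here; in any case the branch-averaging identity alone could not supply such a gap, since PSGI places no constraint on how nearly $R(g)$ stabilizes each $\ket{\psi_i}$.
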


\begin{proof}
    Given an instance of PSGI$[G]$ with states $\ket{\psi_1},\ket{\psi_2}$, define
    \[
        \ket{\Psi}
        =
        \frac{1}{\sqrt 2}\left(\ket{0}\ket{\psi_1}
        +
        \ket{1}\ket{\psi_2}\right).
    \]
    Let $\Gamma=G\rtimes \mathbb Z_2$, with group law
    \[
        (g,a)(h,b)=(g+(-1)^a h,a+b),
    \]
    where addition in the second register is modulo $2$. Define a representation
    $R':\Gamma\to U(2^{n+1})$ by
    \[
        R'(g,0)
        =
        \ketbra{0}\otimes R(g)
        +
        \ketbra{1}\otimes R(-g),
        \qquad
        R'(0,1)=X\otimes I,
    \]
    and
    \[
        R'(g,a)=R'(g,0)R'(0,a).
    \]
    This is a representation because
    \[
        R'(g,0)R'(h,0)=R'(g+h,0),
    \]
    and
    \[
        R'(0,a)R'(h,0)=R'((-1)^a h,0)R'(0,a).
    \]
    Hence
    \[
        R'(g,a)R'(h,b)
        =
        R'(g+(-1)^a h,a+b).
    \]
Let us also comment on the efficiency of computing $R'.$ We can see that as long as $R$ is efficiently computable (which is true by fiat), $R'$ should also be efficiently computable, aside from the computation of $-g$ from $g$, i.e. the problem of inverting a group element. Whenever we are given an explicit presentation of $G$ that includes the identity element, we can compute $-g$ efficiently by simply subtracting $g$ from it. In general, however, it may be necessary to compute the order of $g$ with respect to $G.$ As $G$ is abelian, this can always be done in quantum polynomial-time. Therefore, our reduction will be entirely classical whenever group inverses can be efficiently computable classically and otherwise the reduction is quantum.

    Next, let
    \[
        H=\operatorname{Stab}_{R'}(\ket{\Psi})
        =
        \{x\in \Gamma:R'(x)\ket{\Psi}=\ket{\Psi}\}.
    \]
    We claim that $H$ contains an element of the form $(h,1)$ if and only if the
    original PSGI instance is a YES instance.

    For any $h\in G$, we have
    \[
    \begin{aligned}
        \bra{\Psi}R'(h,1)\ket{\Psi}
        &=
        \frac{1}{2}
        \left(
            \bra{\psi_1}R(h)\ket{\psi_2}
            +
            \bra{\psi_2}R(-h)\ket{\psi_1}
        \right)  \\
        &=
        \operatorname{Re}\left(\bra{\psi_1}R(h)\ket{\psi_2}\right).
    \end{aligned}
    \]
    Therefore, if $(h,1)\in H$, then
    \[
        \operatorname{Re}\left(\bra{\psi_1}R(h)\ket{\psi_2}\right)=1.
    \]
    Since both states are normalized, this implies
    \[
        R(h)\ket{\psi_2}=\ket{\psi_1},
    \]
    so the PSGI instance is a YES instance.

    Conversely, if the PSGI instance has perfect completeness, then there exists
    $h\in G$ such that
    \[
        R(h)\ket{\psi_2}=\ket{\psi_1}.
    \]
    Equivalently, $R(-h)\ket{\psi_1}=\ket{\psi_2}$. Thus
    \[
    \begin{aligned}
        R'(h,1)\ket{\Psi}
        &=
        R'(h,0)(X\otimes I)
        \frac{1}{\sqrt 2}
        \left(\ket{0}\ket{\psi_1}+\ket{1}\ket{\psi_2}\right) \\
        &=
        \frac{1}{\sqrt 2}
        \left(
            \ket{0}R(h)\ket{\psi_2}
            +
            \ket{1}R(-h)\ket{\psi_1}
        \right) \\
        &=
        \frac{1}{\sqrt 2}
        \left(\ket{0}\ket{\psi_1}+\ket{1}\ket{\psi_2}\right) \\
        &=
        \ket{\Psi}.
    \end{aligned}
    \]
    Hence $(h,1)\in H$.

    Therefore, after solving StateHSP over $\Gamma$, it suffices to check whether
    the returned subgroup has an element with second component $1$. If the subgroup
    is given by generators, this is equivalent to checking whether at least one
    generator has second component $1$, since the projection
    $\Gamma\to \mathbb Z_2$ is a group homomorphism.
\end{proof}

Importantly, there are several common groups which admit efficient algorithms for StateHSP over $G \ltimes \mathbb{Z}_2$, such as $\mathbb{Z}_2^n$ and the abelianization of the Pauli group.
For a more thorough treatment of groups that admit efficient HSP algorithms, see \cite{friedl2014hidden}, and for algorithms for the StateHSP problem, see \cite{hinsche2025abelian}. Our reduction also implies that one can solve PSGI$[G]$ in subexponential time for groups whose corresponding StateHSP instance can be solved by  Kuperberg's algorithm \cite{kuperberg2005subexponential}. In particular, this yields a subexponential time quantum algorithm for PSGI$[\mathbb{Z}_N]$ as it reduces to StateHSP over the dihedral group, $\mathbb{Z}_N \ltimes \mathbb{Z}_2.$

We can also ``robustify'' this statement to allow for imperfect PSGI completeness and soundness. Formally, we show the following.

\begin{theorem}[Robust reduction from PSGI to approximate StateHSP] \label{thm:robustabelian}
    Let $G$ be a finite abelian group with an efficient unitary representation
    $R:G\to U(2^n)$, and let $\alpha<1$ be a constant. Let $m=m(n)$ be
    polynomially bounded.

    Suppose an instance of PSGI$[G]$ is promised to satisfy one of the following:
    \begin{itemize}
        \item[(YES)] There exists $h\in G$ such that
        \[
            \operatorname{Re}\left(\bra{\psi_1}R(h)\ket{\psi_2}\right)
            \ge \beta = 1-\varepsilon.
        \]
        \item[(NO)] For every $h\in G$,
        \[
            \left|\bra{\psi_1}R(h)\ket{\psi_2}\right|\le \alpha.
        \]
    \end{itemize}

    Then there is an efficient reduction to an approximate StateHSP instance over
    the generalized dihedral group $G\rtimes \mathbb Z_2$ such that:
    \begin{itemize}
        \item in the YES case, there exists an element $(h,1)\in G\rtimes \mathbb Z_2$
        that stabilizes the resulting state up to overlap at least
        \[
            \beta^m = (1-\varepsilon)^m \ge 1-m\varepsilon;
        \]
        \item in the NO case, every element $(h,1)\in G\rtimes \mathbb Z_2$ has
        overlap at most
        \[
            \alpha^m.
        \]
    \end{itemize}

    More generally, if $\varepsilon\le 1/q(n)$ and $m(n)/q(n)\le 1/p(n)$ for some
    polynomial $p$, then the resulting instance has completeness at least
    $1-1/p(n)$ and soundness $\alpha^m$. Choosing $m=\omega(\log n)$ gives
    negligible soundness.
\end{theorem}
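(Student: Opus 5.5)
The plan is to combine the construction from the proof of \cref{thm:abelianreduction} with tensor-power amplification. From the input circuits $C_1, C_2$ we prepare $m$ copies of each state and form
\[
    \ket{\Psi_m} = \frac{1}{\sqrt 2}\left(\ket{0}\ket{\psi_1}^{\otimes m} + \ket{1}\ket{\psi_2}^{\otimes m}\right),
\]
which is efficiently preparable by a controlled application of $C_1^{\otimes m}$ or $C_2^{\otimes m}$ to $\ket{+}\ket{0^{nm}}$. We use the representation $R_m(g) := R(g)^{\otimes m}$ of $G$ on $(\mathbb{C}^{2^n})^{\otimes m}$. It is still efficient, since $m$ is polynomial, and it is a representation of the same abelian group $G$. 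Feeding $R_m$ into the construction of $R'$ from \cref{thm:abelianreduction} gives a representation of $\Gamma = G\rtimes\mathbb Z_2$ on $2^{nm+1}$ dimensions. The verification that this is a representation is identical to the one already given, so it can be quoted. The efficiency caveat about computing $-g$ carries over unchanged.

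The core step is the overlap identity. Repeating the computation in \cref{thm:abelianreduction}, for every $h\in G$ we get
\[
    \bra{\Psi_m}R'(h,1)\ket{\Psi_m} = \tfrac12\left(\bra{\psi_1}R(h)\ket{\psi_2}^m + \overline{\bra{\psi_1}R(h)\ket{\psi_2}}^{\,m}\right) = \operatorname{Re}\left(\bra{\psi_1}R(h)\ket{\psi_2}^m\right).
\]
The conjugate term appears because $\bra{\psi_2}R(-h)\ket{\psi_1} = \overline{\bra{\psi_1}R(h)\ket{\psi_2}}$ by unitarity. Write $z = \bra{\psi_1}R(h)\ket{\psi_2}$.

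In the NO case $|z|\le\alpha$ for every $h$, so $|\operatorname{Re}(z^m)|\le|z|^m\le\alpha^m$, which gives the soundness claim directly.

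The YES case is the one step that needs care, because $\operatorname{Re}(z)\ge 1-\varepsilon$ does not immediately control $\operatorname{Re}(z^m)$ if $z$ has a phase. I would handle it as follows. Write $z = re^{i\theta}$ with $r\le 1$. Then $r\ge 1-\varepsilon$ and $\cos\theta\ge 1-\varepsilon$, so $\theta^2 = O(\varepsilon)$. This gives
\[
    \operatorname{Re}(z^m) = r^m\cos(m\theta) \ge (1-\varepsilon)^m\left(1 - \tfrac{m^2\theta^2}{2}\right).
\]
The phase term costs $O(m^2\varepsilon)$ rather than $m\varepsilon$. So the stated bound $\beta^m$ is literally obtained only for the modulus $|z^m|\ge(1-\varepsilon)^m\ge 1-m\varepsilon$ by Bernoulli, or when $z$ is real. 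My first attempt would therefore interpret ``overlap'' as $|\cdot|$, which is what an approximate StateHSP promise naturally uses, and in that reading the stated $\beta^m$ bound follows directly. As a backup, I would state the real-part version with completeness $1-O(m^2\varepsilon)$. This still suffices under a polynomial condition of the same flavor as $m/q\le 1/p$, just with $m^2$ in place of $m$.

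Finally, for the parameter statement: with $\varepsilon\le 1/q(n)$ and $m/q\le 1/p$, the completeness $1-m\varepsilon\ge 1-1/p(n)$. Taking $m=\omega(\log n)$ with $\alpha$ a constant below $1$ makes $\alpha^m = n^{-\omega(1)}$, which is negligible. The decoding step is unchanged from \cref{thm:abelianreduction}: accept if and only if the returned subgroup has a generator whose $\mathbb Z_2$ component is $1$.
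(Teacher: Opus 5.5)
Your NO-case bound and the parameter bookkeeping are fine. The YES case, however, has a genuine gap, and it comes from your construction rather than your analysis.

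With $\ket{\Psi_m}=\frac{1}{\sqrt2}\bigl(\ket{0}\ket{\psi_1}^{\otimes m}+\ket{1}\ket{\psi_2}^{\otimes m}\bigr)$, the quantity measuring how well $(h,1)$ stabilizes the state is $\bra{\Psi_m}R'(h,1)\ket{\Psi_m}=\operatorname{Re}(z^m)$. This is already a real number. Reading ``overlap'' as a modulus therefore gives $|\operatorname{Re}(z^m)|=r^m|\cos(m\theta)|$, not $|z|^m$, so your primary interpretation does not yield $\beta^m$ either.

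The loss is real. Take $\ket{\psi_2}=e^{i\theta}R(-h)\ket{\psi_1}$ with $\bra{\psi_1}R(g)\ket{\psi_1}=0$ for all $g\neq 0$ (e.g.\ a basis state of the regular representation). Then $z=e^{i\theta}$ for this $h$ and $z=0$ for every other element. With $\theta=\pi/(2m)$ this is a YES instance with $\varepsilon=1-\cos\theta=\Theta(1/m^2)$. The theorem promises overlap at least $1-m\varepsilon=1-O(1/m)$, yet every odd element of $G\rtimes\mathbb Z_2$ has overlap exactly $0$. Similarly, $\varepsilon=1/q$ with $q=n^2$ and $m=n$ satisfies $m/q\le 1/n$, but gives overlap about $\cos\sqrt2\approx 0.16$. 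So your construction only supports your backup bound $1-O(m^2\varepsilon)$, which is strictly weaker than the statement.

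The missing idea is to tensor the whole single-copy reduction rather than the PSGI instance. Take $\ket{\Phi}=\ket{\Psi}^{\otimes m}$, with $\ket{\Psi}$ the state from \cref{thm:abelianreduction}, and use the representation $R'(g,a)^{\otimes m}$ of $G\rtimes\mathbb Z_2$. This is a tensor power of a representation, and it is still efficient for polynomial $m$. The overlap then factorizes:
\[
\bra{\Phi}R'(h,1)^{\otimes m}\ket{\Phi}=\bigl(\bra{\Psi}R'(h,1)\ket{\Psi}\bigr)^m=\bigl(\operatorname{Re} z\bigr)^m .
\]
The real part is taken on each copy before multiplying, so no phase accumulates. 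Since $\operatorname{Re} z\ge 1-\varepsilon\ge 0$, monotonicity of $x\mapsto x^m$ on $[0,1]$ and Bernoulli give $(1-\varepsilon)^m\ge 1-m\varepsilon$, exactly as stated. The NO case is $|\operatorname{Re} z|^m\le|z|^m\le\alpha^m$, as in your argument. This is the paper's proof.

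Your phase observation is correct, but it applies to amplifying the PSGI instance by tensor powers, not to the amplification the theorem needs. The paper pays exactly this price for $m=2$ in the Pauli section, via $\operatorname{Re}(a^2)=2(\operatorname{Re}a)^2-|a|^2$, which costs a factor $4=m^2$ in $\varepsilon$.
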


\begin{proof}
    Using the same notation as in \cref{thm:abelianreduction}, we will consider $m$ copies of the state $\ket{\Psi}$ and an $m$-fold product of the representation $R'.$
    \[
        \ket{\Phi}:=\ket{\Psi}^{\otimes m},
        \qquad
        R'_m(g,a):=R'(g,a)^{\otimes m}.
    \]
    This is still an efficient representation of $\Gamma$, since $m$ is polynomially
    bounded.

    For every $h\in G$, we have
    \[
    \begin{aligned}
        \bra{\Psi}R'(h,1)\ket{\Psi}
        &=
        \frac{1}{2}
        \left(
            \bra{\psi_1}R(h)\ket{\psi_2}
            +
            \bra{\psi_2}R(-h)\ket{\psi_1}
        \right) \\
        &=
        \operatorname{Re}\left(\bra{\psi_1}R(h)\ket{\psi_2}\right).
    \end{aligned}
    \]
    Therefore,
    \[
    \begin{aligned}
        \bra{\Phi}R'_m(h,1)\ket{\Phi}
        &=
        \left(
            \bra{\Psi}R'(h,1)\ket{\Psi}
        \right)^m \\
        &=
        \left(
            \operatorname{Re}\left(\bra{\psi_1}R(h)\ket{\psi_2}\right)
        \right)^m.
    \end{aligned}
    \]

    In the YES case, there exists $h\in G$ such that
    \[
        \operatorname{Re}\left(\bra{\psi_1}R(h)\ket{\psi_2}\right)
        \ge 1-\varepsilon.
    \]
    Hence
    \[
        \bra{\Phi}R'_m(h,1)\ket{\Phi}
        \ge
        (1-\varepsilon)^m.
    \]
    By Bernoulli's inequality,
    \[
        (1-\varepsilon)^m \ge 1-m\varepsilon.
    \]
    Thus the odd element $(h,1)$ approximately stabilizes $\ket{\Phi}$ with
    overlap at least $1-m\varepsilon$.

    In the NO case, for every $h\in G$,
    \[
        \left|\bra{\psi_1}R(h)\ket{\psi_2}\right|\le \alpha.
    \]
    Therefore, for every odd element $(h,1)$,
    \[
    \begin{aligned}
        \left|\bra{\Phi}R'_m(h,1)\ket{\Phi}\right|
        &=
        \left|
            \operatorname{Re}\left(\bra{\psi_1}R(h)\ket{\psi_2}\right)
        \right|^m \\
        &\le
        \left|
            \bra{\psi_1}R(h)\ket{\psi_2}
        \right|^m \\
        &\le
        \alpha^m.
    \end{aligned}
    \]
    Since $\alpha<1$ is constant, $\alpha^m=\operatorname{negl}(n)$ whenever
    $m=\omega(\log n)$.

    Hence the reduction maps PSGI$[G]$ with completeness error $\varepsilon$ and
    constant soundness $\alpha$ to an approximate StateHSP instance over
    $G\rtimes\mathbb Z_2$ with odd-element completeness at least $(1-\varepsilon)^m$
    and odd-element soundness at most $\alpha^m$.
\end{proof}

We end this section by noting that this approximate StateHSP instance is in general efficiently solvable quantumly whenever the ordinary StateHSP instance is efficiently solvable quantumly. This is because, to find $(h, 1)$ one can perform weak Fourier sampling as in ordinary StateHSP. Provided $\varepsilon$ is sufficiently small, this procedure will result in a linear system which can be solved in order to recover $(h, 1).$ Essentially, we can think of $1 - \varepsilon$ as the probability that each equation is correct and as long as $\varepsilon$ is small enough, the entire system will have only correct equations with high probability. We illustrate this in the next section for the specific case of the Pauli group.

\subsection{Pure State Pauli Isomorphism Problem is $\BQP$-complete}
\label{subsec:pauli-bqp-completeness}
The results of the previous two sections allow us to show that PSGI for the $n$-qubit Pauli group is $\mathsf{BQP}$-complete. Formally, we have that

\begin{theorem}[Restatement of \cref{thm:main-pauli-bqp-completenes}]
    Let $\alpha<1$ be a constant and let $\beta=1-o(1/(n\log^2 n))$. Then
    $(\alpha,\beta)$-PSGI$[\mathcal P_n]$ is $\mathsf{BQP}$-complete.
\end{theorem}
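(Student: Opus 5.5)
The statement has two halves. $\BQP$-hardness follows from \cref{thm:bqp-hardness-psgi}, and the real work is $\BQP$ containment.

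\textbf{Hardness.} I would invoke \cref{thm:bqp-hardness-psgi} with $G=\mathcal P_n$. For the Pauli group the reduction can be made deterministic by taking $\ket{\phi}=\ket{T^n}$. Every Pauli maps $\ket{0^n}$ to a computational basis state up to phase, and $|\langle b|T^n\rangle|=2^{-n/2}$, so the soundness condition $\max_P|\bra{\phi}P\ket{0^n}|=o(1)$ holds. Completeness $\beta=1-\exp(-n)$ is stronger than the required $1-o(1/(n\log^2 n))$.

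\textbf{Containment.} I would follow the abelianization route from the technical overview, phrased through \cref{thm:robustabelian}. Since $P^2=\pm I$ and two-copy Paulis commute, the group $\{P^{\otimes 2}\}$ modulo phases is isomorphic to $\mathbb Z_2^{2n}$. Set $\ket{\phi_i}=\ket{\psi_i}^{\otimes 2}$ and take the representation $P\mapsto P^{\otimes 2}$ of $\mathbb Z_2^{2n}$, with phases fixed so that it is a genuine representation; this is possible because $(P^{\otimes 2})^2=I$.

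In the YES case, $\Re\bra{\psi_1}P\ket{\psi_2}\ge\beta$ gives $|\bra{\phi_1}P^{\otimes2}\ket{\phi_2}|\ge\beta^2$. The overlap is now a squared complex number, so its real part may fail to be positive: one gets $\Re(z^2)=(\Re z)^2-(\Im z)^2\ge\beta^2-(1-\beta^2)\ge 1-4\varepsilon$. In the NO case, $|\bra{\phi_1}P^{\otimes 2}\ket{\phi_2}|\le\alpha^2$.

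Because $\mathbb Z_2^{2n}$ has $-g=g$, the generalized dihedral group collapses to $\mathbb Z_2^{2n+1}$. The overview's state $\ket{\Psi}=\frac{1}{\sqrt2}(\ket0\ket{\psi_1}^{\otimes2}+\ket1\ket{\psi_2}^{\otimes2})$ then satisfies $\bra{\Psi}(X\otimes P^{\otimes2})\ket{\Psi}=\Re\bra{\psi_1}^{\otimes2}P^{\otimes2}\ket{\psi_2}^{\otimes2}$. This gives an approximate abelian StateHSP instance over $\mathbb Z_2^{2n+1}$.

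\textbf{Algorithm.} Take $m=O(n)$ rounds of weak Fourier sampling: Bell-basis measurements on the two copies together with a Hadamard on the label qubit. Each round yields a vector $y\in\mathbb Z_2^{2n+1}$ orthogonal to the hidden stabilizer. I would argue these samples are correct with probability at least $1-O(\varepsilon)$ each. Solving the resulting linear system gives a candidate subgroup, from which we extract a candidate odd element $X\otimes Q^{\otimes 2}$. Finally, verify the candidate with $O(1)$ SWAP tests between $\ket{\psi_1}$ and $Q\ket{\psi_2}$, or a Hadamard test on $\Re\bra{\psi_1}Q\ket{\psi_2}$. Accept if the estimate exceeds a threshold strictly between $\alpha$ and $1$. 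Soundness is then automatic: in the NO case, no candidate $Q$ passes the test except with small probability.

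\textbf{Main obstacle.} The hard step is robustness of Fourier sampling under approximate stabilization. I would bound the trace distance between $\ket{\Psi}$ and its projection onto the exact $+1$ eigenspace of the approximate stabilizer $S=X\otimes Q^{\otimes 2}$. Since $\bra{\Psi}S\ket{\Psi}\ge1-4\varepsilon$, the state $\frac{I+S}{2}\ket\Psi$, normalized, has fidelity $1-O(\varepsilon)$ with $\ket\Psi$. Over all $m$ samples the total deviation is $O(m\varepsilon)$.

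Each sample from the ideal state is a uniformly random vector in the annihilator of the true stabilizer group, which contains $S$. That group may be larger than the group generated by $S$, but this does not matter: any element we recover with odd label is checked directly.

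To get independent samples spanning the annihilator with high probability, we need $m=O(n)$ samples for the span, plus $O(\log n)$ repetitions of the procedure to boost success. The total error budget is then $O(n\log n\cdot\varepsilon)$. The $\log^2 n$ slack in $\beta=1-o(1/(n\log^2n))$ leaves room for this budget plus the final Hadamard-test verification, so every step succeeds with probability $1-o(1)$. Making this accounting precise is the main obstacle.
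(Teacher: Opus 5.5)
Your hardness argument matches the paper's. The containment argument, however, fails at the step where you pass to $\{P^{\otimes 2}\}$ \emph{modulo phases}.

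A genuine representation of $\mathbb Z_2^{2n}$ fixes one sign per class, $\rho(v)=\pm P_v^{\otimes 2}$. A YES witness $P=\omega P_v$, however, has $P^{\otimes 2}=\omega^2P_v^{\otimes 2}$ with $\omega^2=\pm1$. So your bound $\Re(z^2)\ge 1-4\varepsilon$ holds for $P^{\otimes 2}$, but for the operator actually in your group you may get $\Re\bra{\phi_1}\rho(v)\ket{\phi_2}\le-(1-4\varepsilon)$. In that case $\ket{\Psi}$ is an approximate $-1$ eigenvector of $X\otimes\rho(v)$, essentially every Fourier sample has $\chi_y((v,1))=-1$, and $(v,1)$ is excluded from the recovered subspace.

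A concrete failure: for any $\alpha\ge 1/\sqrt2$, take $\ket{\psi_2}=\ket{T^n}$ and $\ket{\psi_1}=i\ket{T^n}$. This is a YES instance with witness $iI$. But $\ket{\phi_1}=-\ket{\phi_2}$, so $\ket{\Psi}\propto\ket{-}\ket{\phi_2}$. Since the class $v=0$ must be represented by $I$, you get $f((0,1))=-1$, and $(0,1)$ never survives. Every other odd candidate $Q$ has $|\bra{\psi_1}Q\ket{\psi_2}|=|\bra{T^n}Q\ket{T^n}|\le 1/\sqrt2\le\alpha$ and fails your final test, so you reject. The paper avoids this by keeping the sign: $\mathcal P_n^2=\{P\otimes P:P\in\mathcal P_n\}$ contains $-I$ as a group element, giving $\Gamma\cong\mathbb Z_2^{2n+2}$, so the correctly signed odd element is in the group. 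Alternatively, you could keep your group, solve for $x$ whose characters are constant across samples, $(y_i\oplus y_1)\cdot x=0$, and verify with a phase-insensitive test.

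Your sampling analysis also rests on a false claim: for StateHSP the samples are not uniform on the annihilator of the stabilizer. One has $\Pr[\chi_y(x)=1]=(1+f(x))/2$ with $f(x)=\bra{\Psi}\rho(x)\ket{\Psi}$, and non-stabilizer elements can have any $f(x)\in(-1,1)$. So ``enough samples to span'' is the wrong criterion, and nothing in your argument ensures that the odd element you extract passes the final test in the YES case.

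The paper's argument is simpler than your projection bound. The witness survives all $T$ samples with probability at least $1-2T\varepsilon$ directly, and with $T=C\log|\Gamma|$ a union bound eliminates every $x$ with $f(x)\le\tau$. In your single-copy version you would take a constant $\tau>\alpha^2$. Then every surviving odd element has $|\bra{\psi_1}P\ket{\psi_2}|^2>\tau$ for its underlying Pauli $P$, and a SWAP test separates it from the NO case; this needs only $\varepsilon=o(1/n)$. The paper instead tensors $m=\lceil\log^2 n\rceil$ copies of $\ket{\Psi}$, so NO-case odd overlaps are $\alpha^{2m}=\mathrm{negl}(n)$ and a fixed threshold works. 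That, not ``$O(\log n)$ repetitions'', is where the $\log^2 n$ in $\beta$ comes from.
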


\begin{proof}
    $\BQP$-hardness follows immediately from \cref{thm:bqp-hardness-psgi}. In fact, for the Pauli group the reduction from any problem in $\BQP$ can be made deterministic by picking the $\ket{\phi}$ state in \cref{thm:bqp-hardness-psgi} to be, for instance, $\ket{\phi} = \ket{T^n}.$ This is sufficient, since for every Pauli $P \in \mathcal{P}_n$ it is the case that
    \[
    |\bra{\phi} P \ket{0^n}| = \operatorname{negl}(n).
    \]

    To show containment in $\BQP,$ we will use the reduction to approximate StateHSP given by \cref{thm:robustabelian}. At first glance, it might seem like we cannot make use of that result since the Pauli group is not abelian. However, note that in order to solve PSGI for the Pauli group, it suffices to be able to solve it for the \emph{two-copy} Pauli group. We define the latter as
    \[
        \mathcal{P}^2_n = \{ P \otimes P : P \in \mathcal{P}_n \}.
    \]
    Observe that $\mathcal{P}^2_n$ is abelian, since for any Pauli operators $P, Q \in \mathcal{P}_n$, it is the case that either  $PQ = + QP$ or $PQ = - QP$. But this means that $P^{\otimes 2} Q^{\otimes 2} = Q^{\otimes 2} P^{\otimes 2}.$

    Let $\varepsilon:=1-\beta$, and define
    \[
        \ket{\phi_1}:=\ket{\psi_1}^{\otimes 2},
        \qquad
        \ket{\phi_2}:=\ket{\psi_2}^{\otimes 2}.
    \]
    We first reduce the original Pauli isomorphism instance to an instance over
    $\mathcal P_n^2$. Suppose we are in the YES case, so that there exists
    $P\in \mathcal P_n$ such that
    \[
        \operatorname{Re}\left(\bra{\psi_1}P\ket{\psi_2}\right)\ge \beta.
    \]
    Let $a:=\bra{\psi_1}P\ket{\psi_2}$. Then, for $Q:=P\otimes P\in\mathcal P_n^2$,
    \[
        \bra{\phi_1}Q\ket{\phi_2}
        =
        a^2.
    \]
    Since $|a|\le 1$ and $\operatorname{Re}(a)\ge \beta$, we have
    \[
    \begin{aligned}
        \operatorname{Re}(a^2)
        &=
        2\operatorname{Re}(a)^2-|a|^2  \\
        &\ge
        2\beta^2-1.
    \end{aligned}
    \]
    Thus the two-copy instance has completeness at least
    \[
        \beta_2:=2\beta^2-1.
    \]
    In particular, since $\beta=1-\varepsilon$, then
    \[
        \beta_2
        =
        1-4\varepsilon+2\varepsilon^2
        \ge
        1-4\varepsilon.
    \]

    Conversely, in the NO case, for every $P\in\mathcal P_n$,
    \[
        \left|\bra{\psi_1}P\ket{\psi_2}\right|\le \alpha.
    \]
    Therefore, for every $Q=P\otimes P\in\mathcal P_n^2$,
    \[
        \left|\bra{\phi_1}Q\ket{\phi_2}\right|
        =
        \left|\bra{\psi_1}P\ket{\psi_2}\right|^2
        \le
        \alpha^2.
    \]
    Hence the two-copy instance has soundness at most
    \[
        \alpha_2:=\alpha^2<1.
    \]

    We now apply \cref{thm:robustabelian} to the abelian group
    $\mathcal P_n^2$. Let
    \[
        m:=\lceil \log^2 n\rceil.
    \]
    The reduction produces the state
    \[
        \ket{\Psi}
        =
        \frac{1}{\sqrt 2}
        \left(
            \ket{0}\ket{\phi_1}
            +
            \ket{1}\ket{\phi_2}
        \right),
    \]
    and then takes
    \[
        \ket{\Phi}:=\ket{\Psi}^{\otimes m}.
    \]
    The relevant group is
    \[
        \Gamma:=\mathcal P_n^2\ltimes \mathbb Z_2.
    \]
    Since every element of $\mathcal P_n^2$ has order $2$, the inversion action
    of $\mathbb Z_2$ on $\mathcal P_n^2$ is trivial, and hence
    $\Gamma$ is an elementary abelian $2$-group. In particular,\footnote{Essentially, we're using the symplectic representation of the Paulis plus an additional sign bit.}
    \[
        \Gamma\cong \mathbb Z_2^{2n+2}.
    \]
    Let $\rho$ denote the representation of $\Gamma$ produced by the reduction,
    and let
    \[
        \rho_m(x):=\rho(x)^{\otimes m}
    \]
    be the corresponding representation on $\ket{\Phi}$.

    By the robust reduction, for every $Q\in \mathcal P_n^2$,
    \[
        \bra{\Phi}\rho_m(Q,1)\ket{\Phi}
        =
        \left(
            \operatorname{Re}\left(\bra{\phi_1}Q\ket{\phi_2}\right)
        \right)^m.
    \]
    Therefore, in the YES case there exists an odd element $(Q,1)\in \Gamma$
    such that
    \[
        \bra{\Phi}\rho_m(Q,1)\ket{\Phi}
        \ge
        \beta_2^m
        \ge
        (1-4\varepsilon)^m
        \ge
        1-4m\varepsilon,
    \]
    where the last inequality follows from Bernoulli's inequality. In the NO
    case, every odd element $(Q,1)\in\Gamma$ satisfies
    \[
        \left|\bra{\Phi}\rho_m(Q,1)\ket{\Phi}\right|
        \le
        \alpha_2^m
        =
        \alpha^{2m}.
    \]
    Since $\alpha<1$ is constant and $m=\lceil\log^2 n\rceil$, we have
    \[
        \alpha^{2m}=\operatorname{negl}(n).
    \]

    It remains to explain how to solve this approximate StateHSP instance in
    quantum polynomial time. Since $\Gamma\cong \mathbb Z_2^{2n+2}$, its quantum
    Fourier transform is just the Hadamard transform over $O(n)$ qubits. We use
    the standard Fourier sampling procedure for StateHSP: prepare
    \[
        \frac{1}{\sqrt{|\Gamma|}}\sum_{x\in\Gamma}\ket{x}\ket{\Phi},
    \]
    apply the controlled unitary $\rho_m(x)$, apply the quantum Fourier transform
    over $\Gamma$ to the first register, and measure the first register. This
    produces a character $\chi\in \widehat{\Gamma}$.

    For $x\in \Gamma$, define
    \[
        f(x):=\bra{\Phi}\rho_m(x)\ket{\Phi}.
    \]
    Since $\Gamma$ is an elementary abelian $2$-group, each $\rho_m(x)$ is a
    Hermitian involution, and hence $f(x)\in[-1,1]$. Standard Fourier sampling
    gives
    \[
        \Pr[\chi(x)=1]
        =
        \frac{1+f(x)}{2}.
    \]
    Repeat the Fourier sampling procedure
    \[
        T:=C\log|\Gamma|=O(n)
    \]
    times, for a sufficiently large constant $C$, obtaining characters
    $\chi_1,\ldots,\chi_T$. Let
    \[
        L:=\{x\in \Gamma:\chi_i(x)=1\text{ for all }i=1,\ldots,T\}.
    \]
    Since $\Gamma\cong \mathbb Z_2^{2n+2}$, the set $L$ is a linear subspace and
    can be computed by Gaussian elimination over $\mathbb F_2$.

    In the YES case, let $k=(Q,1)$ be an odd element satisfying
    \[
        f(k)\ge 1-4m\varepsilon.
    \]
    Then one Fourier sample satisfies $\chi(k)=1$ with probability at least
    $1-2m\varepsilon$. Hence
    \[
        \Pr[k\in L]
        \ge
        1-2Tm\varepsilon.
    \]
    Thus, provided
    \[
        Tm\varepsilon=o(1),
    \]
    the true odd approximate stabilizer $k$ survives all the Fourier samples
    with high probability. Since $T=O(n)$ and $m=\lceil\log^2 n\rceil$, it
    suffices to assume, for example, that
    \[
        \varepsilon=o\left(\frac{1}{n\log^2 n}\right).
    \]

    Moreover, any odd element $x$ with $f(x)\le 2/3$ survives one Fourier sample
    with probability at most $5/6$. By a union bound over all odd elements of
    $\Gamma$, and by choosing the constant $C$ in $T=C\log|\Gamma|$ sufficiently
    large, with high probability no odd element $x$ satisfying $f(x)\le 2/3$
    remains in $L$. Therefore, in the YES case, with high probability $L$
    contains an odd element, and every odd element in $L$ has overlap greater
    than $2/3$.

    In the NO case, every odd element $x\in \Gamma$ satisfies
    \[
        |f(x)|\le \alpha^{2m}=\operatorname{negl}(n).
    \]
    In particular, for all sufficiently large $n$,
    \[
        f(x)\le 1/10.
    \]
    Hence any fixed odd element survives all $T$ Fourier samples with probability
    at most
    \[
        \left(\frac{1+1/10}{2}\right)^T.
    \]
    Again by a union bound over all odd elements, and by choosing $C$ sufficiently
    large, with high probability no odd element remains in $L$.

    The algorithm is therefore as follows. After computing $L$, check whether
    $L$ contains an element whose $\mathbb Z_2$ component is $1$. This can be done
    by Gaussian elimination. If no such element exists, reject. If such an odd
    element $x$ exists, estimate
    \[
        f(x)=\bra{\Phi}\rho_m(x)\ket{\Phi}
    \]
    to constant additive accuracy using the Hadamard test, and accept if the
    estimate is at least $1/2$.

    In the YES case, with high probability the algorithm finds an odd element
    $x\in L$ with $f(x)>2/3$, and the Hadamard test accepts. In the NO case, every
    odd element has $|f(x)|\le \alpha^{2m}=\operatorname{negl}(n)$, so the
    Hadamard test rejects except with small probability. Thus the containment
    algorithm has bounded error and runs in quantum polynomial time.

    This proves that $(\alpha,\beta)$-PSGI$[\mathcal P_n]$ is in $\BQP$ for
    constant $\alpha<1$ and $\beta=1-\varepsilon$ satisfying
    $\varepsilon=o(1/(n\log^2 n))$. In particular, this holds whenever
    $\beta\ge 1-1/n^c$ for any constant $c>1$. Together with the $\BQP$-hardness
    shown above, this proves $\BQP$-completeness in this parameter regime.
    
\end{proof}

\subsection{Pure State Clifford Isomorphism Problem is GI-hard}
\label{subsec:cip-gi-hardness}
In this section, we prove that the Pure State Clifford Isomorphism Problem is at least as hard as Graph Isomorphism (GI). 
\begin{theorem} [Restatement of GI hardness in \cref{thm:PSIG}]
    Graph Isomorphism on $n$-vertex graphs reduces to $(0.99999,1)$-PSGI$[\mathcal C_{n+1}]$.
\end{theorem}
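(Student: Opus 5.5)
We follow the construction sketched in the technical overview. Given graphs $G_1,G_2$ on $n$ vertices, we output circuits preparing
\[
\ket{\psi_i}=\tfrac{1}{\sqrt2}\left(\ket{R}\ket{R^n}+\ket{R_-}\ket{G_i}\right),\qquad i=1,2,
\]
with $\ket{R}=\tfrac{1}{\sqrt2}(\ket{0}+e^{i\pi/8}\ket{1})$ and $\ket{R_-}=\tfrac{1}{\sqrt2}(\ket{0}-e^{i\pi/8}\ket{1})$. Both states are efficiently preparable. Put the control qubit in $\ket{+}$, apply a controlled graph-state preparation on the $\ket{1}$ branch and $H$ followed by the single-qubit rotation $\mathrm{diag}(1,e^{i\pi/8})$ on each qubit of the $\ket{0}$ branch, then rotate the control from $\{\ket0,\ket1\}$ to $\{\ket R,\ket{R_-}\}$. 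Since $\braket{R}{R_-}=0$, the two branches are orthogonal and the state is normalized.

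\emph{Completeness.} Suppose $\pi$ is an isomorphism $G_1\to G_2$. Let $D=I\otimes P_\pi$ be the qubit permutation on the last $n$ qubits; it is a Clifford. Then $P_\pi\ket{R^n}=\ket{R^n}$ and $P_\pi\ket{G_1}=\ket{G_2}$, so $D\ket{\psi_1}=\ket{\psi_2}$ exactly. This gives $\Re\bra{\psi_2}D\ket{\psi_1}=1$, matching $\beta=1$.

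\emph{Soundness.} Suppose some $D\in\mathcal C_{n+1}$ has $|\bra{\psi_2}D\ket{\psi_1}|>0.99999$; we aim to derive $G_1\cong G_2$. Expand the overlap into four terms, each of modulus at most $1$. The two cross terms, $\bra{R}\bra{R^n}D\ket{R_-}\ket{G_1}$ and its mirror, have stabilizer-state-to-magic-state form after rewriting. Write $\ket{R_-}$ as a stabilizer-state combination, or bound $|\bra{R}\bra{R^n}D\ket{R_-}\ket{G}|\le |\langle R^n|\sigma\rangle|$ for the reduced state obtained by projecting the control. Then use that the stabilizer fidelity of $\ket{R}^{\otimes n}$ is $F_{\mathrm{stab}}(\ket R)^n=c^n$ with $c<1$ (multiplicativity of stabilizer fidelity for single-qubit products). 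This makes the cross terms $2^{-\Omega(n)}$. The remaining two terms must then each have modulus at least $1-O(10^{-5})$, so in particular $|\bra{R}\bra{R^n}D\ket{R}\ket{R^n}|\ge 1-\delta$.

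The key step, and the main obstacle, is \Cref{lem:perm}. It says that a Clifford approximately stabilizing $\ket{R}^{\otimes(n+1)}$ must, on the magic register, be a qubit permutation up to a local phase/Pauli that fixes $\ket R$ up to phase. I would first try the exact version: the stabilizer group of $\ket R^{\otimes n}$ inside the Clifford group consists of permutations, because $\ket R$ is not an eigenstate of any nontrivial single-qubit Clifford. Then I would obtain robustness from a gap. Since $\mathcal C_{n+1}$ is discrete, the quantity $|\bra{R^{n+1}}D\ket{R^{n+1}}|$ for non-permutation $D$ should be bounded by some constant $<1-\delta$. To show this, I would use Pauli-basis (characteristic function) expansions: $D$ maps the Pauli expectations $\langle P\rangle_{R^{\otimes n}}$, which are products of single-qubit values $\cos(\pi/8),\sin(\pi/8)$, to themselves only if $D$ preserves the weight profile. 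I would combine this with a Cauchy--Schwarz argument to show that any Clifford that mixes qubits nonlocally loses a constant in fidelity. A further subtlety is that $D$ may entangle the control qubit with the rest. To handle this, I would apply the lemma to the full $(n+1)$-qubit $\ket{R^{n+1}}$, which forces $D$ to be a permutation of all $n+1$ qubits.

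Having $D\approx$ permutation $P_\sigma$ (possibly moving the control qubit), finish by casework. If $\sigma$ fixes the control, the $\ket{R_-}$ term reads $|\bra{G_2}P_{\sigma'}\ket{G_1}|\ge 1-O(\delta)$. Distinct graph states satisfy $|\braket{G}{G'}|\le 1/\sqrt2$, since they are stabilizer states with different stabilizer groups. Hence $P_{\sigma'}\ket{G_1}=\ket{G_{\sigma'(1)}}=\ket{G_2}$, i.e.\ $G_1\cong G_2$. If $\sigma$ moves the control into the graph register, the term $\bra{R_-}\bra{G_2}P_\sigma\ket{R_-}\ket{G_1}$ involves a product state in which some qubit of a graph state is paired with $\ket{R_-}$. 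Its modulus is bounded by the stabilizer fidelity of the non-stabilizer state $\ket{R_-}$, which is at most a constant $<1$, contradicting the threshold. Tracking the constants so that $0.99999$ suffices is routine.
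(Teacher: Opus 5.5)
Your construction, completeness argument, exponential bound on the cross terms, reduction to a statement about Cliffords that nearly fix $\ket{R}^{\otimes(n+1)}$, case analysis on where the control qubit is sent, and final step via $|\braket{G_1}{G_2'}|>1/\sqrt2$ all match the paper's proof. The gap is in the one non-routine ingredient, \cref{lem:perm}. You cite it but do not prove it, and the justifications you sketch do not work as stated:
\begin{itemize}
\item That $\ket R$ is not an eigenstate of any nontrivial single-qubit Clifford only rules out \emph{local} Cliffords. It says nothing about entangling Cliffords that fix the product state.
\item Discreteness of $\mathcal C_{n+1}$ gives, for each $n$, some gap that may shrink with $n$. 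The threshold $0.9999$ is fixed, so you need a bound uniform in $n$.
\item ``$D$ preserves the weight profile'' is a statement about exact stabilization. Its robust version, your unspecified Cauchy--Schwarz step, is exactly what has to be proved.
\end{itemize}

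The missing idea is to work with individual Pauli expectations rather than with the fidelity directly. By H\"older and Fuchs--van de Graaf,
$|\bra{R^n}P\ket{R^n}-\bra{R^n}C^\dagger PC\ket{R^n}|\le 2\sqrt{1-|\bra{R^n}C\ket{R^n}|^2}\le 0.02$
for every Pauli $P$, independently of $n$. Since $C^\dagger X_iC$ is a Hermitian Pauli, its expectation on $\ket{R^n}$ lies in $\{0\}\cup\{\pm\cos^a(\pi/8)\sin^b(\pi/8)\}$. The value $\cos(\pi/8)$ is more than $0.02$ away from every other element of this set, which forces $C^\dagger X_iC=X_{\pi(i)}$.

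For $Y_i$ the value $\sin(\pi/8)\approx0.383$ is not isolated in this set: for example, $\cos^{12}(\pi/8)\approx0.387$. So you also need that $C^\dagger Y_iC$ anticommutes with $X_{\pi(i)}$. Since its expectation is near $\sin(\pi/8)>0$, it contains no $Z$, so anticommutation forces a $Y$ at position $\pi(i)$. Any further $Y$ would drop the expectation to at most $\sin^2(\pi/8)\approx0.15$. This leaves the form $Y_{\pi(i)}X_A$ with expectation $\sin(\pi/8)\cos^{|A|}(\pi/8)$, hence $A=\emptyset$. Then $C^\dagger Z_iC=Z_{\pi(i)}$ follows from $Z=-iXY$.

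This shows $C$ is exactly a qubit permutation up to a global phase. That exact conclusion, not ``$D\approx P_\sigma$'', is what your subsequent casework on the control qubit and the graph register actually uses.
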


Our goal is to prove that Graph Isomorphism reduces to the Pure State Clifford Isomorphism Problem. Let $G_1, G_2$ be two graphs. The main idea of our reduction is to construct the following states:
\begin{align*}
    \ket{\psi_1} = \frac{1}{\sqrt2} \ket{R}\ket{R^n} + \frac{1}{\sqrt2}\ket{R_-}\ket{G_1} \\
    \ket{\psi_2} = \frac{1}{\sqrt2} \ket{R}\ket{R^n} + \frac{1}{\sqrt2}\ket{R_-}\ket{G_2}
\end{align*}
where $\ket{G_1},\ket{G_2}$ are the graph states for $G_1, G_2$, respectively, $\ket{R} = \dfrac{1}{\sqrt2}\left( \ket{0} + e^{\pi i / 8}\ket{1} \right)$, $\ket{R_-} = \dfrac{1}{\sqrt2}\left( \ket{0} - e^{\pi i / 8}\ket{1} \right)$, and $\ket{R^n} = \ket{R}^{\otimes n}$. Now if $G_1$ and $G_2$ are isomorphic, then the states are exactly related by a permutation on qubits, which is an element of the Clifford group. So it remains to show that if $\Re(\bra{\psi_1}C\ket{\psi_2}) \geq \beta$ for some Clifford $C$, then $C$ must be a permutation on qubits. We proceed by the following lemma. For convenience we denote the first qubit as the \textit{control} and the remaining qubits as the \textit{target}.

\begin{lem}
    \label{lem:perm}
    Let $C \in \mathcal{C}_n$ be a Clifford unitary on $n$ qubits. If
    \[ \abs{\bra{R^n}C\ket{R^n}}^2 \geq 0.9999 \]
    where $\ket{R} = \frac{1}{\sqrt2}(\ket{0} + e^{i\pi /8}\ket{1})$,
    then $C$ is a permutation on qubits.
\end{lem}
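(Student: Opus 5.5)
The plan is to reduce the global fidelity condition to single-qubit constraints, one for each qubit. Then use the rigid structure of how Cliffords act on Paulis to show that each constraint forces $C$ to act locally as the identity on a permuted qubit.

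\emph{Step 1: from global fidelity to local expectations.} For each qubit $i$, set $A_i := \cos(\pi/8)X_i + \sin(\pi/8)Y_i$. This is a Hermitian involution with $A_i\ket{R^n} = \ket{R^n}$, since $\ket{R}$ is the $+1$ eigenvector of $\cos(\pi/8)X+\sin(\pi/8)Y$. Let $\Pi_i^-$ be the projector onto the $-1$ eigenspace of $A_i$; it is orthogonal to $\ket{R^n}$. Hence
\[
\bra{R^n}C^\dagger \Pi_i^- C\ket{R^n} \le 1-\abs{\bra{R^n}C\ket{R^n}}^2 \le 10^{-4},
\]
and therefore $\bra{R^n}C^\dagger A_i C\ket{R^n}\ge 1-2\cdot 10^{-4}$ for every $i$. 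I use this eigenspace argument instead of a trace-distance bound because a trace-distance bound would only give $1-O(10^{-2})$, and that is too weak for Step 2.

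\emph{Step 2: a finite case analysis on Pauli images.} Because $C$ is Clifford, $C^\dagger X_iC = P$ and $C^\dagger Y_iC = Q$ for signed Paulis $P,Q$ that anticommute. The expectation of a signed Pauli $\pm\bigotimes_j\sigma_j$ in $\ket{R^n}$ is $\pm\cos(\pi/8)^a\sin(\pi/8)^b$ if it contains $a$ factors $X$, $b$ factors $Y$ and no $Z$. It is $0$ otherwise. So we need to show that
\[
\cos(\pi/8)\,e_P+\sin(\pi/8)\,e_Q \ge 1-2\cdot 10^{-4}
\]
forces $P=+X_j$ and $Q=+Y_j$ for a single qubit $j$. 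The argument enumerates cases.
\begin{itemize}
\item If $e_P\le \cos(\pi/8)^2$ (meaning $P$ has weight $\ge 2$, or is $Y_j$, or has a negative sign), then the left side is bounded away from $1$ by an explicit constant. This uses $\cos^2(\pi/8)+\sin^2(\pi/8)=1$ together with at least one strict loss.
\item If $P=+X_j$, then $Q$ must anticommute with $X_j$. The best competitors are $Q=Y_j X_k$, which gives $\cos^2+\sin^2\cos\approx 0.989$, and $Q=-Y_j$ or $Z_j$-containing $Q$, which give much less. Only $Q=+Y_j$ attains $1$.
\end{itemize}
The main obstacle is this step: organizing the enumeration cleanly and checking that the worst competitor (around $0.989$) really falls below $0.9998$. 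I expect the cleanest route is to bound $\cos\, e_P+\sin\, e_Q$ via Cauchy–Schwarz, noting that any loss in $e_P$ or $e_Q$ below the values $\cos,\sin$ on a single qubit costs at least a constant factor like $\cos(\pi/8)$.

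\emph{Step 3: conclude.} Step 2 gives $C^\dagger X_iC=X_{\pi(i)}$ and $C^\dagger Y_iC=Y_{\pi(i)}$, and hence $C^\dagger Z_iC = Z_{\pi(i)}$, for a map $\pi$. The map $\pi$ is injective: if $\pi(i)=\pi(i')$ with $i\neq i'$, then $X_i$ and $Y_{i'}$ would have anticommuting images although they commute. So $\pi$ is a permutation. Let $W_\pi$ be the qubit permutation unitary. Then $W_\pi C^\dagger$ commutes with every Pauli, so it is a scalar, and $C$ equals a qubit permutation up to a global phase, as claimed.
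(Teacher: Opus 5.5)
Your proof is correct, but it is organized differently from the paper's.

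\textbf{The paper's route.} It uses only the generic estimate $\abs{\bra{R^n}P\ket{R^n}-\bra{R^n}C^\dagger PC\ket{R^n}}\le 2\sqrt{1-0.9999}=0.02$ for each Pauli $P$. It then treats $X_i$ and $Y_i$ one after the other:
\begin{itemize}
\item $\cos(\pi/8)-0.02>\cos^2(\pi/8)$ forces $C^\dagger X_iC=X_{\pi(i)}$;
\item $\sin(\pi/8)-0.02>\sin(\pi/8)\cos(\pi/8)$, together with anticommutation with $X_{\pi(i)}$, then forces $C^\dagger Y_iC=Y_{\pi(i)}$.
\end{itemize}

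\textbf{Your route.} You evaluate the local stabilizer $A_i$ of $\ket{R^n}$. This turns the infidelity into a linear loss of $2\cdot10^{-4}$ rather than a square-root loss. The price is that merging $X_i$ and $Y_i$ into one inequality requires a joint case analysis over anticommuting pairs.

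Your remark that a trace-distance bound is too weak is true only for the merged observable ($0.98<0.989$). The paper's sequential treatment shows that trace distance suffices when the two Paulis are handled separately.

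\textbf{What each buys.} The paper's argument is shorter: it needs only two one-dimensional threshold checks. Yours tolerates far more infidelity: roughly $\abs{\bra{R^n}C\ket{R^n}}^2\ge 0.995$ would already suffice, versus about $0.9998$ for the paper's numbers. Your Step 3 is also more careful than the paper's closing sentence, since you prove injectivity of $\pi$ and give the argument that an operator commuting with all Paulis is a scalar.

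\textbf{Closing Step 2.} In your first bullet you cannot just bound $e_Q$ by its maximum $\cos(\pi/8)$, since $\cos^3(\pi/8)+\sin(\pi/8)\cos(\pi/8)>1$. Plain Cauchy--Schwarz with $e_P^2+e_Q^2\le1$ also fails, because the pairs $(Y_j,X_j)$ and $(X_j,-Y_j)$ saturate it.

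The clean form of your ``strict loss'' idea goes as follows.
\begin{itemize}
\item Fix a qubit $j$ where $P$ and $Q$ anticommute locally.
\item Write each expectation as its local factor at $j$ times a remainder $r\in[0,1]$. Note that $r<1$ forces $r\le\cos(\pi/8)$.
\item A local pair $(X,Y)$ with positive signs gives exactly $\cos^2(\pi/8)\,r_P+\sin^2(\pi/8)\,r_Q$.
\item A minus sign or a local $Z$ gives at most $\cos^2(\pi/8)$.
\item The local pair $(Y,X)$ gives at most $2\sin(\pi/8)\cos(\pi/8)$.
\end{itemize}
Hence every pair other than $(+X_j,+Y_j)$ has value at most $\cos^2(\pi/8)+\sin^2(\pi/8)\cos(\pi/8)\approx0.989$. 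This is exactly the competitor you identified.
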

\begin{proof}
    Let $P \in \mathcal{P}_n$ be a Pauli string. By H\"older's inequality and the Fuchs-van de Graaf inequality,
    \begin{align}
        \begin{split}
            \abs{\bra{R^n}P\ket{R^n} - \bra{R^n}C^\dagger P C\ket{R^n}} &=
            \abs{\Tr[P(\ketbra{R^n}{R^n} - C\ketbra{R^n}{R^n}C^\dagger)]} \\
            &\leq \norm{\ketbra{R^n}{R^n} - C\ketbra{R^n}{R^n}C^\dagger}_1 \norm{P}_\infty \\
            &\leq 2\sqrt{1 - \abs{\bra{R^n}C\ket{R^n}}^2} \\
            &\leq 2\sqrt{1 - 0.9999} = 0.02
        \end{split}
    \end{align}

    Let $X_i$ be the $X$ operator on the $i$-th qubit. Evaluating the expectation value yields $\bra{R^n}X_i\ket{R^n} = \cos(\pi/8) \approx 0.9238$. By the reverse triangle inequality and the fact that $\bra{R^n}C^\dagger P C\ket{R^n}$ is real-valued,
    \begin{align}
        \label{eq:R-state-fidelity}
        \cos(\pi/8) - 0.02 \leq \bra{R^n}C^\dagger X_i C\ket{R^n} \leq \cos(\pi/8) + 0.02.
    \end{align}

    Let $P' = C^\dagger X_iC$. Since $C$ is Clifford, $P' \in \mathcal{P}_n$. The inequality requires $\bra{R^n}P'\ket{R^n} \geq 0.9038$. It is easy to check that expectation value of any Pauli string $P'$ on $\ket{R^n}$ is
    \[ \bra{R^n}P'\ket{R^n} = s \cdot (0)^{|P'|_Z} \cdot \left(\cos(\pi/8)\right)^{|P'|_X} \cdot \left(\sin(\pi/8)\right)^{|P'|_Y} \]
    where $s \in \{1, -1\}$. Since $0.9038 > 0$, we require $s = 1$ and $|P'|_Z = 0$. Evaluating the possible configurations for $|P'|_X + |P'|_Y \geq 1$:
    \begin{itemize}
        \item If $|P'|_X = 1$ and $|P'|_Y = 0$, the expectation value is $\cos(\pi/8) \approx 0.9238$.
        \item If $|P'|_X = 2$ and $|P'|_Y = 0$, the expectation value is $\cos^2(\pi/8) \approx 0.8535$.
    \end{itemize}

    Since $0.8535 < 0.9038$, any configuration with weight greater than 1 or containing $Y$ operators strictly decreases the expectation value below the lower bound. Thus, $|P'|_X = 1$, $|P'|_Y = 0$, and $|P'|_Z = 0$. This implies $P' = X_{\pi(i)}$ for some index $\pi(i)$.

    Next, consider $Y_i$, where $\bra{R^n}Y_i\ket{R^n} = \sin(\pi/8) \approx 0.3826$. Let $Q' = C^\dagger Y_iC$. Applying the trace distance bound yields
    \[ \sin(\pi/8) - 0.02 \leq \bra{R^n}Q'\ket{R^n} \leq \sin(\pi/8) + 0.02 \]

    This requires $\bra{R^n}Q'\ket{R^n} \geq 0.3626$. Since $0.3626 > 0$, $Q'$ has a positive phase ($s=1$) and $|Q'|_Z = 0$. Because conjugating by Cliffords preserves commutation relations, $C^\dagger X_i C$ and $C^\dagger Y_i C$ must anticommute:
    \[ X_i Y_i = -Y_i X_i \implies X_{\pi(i)} Q' = -Q' X_{\pi(i)} \]

    For $Q'$ to anticommute with $X_{\pi(i)}$ while containing no $Z$ operators, $Q'$ must contain a $Y$ operator at index $\pi(i)$. Therefore, $Q' = Y_{\pi(i)} \otimes X_A$, where $X_A$ is a tensor product of $X$ operators on a subset of indices $A$ not containing $\pi(i)$. The expectation value of $Q'$ is $\sin(\pi/8)(\cos(\pi/8))^{|A|}$.

    If $|A| \geq 1$, the maximum expectation value is $\sin(\pi/8)\cos(\pi/8) \approx 0.3535$. Since $0.3535 < 0.3626$, we must have $|A| = 0$. Thus, $C^\dagger Y_i C = Y_{\pi(i)}$.

    Finally, consider $Z_i$:
    \[ C^\dagger Z_iC = C^\dagger (-iX_iY_i)C = -i (C^\dagger X_iC)(C^\dagger Y_i C) = -iX_{\pi(i)}Y_{\pi(i)} = Z_{\pi(i)} \]

    Since $C$ acts on all single-qubit Paulis by permuting their indices according to $\pi$, it acts on all basis states over the register by permuting their indices. Therefore, $C$ is a permutation on qubits.
\end{proof}

Now, we will continue with the reduction.
\begin{claim}
    \label{claim:reduction}
    Let $C \in \mathcal{C}_{n+1}$ be an $(n+1)$-qubit Clifford unitary and let
    \begin{align*}
        &\ket{\psi_1} = \frac{1}{\sqrt2}\ket{R,R^n} + \frac{1}{\sqrt2}\ket{R_-,G_1} \\
        & \ket{\psi_2} = \frac{1}{\sqrt2}\ket{R,R^n} + \frac{1}{\sqrt2}\ket{R_-,G_2}.
    \end{align*}
    If \[\abs{\bra{\psi_1}C\ket{\psi_2}} \geq 0.99999,\]
    then for $n \geq n_0$, for a sufficiently large constant $n_0,$ we have that
    \[\abs{\bra{R,R^n}C\ket{R,R^n}}^2 \geq 0.9999.\]
\end{claim}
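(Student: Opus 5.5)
Expand $\bra{\psi_1}C\ket{\psi_2}$ into four terms,
\[
\bra{\psi_1}C\ket{\psi_2} = \tfrac12\big( a + b + c + d \big),
\]
where
\begin{align*}
a &= \bra{R,R^n}C\ket{R,R^n}, & b &= \bra{R,R^n}C\ket{R_-,G_2},\\
c &= \bra{R_-,G_1}C\ket{R,R^n}, & d &= \bra{R_-,G_1}C\ket{R_-,G_2}.
\end{align*}
The plan is to show that the cross terms $b,c$ are exponentially small in $n$. Then, since $|d|\le 1$, the hypothesis $|\bra{\psi_1}C\ket{\psi_2}|\ge 0.99999$ forces $|a|$ to be close to $1$.

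For the cross terms, use the fact that $\ket{R_-}$ is a single-qubit state. Write $\ket{R_-} = \frac{1}{\sqrt2}(\ket{0}-e^{i\pi/8}\ket{1})$, so
\[
\ket{R_-,G_2} = \tfrac{1}{\sqrt2}\big(\ket{0,G_2} - e^{i\pi/8}\ket{1,G_2}\big)
\]
is a superposition of two stabilizer states. Applying $C$ to each gives stabilizer states $\ket{S}$, so $|b| \le \frac{1}{\sqrt2}\sum |\langle R,R^n|S\rangle|$. It then suffices to bound the stabilizer fidelity of the product state $\ket{R}^{\otimes(n+1)}$. Stabilizer fidelity is multiplicative for single-qubit product states (a known result), which gives $\max_S|\langle R^{n+1}|S\rangle|^2 = F_1^{\,n+1}$. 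Here $F_1 = \max_{\text{1-qubit stab}} |\langle R|s\rangle|^2 = \frac{1+\cos(\pi/8)}{2} < 1$. If I prefer to avoid citing multiplicativity, I can use the cruder bound $|\langle \phi|S\rangle|^2 \le 2^{n+1}\max_{P}\ldots$. Cleaner still is the standard bound $|\langle S|\phi\rangle|^2 \le \frac{1}{2^{m}}\sum_{P\in \mathrm{Stab}(S)}|\langle\phi|P|\phi\rangle|$. For $\phi = R^{\otimes m}$, each Pauli expectation factorizes into single-qubit values with absolute value in $\{1,\cos(\pi/8),\sin(\pi/8),0\}$. This gives an exponentially decaying bound of the form $\big(\frac{1+\cos(\pi/8)}{2}\big)^{m}$ after maximizing over the stabilizer group structure. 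I expect this step to be the main technical point. My first attempt will be to cite the multiplicativity of stabilizer fidelity for products of single-qubit states, falling back on the Pauli-sum bound. The same argument handles $c$ by taking adjoints, since $C^\dagger$ is Clifford. This yields $|b|,|c| \le \sqrt2\, \lambda^{n+1}$ for some constant $\lambda<1$.

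Combining the bounds,
\[
0.99999 \le \tfrac12\big(|a| + |d| + |b|+|c|\big) \le \tfrac12\big(|a| + 1\big) + \sqrt2\,\lambda^{n+1},
\]
so $|a| \ge 0.99998 - 2\sqrt2\,\lambda^{n+1}$. Choose $n_0$ so that $2\sqrt2\,\lambda^{n_0+1} \le 10^{-5}$. Then for $n\ge n_0$ we get $|a| \ge 0.99997$, hence $|a|^2 \ge 0.99994 \ge 0.9999$, as claimed. This then feeds into \cref{lem:perm}, applied to the $(n+1)$-qubit state $\ket{R^{n+1}}$, to conclude that $C$ is a qubit permutation.
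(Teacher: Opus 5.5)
Your proposal is correct and takes essentially the same route as the paper: expand $\bra{\psi_1}C\ket{\psi_2}$ into four terms, and kill the two cross terms using the exponentially small stabilizer fidelity of $\ket{R}^{\otimes(n+1)}$, since $C\ket{R_-,G_2}$ and $C^\dagger\ket{R_-,G_1}$ have stabilizer rank $2$; then use $|d|\le 1$ to force $|a|\ge 0.99998-o(1)$. The paper simply asserts the bound $s=\exp(-n)$, and your explicit splitting of $\ket{R_-}$ into $\ket{0},\ket{1}$ with coefficients of modulus $1/\sqrt2$, combined with multiplicativity of stabilizer fidelity for single-qubit product states, makes that step precise.
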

\begin{proof}
    \begin{align*}
        0.99999 &\leq \abs{\bra{\psi_1}C\ket{\psi_2}}  \\ 
        &\leq \frac{1}{2}\abs{(\bra{R,R^n}C\ket{R,R^n} + \bra{R_-,G_1}C\ket{R_-,G_2} + \bra{R,R^n}C\ket{R_-,G_2} + \bra{R_-,G_1}C\ket{R,R^n})} \\
        &\leq \frac{1}{2}(\abs{\bra{R,R^n}C\ket{R,R^n}} + 1 + 2s)
    \end{align*}
    where $s = \exp(-n)$ is the maximum fidelity of $\ket{R,R^n}$ with any stabilizer rank $2$ state, since $\ket{R_-,G_1}, \ket{R_-,G_2}$ have stabilizer rank $2$.
    Thus, 
    \[\abs{\bra{R,R^n}C\ket{R,R^n}} \geq 2 \times 0.99999 -1 - 2s.\]
    Since $s = o(1)$,
    \[\abs{\bra{R,R^n}C\ket{R,R^n}}^2 \geq 0.9999.\]
\end{proof}

Now, combining \cref{claim:reduction} and \cref{lem:perm}, we see that if there exists a Clifford unitary $C$ such that
\[\abs{\bra{\psi_1}C\ket{\psi_2}} \geq 0.99999,\]
then $C$ must be a permutation.

We need one more claim, that $C$ must leave the control qubit invariant.

\begin{claim}
    \label{claim:perm-first-qubit}
    Let $C$ be a permutation on $n+1$ qubits and let
    \begin{align*}
        &\ket{\psi_1} = \frac{1}{\sqrt2}\ket{R,R^n} + \frac{1}{\sqrt2}\ket{R_-,G_1} \\
        & \ket{\psi_2} = \frac{1}{\sqrt2}\ket{R,R^n} + \frac{1}{\sqrt2}\ket{R_-,G_2}.
    \end{align*}
    If \[\abs{\bra{\psi_1}C\ket{\psi_2}} \geq 0.99999,\]
    then $C$ must leave the first qubit invariant.
\end{claim}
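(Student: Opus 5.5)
The plan is to exploit the fact that a qubit permutation leaves $\ket{R^{n+1}}$ exactly invariant. That makes the overlap collapse to a single term, which then only has to be bounded through the first qubit. Write $C = U_\pi$ for a permutation $\pi$ of the $n+1$ qubit positions, with position $0$ the control. Suppose, for contradiction, that $\pi$ moves the control, so that some target qubit $j \geq 1$ is sent to position $0$.

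First, expand $\bra{\psi_1}C\ket{\psi_2}$ into the same four terms as in the proof of \cref{claim:reduction}.
\begin{itemize}
\item Since $\ket{R,R^n} = \ket{R}^{\otimes(n+1)}$ is permutation invariant, the diagonal term satisfies $\bra{R,R^n}C\ket{R,R^n} = 1$.
\item For the cross term $\bra{R,R^n}C\ket{R_-,G_2}$, move $C$ onto the bra, which it leaves invariant. This gives $\braket{R^{n+1}}{R_-,G_2} = \braket{R}{R_-}\braket{R^n}{G_2} = 0$, because $\braket{R}{R_-} = \tfrac12(1 - e^{-i\pi/8}e^{i\pi/8}) = 0$.
\item The other cross term vanishes in the same way.
\end{itemize}
Hence $\bra{\psi_1}C\ket{\psi_2} = \tfrac12\bigl(1 + \bra{R_-,G_1}C\ket{R_-,G_2}\bigr)$ exactly. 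The hypothesis then forces $\abs{\bra{R_-,G_1}C\ket{R_-,G_2}} \geq 2\cdot 0.99999 - 1 = 0.99998$.

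Next, bound this remaining overlap using only the first qubit. By monotonicity of fidelity under partial trace, $\abs{\braket{a}{b}} \leq F(\mathrm{Tr}_{\bar 0}\ketbra{a}{a}, \mathrm{Tr}_{\bar 0}\ketbra{b}{b})$. On the left the reduced state at position $0$ is $\ketbra{R_-}{R_-}$. On the right, $C\ket{R_-,G_2}$ has at position $0$ the reduced state of vertex $j$ of the graph state $\ket{G_2}$. This is a standard graph-state fact: that reduced state is $I/2$ if $j$ has a neighbor in $G_2$, and $\ketbra{+}{+}$ if $j$ is isolated. In the first case the fidelity is $\sqrt{\bra{R_-}I/2\ket{R_-}} = 1/\sqrt2$. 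In the second it is $\abs{\braket{R_-}{+}} = \tfrac12\abs{1 - e^{i\pi/8}} = \sin(\pi/16) \approx 0.195$. Both are far below $0.99998$, which is the desired contradiction, so $\pi(0) = 0$.

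The only step needing care is the identification of the single-qubit marginal of a graph state. It follows from the stabilizers $X_j\prod_{k\in N(j)}Z_k$: if $j$ has a neighbor, every nontrivial single-qubit Pauli on $j$ has zero expectation, so the marginal is maximally mixed; if $j$ is isolated, $X_j$ alone is a stabilizer, so the marginal is $\ketbra{+}{+}$. Everything else is exact algebra. In particular, no appeal to the $\exp(-n)$ stabilizer-fidelity bound is needed here, since the permutation structure kills the cross terms exactly.
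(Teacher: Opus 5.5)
Your proof is correct and has the same structure as the paper's. You isolate $\bra{R_-,G_1}C\ket{R_-,G_2}$, then contradict its near-unit modulus by comparing $\ket{R_-}$ with a one-qubit marginal of a stabilizer state. The paper projects onto $\ket{R_-}$ at position $\pi(0)$ and applies Cauchy--Schwarz on the $G_1$ side; you take the marginal at position $0$ on the $G_2$ side via monotonicity of fidelity. The one genuine improvement is that you cancel the cross terms exactly, using permutation invariance of $\ket{R}^{\otimes(n+1)}$ and $\braket{R}{R_-}=0$, where the paper uses the $\exp(-n)$ stabilizer-fidelity bound. As a result your argument holds for every $n$, not only asymptotically.
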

\begin{proof}
    We proceed similarly to the proof of \cref{claim:reduction},
    \begin{align*}
        0.99999 &\leq \abs{\bra{\psi_1}C\ket{\psi_2}}  \\ 
        &\leq \frac{1}{2}\abs{(\bra{R,R^n}C\ket{R,R^n} + \bra{R_-,G_1}C\ket{R_-,G_2} + \bra{R,R^n}C\ket{R_-,G_2} + \bra{R_-,G_1}C\ket{R,R^n})} \\
        &\leq \frac{1}{2}(\abs{\bra{R_-,G_1}C\ket{R_-,G_2}} + 1 + 2s),
    \end{align*}
    where $s = \exp(-n)$. 
    Thus, 
    \[
        \abs{\bra{R_-,G_1}C\ket{R_-,G_2}} \geq 0.9999
    \]
    Now suppose for the sake of contradiction that $C$ maps the first qubit to position $j \neq 1$. 
    Let $I \otimes \ketbra{R_-}_j$ be the projector onto $\ketbra{R_-}$ on the $j$-qubit. Then we have
    \begin{align*}
        \abs{\bra{R_-,G_1}C\ket{R_-,G_2}} &= \abs{\bra{R_-,G_1}(I \otimes \ketbra{R_-})C\ket{R_-,G_2}} \\
        &\leq \sqrt{\bra{R_-,G_1}(I \otimes \ketbra{R_-})\ket{R_-,G_1}} \\
        &\leq \abs{\bra{G_1}(I \otimes \ketbra{R_-}_j)\ket{G_1}} \\
        &\leq \abs{\bra{R_-}\left(\Tr_{\setminus j}(\ketbra{G_1})\right)\ket{R_-}}.
    \end{align*}
    Now since $\ket{G_1}$ is a stabilizer state the reduced density matrix $\Tr_{\setminus j}(\ketbra{G_1})$ is either a pure stabilizer state, or the maximally mixed state $I/2$. In either case, we obtain our contradiction, since $\ket{R_-}$ has stabilizer fidelity less than $0.9999$. Thus, $C$ must leave the first qubit invariant.
\end{proof}

Finally, we can put it all together. Recall we already have one direction, i.e. if $G_1 \cong G_2$, then there exists a Clifford unitary $C$ such that
\[\abs{\bra{\psi_1}C\ket{\psi_2}} = 1 \geq 0.99999.\] 
So it remains to prove the other direction.

\begin{claim}
    Let
    \begin{align*}
        &\ket{\psi_1} = \frac{1}{\sqrt2}\ket{R,R^n} + \frac{1}{\sqrt2}\ket{R_-,G_1} \\
        & \ket{\psi_2} = \frac{1}{\sqrt2}\ket{R,R^n} + \frac{1}{\sqrt2}\ket{R_-,G_2},
    \end{align*}
    where $\ket{G_1},\ket{G_2}$ are graph states for $G_1,G_2$, respectively.
    If there exists a Clifford unitary $C$ such that
    \[\abs{\bra{\psi_1}C\ket{\psi_2}} \geq 0.99999,\]
    then $G_1 \cong G_2$.
\end{claim}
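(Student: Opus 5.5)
Assume a Clifford $C\in\mathcal C_{n+1}$ with $\abs{\bra{\psi_1}C\ket{\psi_2}}\geq 0.99999$. The plan is to chain the three preceding results: force $C$ to be a qubit permutation fixing the control, and then read off a graph isomorphism from the graph-state overlap.

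\textbf{Step 1: $C$ is a permutation fixing the control.} By \cref{claim:reduction}, for $n\ge n_0$ we get $\abs{\bra{R,R^n}C\ket{R,R^n}}^2\geq 0.9999$. Applying \cref{lem:perm} to the $(n+1)$-qubit product state $\ket{R}^{\otimes(n+1)}=\ket{R,R^n}$ shows that $C$ is a qubit permutation $\pi$. Then \cref{claim:perm-first-qubit} shows that $\pi$ fixes the control qubit. Hence $C = I\otimes P_\sigma$ for a permutation $\sigma\in S_n$ of the target qubits. Instances with $n<n_0$ are constant-size, so GI is solved on them directly and the reduction simply handles them separately.

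\textbf{Step 2: a high-overlap graph-state pair.} As in the proof of \cref{claim:perm-first-qubit}, the cross terms are at most $s=\exp(-n)$ and $\abs{\bra{R,R^n}C\ket{R,R^n}}\le1$. This gives
\[
\abs{\bra{G_1}P_\sigma\ket{G_2}}=\abs{\bra{R_-,G_1}C\ket{R_-,G_2}}\geq 0.9999 .
\]
Since $P_\sigma\ket{G_2}=\ket{\sigma(G_2)}$ is the graph state of the relabeled graph, we obtain two graph states with overlap above $0.9999$.

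\textbf{Step 3: high overlap forces equality.} For graphs $H\neq H'$ on the same vertex set, I will show $\abs{\braket{H}{H'}}\le 1/\sqrt2$. Write $\ket{H}=2^{-n/2}\sum_x(-1)^{q_H(x)}\ket{x}$ with $q_H(x)=\sum_{(i,j)\in E(H)}x_ix_j$. Then
\[
\braket{H}{H'}=2^{-n}\sum_x(-1)^{q(x)},\qquad q=q_H+q_{H'},
\]
where $q$ is a nonzero quadratic form over $\mathbb F_2$ with no linear part. By the standard bias bound for quadratic forms of rank $2r\ge2$, $\abs{2^{-n}\sum_x(-1)^{q(x)}}\le 2^{-r}\le 1/2$. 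Equivalently, $\ket{H}$ and $\ket{H'}$ are distinct stabilizer states, and distinct stabilizer states have overlap at most $1/\sqrt2$. Since $0.9999>1/\sqrt2$, we get $\sigma(G_2)=G_1$, so $G_1\cong G_2$.

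The main obstacle is Step 1, specifically checking that the hypotheses of the earlier results match exactly. \cref{lem:perm} is stated for $n$ qubits and is applied here to $n+1$, which is harmless. We must also confirm that \cref{claim:perm-first-qubit} applies once $C$ is known to be a permutation. Step 3 is routine, but the bound must be stated carefully so that it applies to the absolute value of the overlap, which absorbs any global phase.
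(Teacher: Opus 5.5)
Your proposal is correct and follows the paper's proof. You chain \cref{claim:reduction}, \cref{lem:perm} and \cref{claim:perm-first-qubit} to force $C$ to be a qubit permutation fixing the control, and then conclude from the fact that distinct graph (stabilizer) states have overlap at most $1/\sqrt2$. Your explicit quadratic-form bias bound and your flagging of the $n\ge n_0$ requirement are extra care, not a different route; the paper's proof relies on $n\ge n_0$ only implicitly, through \cref{claim:reduction}.
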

\begin{proof}
    Suppose there exists a Clifford unitary $C \in \mathcal{C}_{n+1}$ such that
    \[\abs{\bra{\psi_1}C\ket{\psi_2}} \geq 0.99999.\]
    By \cref{lem:perm} and \cref{claim:perm-first-qubit}, $C$ must be a permutation on qubits which leaves the control qubit invariant.
    Therefore
    \[C\ket{\psi_2} = \frac{1}{\sqrt2}C\ket{R,R^n}+\frac{1}{\sqrt2}C\ket{R_-,G_2} = \frac{1}{\sqrt2}\ket{R}\ket{R^n}+\frac{1}{\sqrt2}\ket{R_-}\ket{G_2'},\]
    where $G_2'$ is some graph state.
    Now
    \begin{align*}
        0.99999 \leq \abs{\bra{\psi_1}C\ket{\psi_2}} &\leq \frac{1}{2}\abs{(\braket{R,R^n}{R,R^n} + \braket{R_-,G_1}{R_-,G_2'}} \\
        &\leq \frac{1}{2} + \frac{1}{2}|\braket{G_1}{G_2'}|.
    \end{align*}
    %where $s$ is the stabilizer fidelity of $\ket{T^n}$.
    Thus,
    \[\abs{\braket{G_1}{G_2'}} \geq 0.99999.\]
    Now, since $\ket{G_1},\ket{G_2'}$ are stabilizer states with $\abs{\braket{G_1}{G_2'}} > \frac{1}{\sqrt2}$, it must be the case that $\ket{G_1} = \ket{G_2'}$, and thus, $G_1 = G_2'$. Finally, since $G_1 = G_2' \cong G_2$, we have $G_1 \cong G_2$.
\end{proof}

\subsection{Pure State Clifford Isomorphism Problem with Polynomial Stabilizer Rank States}
\label{sec:low-stab-rank-isomorphism}

In this section, we define a fully classical version of the Pure State Clifford Isomorphism Problem by restricting to polynomial stabilizer rank states, which can be efficiently described and simulated classically. We show that this problem is in $\SZK$, but remains GI-hard.

\subsubsection{$\SZK$ containment}
\label{subsubsec:low-stab-rank-in-szk}
In this section, we present a statistical zero-knowledge protocol for the explicit Clifford nonisomorphism problem when states are given as linear combinations of polynomially many stabilizer states. Importantly, we will use the classical shadows procedure as part of the protocol.

\begin{theorem}
    \label{thm:szk}
    When states are given as a linear combination of polynomially many stabilizer states, the Pure State Clifford Isomorphism Problem is in \textsf{SZK} when $\beta = 1$ and $\alpha < 1$ is a constant.
\end{theorem}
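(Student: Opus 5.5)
We will give an $\SZK$ protocol for the complement (Clifford non-isomorphism) and appeal to closure of $\SZK$ under complement, following the template of \cref{thm:qcszk}. The difference is that the verifier must now be classical. We will therefore replace the quantum preparation of classical shadows by a classical polynomial-time \emph{simulation} of the shadow distribution, which is possible because the states are explicit linear combinations of polynomially many stabilizer states. Concretely, let $\ket{\psi_j}=\sum_{k=1}^{\chi} c^{(j)}_k\ket{S^{(j)}_k}$ with $\chi=\poly(n)$. One round of the protocol goes as follows.
\begin{enumerate}
\item The verifier picks $j\in\{1,2\}$ uniformly at random.
\item It samples a uniformly random Clifford $g\in\mathcal C_n$ (possible classically via the symplectic representation) and computes the stabilizer tableaux of $g\ket{S^{(j)}_k}$.
\item It produces $N$ classical-shadow samples of $g\ket{\psi_j}$. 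Each sample draws a fresh random Clifford $U$ and samples a computational-basis outcome $b$ of $Ug\ket{\psi_j}$.
\item It sends the list $(U_i,b_i)_{i\le N}$ to the prover, who answers $j'$, and the verifier accepts iff $j'=j$.
\end{enumerate}

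The key technical step, and the one I expect to be the main obstacle, is showing that the verifier can sample $b$ exactly, or to within negligible total variation distance. Here $Ug\ket{\psi_j}$ is again a $\chi$-term stabilizer superposition. Marginal probabilities $\Pr[b_1\dots b_\ell]=\|(\bra{b_1\dots b_\ell}\otimes I)\,Ug\ket{\psi_j}\|^2$ expand into $\chi^2$ cross terms $\overline{c_k}c_{k'}\bra{S_k}\Pi\ket{S_{k'}}$, where $\Pi$ is a projector onto a partial computational-basis string. Each cross term is an inner product of stabilizer states with a Pauli-diagonal projector, and it is computable exactly, phase included, in polynomial time by standard stabilizer inner-product algorithms. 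Bit-by-bit sampling via these marginals then yields exact samples. I would state this as a lemma, taking care that the coefficients are given to polynomial precision, so that the sampling error is $2^{-\poly(n)}$ and is absorbed into the statistical gap. Note also that we need a distribution over a finite group, and $\mathcal C_n$ modulo phases is finite with $\log|\mathcal C_n|=O(n^2)$. Consequently $N=\Theta(\log|G|)=O(n^2)$ samples suffice.

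Correctness then follows exactly as in \cref{thm:qcszk}.

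In the NO case ($\beta=1$) there is an $h$ with $R(h)\ket{\psi_2}=\ket{\psi_1}$ up to phase. Because $g$ is uniform, the distributions of $g\ket{\psi_1}$ and $g\ket{\psi_2}$ coincide exactly, so the message distributions for $j=1,2$ are identical and the prover succeeds with probability exactly $1/2$. Exact perfect completeness is why we need $\beta=1$ here.

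In the YES case ($|\bra{\psi_1}R(a)\ket{\psi_2}|\le\alpha$ for all $a$), \cref{fact:shadows} with $M=|\mathcal C_n|$ hypotheses and $\eps=(1-\alpha)/3$ lets an unbounded prover estimate all overlaps $|\bra{\psi_1}R(a)\,g\ket{\psi_j}|^2$. These estimates distinguish $j=1$ (some overlap equals $1$) from $j=2$ (all overlaps are at most $\alpha^2$) with probability $1-\gamma$.

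This gives a gap of $1-\gamma$ versus $1/2$ in an honest-verifier protocol. The remaining step is to cast it as a reduction to Statistical Difference, which yields $\SZK$ membership without separately arguing zero knowledge. Take $D_j$ to be the distribution of the verifier's message on $j$, sampled by the efficient classical sampler built above. In the NO case $\|D_1-D_2\|_{TV}=0$ (up to negligible error). In the YES case $\|D_1-D_2\|_{TV}\ge 1-2\gamma$, since the prover's strategy is a distinguisher. Choosing $\gamma$ a small constant meets the Sahai--Vadhan condition $\alpha<\beta^2$, which places non-isomorphism, and hence the problem itself, in $\SZK$.
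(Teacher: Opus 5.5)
Your proposal is correct and runs the same protocol as the paper: the verifier picks $j$ at random and twirls $\ket{\psi_j}$ by a uniformly random Clifford. It sends $O(\log|\mathcal C_n|)$ global-Clifford shadow samples, the message distributions coincide exactly when $\beta=1$, and in the non-isomorphic case an unbounded prover recovers $j$ from the overlap estimates of \cref{fact:shadows}. You are more careful than the paper on two points it leaves implicit: how a classical verifier generates shadows of a polynomial-stabilizer-rank state (bit-by-bit sampling from marginals computed with $\chi^2$ stabilizer inner products), and how zero knowledge is obtained (your reduction to Statistical Difference replaces the paper's brief honest-verifier simulation remark).
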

\begin{proof}
    We will describe an \textsf{SZK} protocol for the non-isomorphism problem. One round of the protocol is as follows, where (V) denotes Verifier and (P) denotes Prover. The prover and verifier will repeat the protocol to amplify soundness. Let $\ket{\psi_0},\ket{\psi_1}$ be the input states given as a linear combination of poly$(n)$ stabilizer states.
    \begin{enumerate}
        \item (V) Pick $j \in \{0,1\}$ uniformly at random, and pick a uniformly random Clifford $D \in \mathcal{C}_n$. 
        \item (V) Prepare $N = O(\log(\frac{M}{\gamma \varepsilon^2}))$ classical shadows for the state $\ket{\psi'} = D\ket{\psi_i}$, where $M$ is twice the number of Clifford unitaries on $n$ qubits. Send the description of the classical shadows to (P).
        \item (P) Send $j' \in \{0,1\}$ to (V).
        \item (V) Accept if and only if $j' = j$.
    \end{enumerate}

    First we will prove that if for all Clifford unitaries $C \in \mathcal{C}_n$, $|\bra{\psi_0}C\ket{\psi_1}|< \alpha$,
    then Verifier accepts with probability at least $(1 - \gamma)^2$. 

    Using the classical shadows, the Prover estimates all fidelities in the Clifford orbit of $\ket{\psi_0}$ and $\ket{\psi_1}$ within additive error $\varepsilon$ with probability at least $(1 - \gamma)^2$. For $\varepsilon < (1 -\alpha)/2$, Prover sends the correct bit, i.e. $j' = j$.

    Now, we will show that if there exists a Clifford unitary $C$ such that $\ket{\psi_0} = C\ket{\psi_1}$, then Verifier accepts with probability at most $1/2$.

    Let $\mathcal{D}_0$ and $\mathcal{D}_1$ be the distributions of $\ket{\psi'}$ when $j=0$ and $j=1$, respectively. 
    For $j=0$, $\mathcal{D}_0$ is uniformly random over the set 
    \[\{C\ket{\psi_0} \textrm{ s.t. } C \in \mathcal{C}_n\}\]
    and for $j=1$, $\mathcal{D}_1$ is uniformly random over the set 
    \[\{C\ket{\psi_1} \textrm{ s.t. } C \in \mathcal{C}_n\}.\]
    Since $\ket{\psi_0} = D\ket{\psi_1}$,
    \[\{C\ket{\psi_0} \textrm{ s.t. } C \in \mathcal{C}_n\} = \{CD\ket{\psi_1} \textrm{ s.t. } C \in \mathcal{C}_n\} = \{C\ket{\psi_1} \textrm{ s.t. } C \in \mathcal{C}_n\}.\]
    Thus, $\mathcal{D}_0 = \mathcal{D}_1$. Finally, since step (2) is the same regardless of $j$, the distribution of Verifier's message to Prover is the same regardless of $j$.

    Last, we will show that the protocol is zero-knowledge. To simulate Prover's messages for a YES instance, that is, when $\ket{\psi_0}$ and $\ket{\psi_1}$ are not isomorphic, Verifier sends the correct bit with probability $(1-\gamma)^2$. 
\end{proof}

\subsubsection{Graph Isomorphism-hardness}
In this section, we show that the isomorphism problem remains GI-hard even for polynomial stabilizer rank states.

We follow a similar recipe as in \cref{subsec:cip-gi-hardness}, but with a slight modification. Let $G_1, G_2$ be two graphs. We construct the following states:
\begin{align*}
    & \ket{\psi_1} = a \ket{M} + b_1\ket{G_1} \\
    & \ket{\psi_2} = a \ket{M} + b_2\ket{G_2},
\end{align*}
where $\ket{G_1}, \ket{G_2}$ are the graph states for $G_1, G_2$, respectively and $a = 1 - \frac{1}{\Theta(n^{22})}$, $b_1,b_2 = \dfrac{1}{\Theta(n^{11})}$ are chosen such that $\ket{\psi_1},\ket{\psi_2}$ are normalized quantum states. We define $\ket{M}$ over $n$ qubits as the normalized symmetric superposition of all basis states containing exactly two $\ket{R}$ states and $n-2$ computational zero states. %Let $S_2$ be the set of all index pairs $(i, j)$ such that $1 \le i < j \le n$. 
Then
\begin{align*}
    \ket{M} = c_n \sum_{1 \leq i< j\leq n} \ket{R_i R_j 0 \dots 0},
\end{align*}
where $c_n = \frac{1}{\Theta(n^2)}$ is a normalization constant (note that the elements of the sum are not orthogonal). Because $\ket{R}$ has stabilizer rank $2$, $\ket{M}$ is a linear combination of $\binom{n}{2}$ stabilizer-rank-$4$ states and therefore has stabilizer rank at most $2n(n-1)$.

To prove the theorem we will need some claims and lemmas.

\begin{claim}
    \label{claim:clifford-diagonal}
    Let $C$ be a Clifford unitary such that $C^\dagger Z_i C = Z_{\pi(i)}$ for all $i \in [n]$ and some permutation $\pi \in S_n$. Then $C = U_\pi D$ where $U_\pi$ is the permutation on qubits specified by $\pi$ and $D$ is a diagonal Clifford unitary.
\end{claim}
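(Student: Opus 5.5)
The plan is to strip off the permutation and show that what remains is diagonal. Set $D := U_\pi^\dagger C$. Since $U_\pi$ is a qubit permutation, it is a Clifford unitary, so $D$ is Clifford as well. The target is to prove that $D$ commutes with every $Z_i$. Once we have that, $D$ is diagonal in the computational basis: it commutes with every diagonal Pauli, and these generate the full algebra of diagonal matrices, whose commutant is itself. Rearranging $D = U_\pi^\dagger C$ then gives $C = U_\pi D$, which is the claim.

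For the commutation, we first fix the permutation convention. Take $U_\pi$ acting by $U_\pi^\dagger Z_i U_\pi = Z_{\pi(i)}$; if the paper's convention turns out to be the opposite, we simply use $\pi^{-1}$ in defining $U_\pi$. Then
\[
D^\dagger Z_{\pi(i)} D = C^\dagger U_\pi Z_{\pi(i)} U_\pi^\dagger C = C^\dagger Z_i C = Z_{\pi(i)} .
\]
Here we used $U_\pi Z_{\pi(i)} U_\pi^\dagger = Z_i$, which is just the inverse of the convention above, and then the hypothesis on $C$. As $i$ ranges over $[n]$, $\pi(i)$ covers every index, so $D^\dagger Z_j D = Z_j$ for all $j$. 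Equivalently, $D$ commutes with all $Z_j$.

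Diagonality of $D$ can also be seen directly. Each computational basis state $\ket{x}$ is the unique joint eigenvector of $Z_1,\dots,Z_n$ with eigenvalues $(-1)^{x_j}$. Because $D$ commutes with each $Z_j$, the vector $D\ket{x}$ lies in that same one-dimensional joint eigenspace, so $D\ket{x} = e^{i\theta_x}\ket{x}$. Finally, $D$ is Clifford as a product of Cliffords, so $D$ is a diagonal Clifford unitary, as required.

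The only delicate step is fixing the convention for $U_\pi$ so that the indices match; beyond that, every step is routine.
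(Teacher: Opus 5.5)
Your proof is correct and follows the paper's argument: set $D = U_\pi^\dagger C$, verify $D^\dagger Z_j D = Z_j$ for every $j$, and conclude that $D$ is diagonal. Your convention $U_\pi^\dagger Z_i U_\pi = Z_{\pi(i)}$ is the same as the paper's $U_\pi Z_i U_\pi^\dagger = Z_{\pi^{-1}(i)}$, so no adjustment is needed, and your remarks that $D$ is Clifford and that $D\ket{x} \propto \ket{x}$ just spell out steps the paper leaves implicit.
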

\begin{proof}
    Let $D = U_\pi^\dagger C$. Then for all $i \in [n]$,
    \begin{align*}
        D^\dagger Z_i D &= C^\dagger (U_\pi Z_i U_\pi ^\dagger) C \\
        &= C^\dagger Z_{\pi^{-1}(i)}C \\
        &= Z_i.
    \end{align*}
    Since, $D^\dagger Z_i D = Z_i$ for all $i$, $D$ commutes with all $Z_i$ and therefore $D$ is diagonal.
\end{proof}

\begin{claim}
    \label{claim:hamming-weight-1}
    Let $\ket{e_k} = \ket{0^{k-1}10^{n-k-1}} \in (\mathbb{C}^2)^{\otimes n}$ be the computational basis state with a one at the $k$-th coordinate of the tensor product and zeros elsewhere. Then 
    \[
        \abs{\braket{e_k}{M}} = \dfrac{1}{\Theta(n)}.
    \]
\end{claim}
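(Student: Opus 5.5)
We compute $\braket{e_k}{M}$ directly by expanding each summand. For a fixed pair $i<j$, the term $\ket{R_i R_j 0\dots 0}$ is a product state with $\ket{R}$ on qubits $i,j$ and $\ket{0}$ elsewhere. Its overlap with the product basis state $\ket{e_k}$ factorises qubit by qubit. Since $\braket{0}{R} = 1/\sqrt2$, $\braket{1}{R} = e^{i\pi/8}/\sqrt2$, $\braket{0}{0}=1$ and $\braket{1}{0}=0$, the overlap is nonzero only when $k \in \{i,j\}$. Otherwise qubit $k$ carries $\ket{0}$ while $\ket{e_k}$ has a $1$ there. When $k\in\{i,j\}$, the overlap equals $\frac{e^{i\pi/8}}{\sqrt2}\cdot\frac{1}{\sqrt2} = \frac{e^{i\pi/8}}{2}$.

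Exactly $n-1$ pairs $(i,j)$ contain $k$, and each contributes the same phase. Hence
\[
\braket{e_k}{M} = c_n (n-1)\frac{e^{i\pi/8}}{2}, \qquad \abs{\braket{e_k}{M}} = \frac{c_n(n-1)}{2}.
\]
The claim thus reduces to pinning down the normalization constant $c_n$. The paper only states $c_n = 1/\Theta(n^2)$, which would give $1/\Theta(n)$ immediately.

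The main point to verify is that normalization estimate, since the summands are not orthogonal. I would compute $\norm{\sum_{i<j}\ket{R_iR_j0\dots0}}^2 = \sum_{(i,j),(i',j')} \langle R_iR_j0|R_{i'}R_{j'}0\rangle$. Each inner product is a product of single-qubit overlaps: $\braket{R}{R}=1$, $\braket{0}{0}=1$, and $|\braket{0}{R}|^2 = 1/2$ on positions where exactly one of the two pairs has $\ket{R}$. The value of a term therefore depends only on $|\{i,j\}\cap\{i',j'\}|$, giving $1$, $1/2$ or $1/4$.

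There are $\Theta(n^4)$ disjoint pairs of pairs contributing $1/4$, $\Theta(n^3)$ contributing $1/2$, and $\Theta(n^2)$ contributing $1$. The squared norm is therefore $\Theta(n^4)$, so $c_n = 1/\Theta(n^2)$. This matters: the naive orthogonal-sum guess $\Theta(n^2)$ would give $c_n=1/\Theta(n)$ instead.

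Substituting back gives $\abs{\braket{e_k}{M}} = \Theta(n)\cdot 1/\Theta(n^2) = 1/\Theta(n)$, as required.
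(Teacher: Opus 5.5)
Your proof is correct and follows the paper's argument: the overlap factorizes, only the $n-1$ pairs containing $k$ contribute (each $e^{i\pi/8}/2$), and $c_n = 1/\Theta(n^2)$ then gives $1/\Theta(n)$. Your check that $\|\sum_{i<j}\ket{R_iR_j0\cdots0}\|^2=\Theta(n^4)$ fills in a normalization the paper only asserts, and your cross-term values $1,\tfrac12,\tfrac14$ are right, though strictly each position in the symmetric difference of the two pairs contributes a factor $\braket{0}{R}=1/\sqrt2$ (not $1/2$), with $0$, $2$ or $4$ such positions.
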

\begin{proof}
    Let $\ket{R_{i,j}} = \ket{R_i R_j 0\dots 0}$. 
    For any such state with $k \notin \{i,j\}$, $\braket{e_k}{R_{i,j}} = 0$,
    and for any such state with $k \in \{i,j\}$, 
    \[
        \braket{e_k}{R_{i,j}} = \dfrac{e^{\pi i/8}}{2}.
    \]
    There are $\Theta(n)$ such choices of $\{i,j\}$ such that $k \in \{i,j\}$. Thus
    \[
        \braket{e_k}{M} = c_n \Theta(n)\dfrac{e^{\pi i/8}}{2} = \dfrac{e^{\pi i/8}}{\Theta(n)},
    \]
    so
    \[
        \abs{\braket{e_{k}}{M}} = \dfrac{1}{\Theta(n)}.
    \]
\end{proof}

\begin{claim}
    \label{claim:hamming-weight-2}
    Let
$\ket{e_{k,\ell}} \in (\mathbb C^2)^{\otimes n}$
denote the computational basis state whose bit string has \(1\)'s in positions \(k\) and \(\ell\), and \(0\)'s elsewhere, where \(k \neq \ell\). Then 
    \[
    \abs{\braket{e_{k,\ell}}{M}} = \dfrac{1}{\Theta(n^2)}.
    \]
\end{claim}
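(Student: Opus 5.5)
Since $\ket{M}$ and $\ket{e_{k,\ell}}$ are both given explicitly, I will prove the claim by computing $\braket{e_{k,\ell}}{M}$ term by term, following the proof of \cref{claim:hamming-weight-1}. Write $\ket{R_{i,j}} = \ket{R_i R_j 0\dots 0}$ for $i<j$; this is a product state with $\ket{R}$ on qubits $i$ and $j$ and $\ket{0}$ elsewhere. Then
\[
\braket{e_{k,\ell}}{M} = c_n \sum_{1\le i<j\le n} \braket{e_{k,\ell}}{R_{i,j}} .
\]
The overlap of two product states factorizes over the qubits. On any qubit $m\notin\{i,j\}$ the factor is $\braket{x_m}{0}$, where $x_m$ is the $m$-th bit of $e_{k,\ell}$. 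This vanishes whenever $x_m=1$. Hence a term is nonzero only if the support $\{k,\ell\}$ of $e_{k,\ell}$ lies inside $\{i,j\}$. Because $k\neq\ell$, this forces $\{i,j\}=\{k,\ell\}$, so exactly one unordered pair contributes.

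For that single surviving term, each of the two qubits carrying $\ket{R}$ contributes $\braket{1}{R} = e^{i\pi/8}/\sqrt2$. Every other qubit contributes $\braket{0}{0}=1$. This gives
\[
\braket{e_{k,\ell}}{M} = c_n \cdot \frac{e^{i\pi/4}}{2},
\qquad\text{so}\qquad
\abs{\braket{e_{k,\ell}}{M}} = \frac{c_n}{2}.
\]

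The only step needing care is confirming that $c_n = 1/\Theta(n^2)$, which the paper asserts without proof. I will check it by expanding $\|\sum_{i<j}\ket{R_{i,j}}\|^2 = \sum_{\{i,j\},\{i',j'\}} \braket{R_{i,j}}{R_{i',j'}}$. Each overlap is a product of single-qubit factors: $\braket{R}{R}=1$, $\braket{0}{0}=1$, and $\braket{R}{0}=\braket{0}{R}=1/\sqrt2$. So the pairwise overlaps take the values $1$, $1/2$ or $1/4$ according to whether the two pairs share $2$, $1$ or $0$ indices. All overlaps are positive constants, and there are $\Theta(n^4)$ pairs of pairs. Therefore the squared norm is $\Theta(n^4)$, and $c_n = \Theta(n^{-2})$, as claimed. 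Combining this with the computation above gives $\abs{\braket{e_{k,\ell}}{M}} = 1/\Theta(n^2)$.
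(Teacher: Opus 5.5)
Your proof is correct and follows the paper's argument: only the pair $\{i,j\}=\{k,\ell\}$ contributes, giving $\braket{e_{k,\ell}}{M}=c_n e^{i\pi/4}/2$. Your explicit check that $\|\sum_{i<j}\ket{R_{i,j}}\|^2=\Theta(n^4)$, using that all pairwise overlaps lie in $\{1,\tfrac12,\tfrac14\}$, is a useful addition, since the paper only asserts $c_n=1/\Theta(n^2)$.
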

\begin{proof}
    Let $\ket{R_{i,j}} = \ket{R_i R_j 0\dots 0}$. 
    For any such state with $k \notin \{i,j\}$, $\braket{e_k}{R_{i,j}} = 0$,
    and for any such state with $k,\ell \in \{i,j\}$, 
    \[
        \braket{e_{k,l}}{R_{i,j}} = \dfrac{e^{\pi i/4}}{2}.
    \]
    There is one such choice of $\{i,j\}$ such that $k,\ell \in \{i,j\}$. Thus
    \[
        \braket{e_{k,\ell}}{M} = c_n \dfrac{e^{\pi i/4}}{2} = \dfrac{e^{\pi i/4}}{\Theta(n^2)},
    \]
    so
    \[
    \abs{\braket{e_{k,\ell}}{M}} = \dfrac{1}{\Theta(n^2)}.
    \]
\end{proof}

\begin{claim}
For every \(n\)-qubit Pauli string
$P$ that contains at least one \(X\) or \(Y\), we have
\[
\abs{\bra M P \ket M}=O\!\left(\frac1n\right).
\]
\label{claim:X-Y-overlap}
\end{claim}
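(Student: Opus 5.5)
We will prove the bound by writing $\ket M$ explicitly in the computational basis and splitting off its dominant $\ket{0^n}$ component. Let $\theta=\pi/8$. Each term expands as
\[
\ket{R_iR_j0\cdots0}=\tfrac12\left(\ket{0^n}+e^{i\theta}\ket{e_i}+e^{i\theta}\ket{e_j}+e^{2i\theta}\ket{e_{i,j}}\right).
\]
Summing over the $N=\binom n2$ pairs gives
\[
\ket M=\frac{c_n}{2}\Big(N\ket{0^n}+(n-1)e^{i\theta}\sum_k\ket{e_k}+e^{2i\theta}\sum_{k<\ell}\ket{e_{k,\ell}}\Big).
\]
So $\ket M$ is supported on strings of Hamming weight at most $2$. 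By \cref{claim:hamming-weight-1} and \cref{claim:hamming-weight-2}, the weight-one amplitudes are $\Theta(1/n)$ and the weight-two amplitudes are $\Theta(1/n^2)$.

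Write $\ket M=A_0\ket{0^n}+\ket w$, where $\ket w$ collects the weight-one and weight-two components. Then
\[
\norm{w}^2=n\cdot\Theta(1/n^2)+\binom n2\cdot\Theta(1/n^4)=O(1/n),
\]
and $|A_0|\le 1$ by normalization. We never need the exact value of $c_n$: the claims above already fix the orders of magnitude, and normalization gives $|A_0|\le1$.

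Now write the Pauli string as $P=\omega X^{x}Z^{z}$ with $x\in\{0,1\}^n$. The hypothesis that $P$ contains an $X$ or $Y$ means exactly $x\neq0$. Such a $P$ maps each basis state $\ket b$ to a phase times $\ket{b\oplus x}$. Expanding,
\[
\bra MP\ket M=|A_0|^2\bra{0^n}P\ket{0^n}+A_0^\ast\bra{0^n}P\ket w+A_0\bra wP\ket{0^n}+\bra wP\ket w .
\]
We bound the four terms in turn.
\begin{itemize}
\item The first term vanishes, since $P\ket{0^n}\propto\ket{x}$ and $x\neq0^n$.
\item The last term satisfies $|\bra wP\ket w|\le\norm w^2=O(1/n)$, because $P$ is unitary.
\item Each cross term is, up to a unit-modulus phase, the amplitude of $\ket w$ (or of $P\ket w$) on the single basis state $\ket{x}$. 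That amplitude is at most the largest amplitude of $\ket M$ on a nonzero string, which is $O(1/n)$ by \cref{claim:hamming-weight-1} and \cref{claim:hamming-weight-2}.
\end{itemize}
Combining these with $|A_0|\le1$ gives $|\bra MP\ket M|=O(1/n)$.

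There is no serious obstacle. The only point requiring care is to note that the dominant $\ket{0^n}$ component cannot contribute a diagonal term once $x\neq0$, so every surviving contribution involves at least one small amplitude.
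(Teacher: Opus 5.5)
Your proof is correct, but it takes a different route from the paper's.

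The paper never expands $\ket M$ in the computational basis. It works with the unnormalized, non-orthogonal sum $\ket\psi=\sum_{j<k}\ket{R_jR_k0\cdots0}$ and writes $\bra\psi P\ket\psi$ as a sum over index quadruples $(i_1,j_1,i_2,j_2)$. It then fixes a site $a$ with $P_a\in\{X,Y\}$ and notes that a summand vanishes unless $a\in\{i_1,j_1,i_2,j_2\}$, because $\bra0P_a\ket0=0$. Finally it bounds each of the $O(n^3)$ surviving summands by $1$ and divides by $\norm{\psi}^2=\Theta(n^4)$.

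You instead pass to the orthogonal decomposition $\ket M=A_0\ket{0^n}+\ket w$, with $\ket w$ supported on Hamming weights $1$ and $2$. You then use that $P$ sends $\ket{0^n}$ to a multiple of $\ket{x}\neq\ket{0^n}$, so every surviving term carries at least one small amplitude.

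The paper's argument is shorter and needs no amplitude bookkeeping. Yours makes visible where the $1/n$ scale comes from: the cross term between $\ket{0^n}$ and the single-excitation amplitude at $\ket{x}$. Your explicit expansion also certifies on its own the only normalization fact either proof needs, namely $c_n=O(1/n^2)$, since $|A_0|=c_n\binom n2/2\le 1$. So you need not lean on \cref{claim:hamming-weight-1} and \cref{claim:hamming-weight-2}.

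One small wording slip: $\bra{0^n}P\ket w$ is the amplitude of $P\ket w$ on $\ket{0^n}$, not on $\ket{x}$. Equivalently, up to a phase, it is the amplitude of $\ket w$ on $\ket{x}$, so the bound is unaffected.
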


\begin{proof}
Let $\ket{\psi} = \sum_{1 \leq j < k \leq n} |R_j R_k 0\ldots 0\rangle $, then $\ket{M} = \frac{\ket{\psi}}{\|\psi\|}$ and
\[
\bra M P\ket M
=
\frac{\bra \Psi P\ket \Psi}{\|\Psi\|^2}.
\]
Expanding the numerator,
\[
\bra \psi P\ket \psi
=
\sum_{\substack{i_1<j_1\\ i_2<j_2}}
\bra{R_{i_1}R_{j_1}0\ldots 0}
P
\ket{R_{i_2}R_{j_2}0\ldots 0}.
\]
Choose a site \(a\) such that \(P_a\in\{X,Y\}\). If
\[
a\notin \{i_1,j_1,i_2,j_2\},
\]
then the local contribution at site \(a\) is
\[
\bra 0 P_a\ket 0=0.
\]
Hence a summand can be nonzero only if \(a\in\{i_1,j_1,i_2,j_2\}\). There are only \(O(n^3)\) such choices of
\((i_1,j_1,i_2,j_2)\), and each summand has absolute value at most \(1\). Therefore
\[
|\bra \psi P\ket \psi|=O(n^3).
\]
On the other hand,
\[
\|\psi\|^2
=1/c_n^2 = \Theta(n^4).
\]
 Thus
\[
\left|\bra M P\ket M\right|
=
\frac{|\bra \psi P\ket \psi|}{\|\psi\|^2}
=
O\!\left(\frac{n^3}{n^4}\right)
=
O\!\left(\frac1n\right).
\]
\end{proof}

\begin{lem}
    \label{lem:perm2}
    Let $C \in \mathcal{C}_n$ be a Clifford unitary on $n$ qubits. If
    \[\Re(\bra{M}C\ket{M}) \geq 1 - \dfrac{1}{\Theta(n^{11})},\]
    then $C$ is a permutation on qubits.
\end{lem}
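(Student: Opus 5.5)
Since $\ket{M}$ is not a product state, the plan is to mimic the structure of \cref{lem:perm} while working with Pauli expectation values of $\ket{M}$. Set $\delta := 1-\Re(\bra{M}C\ket{M}) = 1/\Theta(n^{11})$. Then $|\bra{M}C\ket{M}|^2 \ge 1-2\delta$. By the same H\"older/Fuchs--van de Graaf argument used in \cref{lem:perm}, every Pauli $P$ satisfies
\[
\bigl|\bra{M}P\ket{M}-\bra{M}C^\dagger P C\ket{M}\bigr| \le 2\sqrt{2\delta} = O(n^{-5.5}).
\]

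\textbf{Step 1: $Z$'s are permuted.} Each $Z_i$ has $\bra{M}Z_i\ket{M} = 1-\Theta(1/n)$, since $\ket{M}$ is mostly $\ket{0}$ on site $i$; the $\ket{1}$-weight on site $i$ is of order $1/n$ by the computation behind \cref{claim:hamming-weight-1}. So $P' := C^\dagger Z_i C$ must also have expectation $1-O(1/n)$. By \cref{claim:X-Y-overlap}, any Pauli containing an $X$ or $Y$ has expectation $O(1/n)$, so $P'$ is a signed $Z$-string $\pm Z_S$. I will expand $\bra{M}Z_S\ket{M}$ to second order in $1/n$, using that the $\ket{1}$-weights are roughly independent across sites with small pairwise correlations, as in \cref{claim:hamming-weight-2}. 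This should give $1-\Theta(|S|/n)$, with a negative sign excluded. Because the $\Theta$-constants differ between $|S|=1$ and $|S|\ge2$, and the perturbation is only $O(n^{-5.5})$, this forces $|S|=1$. Hence $C^\dagger Z_i C = Z_{\pi(i)}$, and \cref{claim:clifford-diagonal} gives $C = U_\pi D$ with $D$ a diagonal Clifford.

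\textbf{Step 2: $D$ is the identity up to phase.} Since $\ket{M}$ is permutation symmetric, $\bra{M}C\ket{M} = \bra{M}U_\pi D\ket{M} = \bra{M}D\ket{M}$. A diagonal Clifford acts as $D\ket{x} = \omega\, i^{\ell(x)}(-1)^{q(x)}\ket{x}$, with $\ell$ linear and $q$ quadratic. The amplitudes of $\ket{M}$ sit on Hamming weights $0$, $1$ and $2$, with total weights $\Theta(1)$, $\Theta(1/n)$ and $\Theta(1/n^2)$ respectively, using \cref{claim:hamming-weight-1,claim:hamming-weight-2}. The condition $\Re\bra{M}D\ket{M} \ge 1-n^{-11}$ then requires the phases on every individual basis state $\ket{e_k}$ and $\ket{e_{k,\ell}}$ to agree with the phase on $\ket{0^n}$. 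Any disagreement would cost at least a constant times $|\braket{e_{k,\ell}}{M}|^2 = \Theta(n^{-4})$, which is much larger than $n^{-11}$. Matching phases on all strings of weight at most $2$ determines the linear part $\ell$ and the quadratic part $q$ completely, so they must be trivial. Therefore $D$ is a global phase, and that phase must be $\approx 1$ because the real part is close to $1$. It follows that $C$ is a permutation on qubits.

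\textbf{Main obstacle.} I expect Step 1 to be the hardest part. It needs a careful second-order expansion of $\bra{M}Z_S\ket{M}$, accounting for the non-orthogonal terms in $\ket{M}$, to separate $|S|=1$ from $|S|\ge 2$ with an explicit constant gap. An alternative route, if the expansion gets messy, is to bound $\bra{M}Z_S\ket{M}$ directly by counting which of the $\binom{n}{2}$ branches flip sign under $Z_S$.
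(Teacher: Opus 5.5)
Your proposal is correct and follows essentially the same route as the paper. It uses the H\"older/Fuchs--van de Graaf transfer, then \cref{claim:X-Y-overlap} plus a count on $Z$-strings to get $C^\dagger Z_i C = Z_{\pi(i)}$, then \cref{claim:clifford-diagonal} to write $C = U_\pi D$, and finally the weight-$1$ and weight-$2$ amplitudes of $\ket{M}$ to force $D$ to be trivial; you are also a bit more careful than the paper about the global phase of $D$.

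One harmless slip: \cref{claim:hamming-weight-1} bounds the \emph{amplitude} on $\ket{e_k}$, so the $\ket{1}$-weight on a site is $\Theta(n^{-2})$, not of order $1/n$. Computing exactly in the computational basis gives $\bra{M}Z_S\ket{M} = 1 - 2|S|A - 2|S|(n-|S|)B$ with $A = |\braket{e_k}{M}|^2 = \Theta(n^{-2})$ and $B = |\braket{e_{k,\ell}}{M}|^2 = \Theta(n^{-4})$. This directly yields a $\Theta(n^{-2})$ gap between $|S|=1$ and $|S|\ge 2$, still far above the $O(n^{-5.5})$ perturbation, with no independence heuristic or second-order expansion needed.
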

\begin{proof}
    First note that $\abs{\bra M C \ket M} \geq \Re(\bra M C \ket M)$ and furthermore that $(1- \dfrac{1}{\Theta(n^{11})})^2 = 1- \dfrac{1}{\Theta(n^{11})}$.
    Let $P \in \mathcal{P}_n$ be a Pauli string. By Hölder's inequality and the Fuchs-van de Graaf inequality,
    \begin{align}
        \begin{split}
            \abs{\bra{M}P\ket{M} - \bra{M}C^\dagger P C\ket{M}} &=
            \abs{\Tr[P(\ketbra{M}{M} - C\ketbra{M}{M}C^\dagger)]} \\
            &\leq \norm{\ketbra{M}{M} - C\ketbra{M}{M}C^\dagger}_1 \norm{P}_\infty \\
            &\leq 2\sqrt{1 - \abs{\bra{M}C\ket{M}}^2} \\
            &\leq 2\sqrt{\frac{1}{\Theta(n^{11})}} < \frac{1}{n^{5}},
        \end{split}
    \end{align}
    for large enough $n$.
    
    Let $Z_i$ be the $Z$ operator on the $i$-th qubit. Its expectation value on $\ket{M}$ is $\bra{M}Z_i\ket{M} = 1 - \frac{1}{\Theta(n^2)}$. Let $P' = C^\dagger Z_i C$. Since $C$ is Clifford, $P' \in \mathcal{P}_n$. By reverse triangle inequality,
    \[ 1 - \frac{1}{\Theta(n^2)} - \frac{1}{n^5} \leq \bra{M}P'\ket{M} \leq 1 - \frac{1}{\Theta(n^2)} + \frac{1}{n^5}. \]
    For large enough $n$, the lower bound forces $P'$ to contain no $X$ or $Y$ operators; see \cref{claim:X-Y-overlap} below. 
    Furthermore, it is easy to verify that if $P'$ contains more than one $Z$ operator, the bound will be violated.
    Thus $P' = Z_{\pi(i)}$ for some index $\pi(i)$, meaning $C^\dagger Z_i C = Z_{\pi(i)}$.

    Now, by \cref{claim:clifford-diagonal}, $C = U_\pi D$ where $U_\pi$ is a permutation on qubits and $D$ is a diagonal Clifford. So
    \begin{align*}
        \Re(\bra{M}C\ket{M}) = \Re(\bra{M}U_\pi D\ket{M}) = \Re(\bra{M}D\ket{M}),
    \end{align*}
    since $\ket{M}$ is invariant under permuting qubits. Furthermore, since $D$ is a diagonal Clifford,
    \[
        D \ket{x} = f(x) \ket{x},
    \]
    where $f$ is a quadratic phase polynomial which takes values in $\{1,i,-1,-i\}$ and is uniquely specified by its action on the Hamming weight $1$ and $2$ strings. We will show that $f(x) = 1$ for all $x$.
    Suppose for the sake of contradiction $f(x) \neq 1$ for some $|x| \in \{1,2\}$. Then 
    \begin{align*}
        \Re(\bra M D \ket M) \leq 1 - \abs{\braket{M}{x}}^2 \leq 1 - \dfrac{1}{\Theta(n^4)},
    \end{align*}
    by \cref{claim:hamming-weight-1} and \cref{claim:hamming-weight-2}. But this contradicts our assumption.
    Thus, $D=I$ is the identity, and so $C = U_\pi$.
\end{proof}

We now proceed with the reduction.

\begin{claim}
    \label{claim:reduction2}
    Let $C \in \mathcal{C}_{n}$ be a Clifford unitary and let
    \begin{align*}
        &\ket{\psi_1} = a\ket{M} + b_1\ket{G_1} \\
        &\ket{\psi_2} = a\ket{M} + b_2\ket{G_2},
    \end{align*}
    where $\ket{G_1}$ and $\ket{G_2}$ are graph states and $a,b_1,b_2$ are real such that $a = 1 - \frac{1}{\Theta(n^{22})}$, $b_1 = \frac{1}{\Theta(n^{11})}$, $b_2 = \frac{1}{\Theta(n^{11})}$, and $\ket{\psi_1}$ and $\ket{\psi_2}$ are valid quantum states. 
    If \[\Re(\bra{\psi_1}C\ket{\psi_2}) \geq 1 - \dfrac{1}{\Omega(n^{22})},\]
    then
    \[\Re(\bra{M}C\ket{M}) \geq 1 - \dfrac{1}{\Omega(n^{11})}\]
    for large enough $n$.
\end{claim}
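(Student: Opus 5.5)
We expand $\Re(\bra{\psi_1}C\ket{\psi_2})$ bilinearly and bound each term separately. Since $a,b_1,b_2$ are real,
\[
\Re(\bra{\psi_1}C\ket{\psi_2}) = a^2\,\Re(\bra{M}C\ket{M}) + a b_2\,\Re(\bra{M}C\ket{G_2}) + a b_1\,\Re(\bra{G_1}C\ket{M}) + b_1 b_2\,\Re(\bra{G_1}C\ket{G_2}).
\]
We bound each cross term crudely by its coefficient, using only $|\langle\cdot|C|\cdot\rangle|\le 1$. This gives
\[
\Re(\bra{\psi_1}C\ket{\psi_2}) \le a^2\,\Re(\bra{M}C\ket{M}) + a(b_1+b_2) + b_1 b_2 .
\]
Rearranging with the hypothesis yields
\[
a^2\,\Re(\bra{M}C\ket{M}) \ge 1 - \frac{1}{\Omega(n^{22})} - a(b_1+b_2) - b_1b_2 .
\]
Since $b_1,b_2 = 1/\Theta(n^{11})$ and $a\le 1$, the dominant loss is $a(b_1+b_2) = O(n^{-11})$. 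The remaining losses $b_1b_2 = O(n^{-22})$ and $1/\Omega(n^{22})$ are of lower order.

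Next we divide by $a^2$. Because $a = 1 - 1/\Theta(n^{22})$, we have $a^{-2} = 1 + O(n^{-22})$. Also, the right-hand side is at most $1$ and positive for large $n$, so dividing by $a^2\le 1$ can only help: $\Re(\bra{M}C\ket{M}) \ge a^{-2}(1 - O(n^{-11})) \ge 1 - O(n^{-11})$. This is exactly the form $1 - 1/\Omega(n^{11})$ claimed, holding for all sufficiently large $n$.

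The only point needing care is the asymptotic bookkeeping: the cross terms are what limit us to $n^{-11}$ rather than $n^{-22}$. If a sharper bound were needed, one could try exploiting the small overlap between $\ket{M}$ and the Clifford orbit of a graph state to shrink $\bra{M}C\ket{G_2}$. For the stated claim, however, the trivial bound $|\bra{M}C\ket{G_i}|\le 1$ suffices. The output matches the hypothesis $\Re(\bra{M}C\ket{M}) \ge 1 - 1/\Theta(n^{11})$ of \cref{lem:perm2}, with a suitable choice of hidden constants.
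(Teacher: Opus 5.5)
Your proposal is correct and follows essentially the same route as the paper: expand $\Re(\bra{\psi_1}C\ket{\psi_2})$ bilinearly, bound the three non-$\ket{M}$ terms trivially, and rearrange. The paper writes the cross-term loss as $3b$ with $b=\max\{b_1,b_2\}$ rather than $a(b_1+b_2)+b_1b_2$, and it silently treats $a^2$ as $1$ when rearranging, whereas you handle the division by $a^2$ explicitly.
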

\begin{proof}
    By the triangle inequality on the expanded inner product,
    \begin{align*}
        1 - \dfrac{1}{\Omega(n^{22})} \leq \Re(\bra{\psi_1}C\ket{\psi_2}) &= \Re(a^2\bra{M}C\ket{M} + b_1b_2\bra{G_1}C\ket{G_2} + ab_2\bra{M}C\ket{G_2} + ab_1\bra{G_1}C\ket{M})) \\
        &\leq a^2\Re(\bra{M}C\ket{M}) + 3b,
    \end{align*}
    where $b = \max\{b_1,b_2\} = \frac{1}{\Theta(n^{11})}$.
    Rearranging yields
    \begin{align*}
        \Re(\bra{M} C \ket{M}) \geq 1 - \dfrac{1}{\Omega(n^{22})} - \dfrac{1}{\Theta(n^{11})} = 1 - \dfrac{1}{\Omega(n^{11})},
    \end{align*} 
    as desired.
\end{proof}

Combining \cref{claim:reduction2} and \cref{lem:perm2}, if there exists a Clifford unitary $C \in \mathcal{C}_n$ such that $\Re(\bra{\psi_1}C\ket{\psi_2}) \geq 1 - \frac{1}{\Omega(n^{22})}$, then $C$ must be a permutation.

\begin{claim}
    \label{claim:reduction-ab-relationship}
    Let $a,b_1,b_2$ be defined as in \cref{claim:reduction2}.
    Then 
    \[
        \dfrac{1- a^2}{b_1b_2} \geq 1 - o(1).
    \]
\end{claim}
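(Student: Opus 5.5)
Since the statement only concerns the three real coefficients $a,b_1,b_2$, the plan is to use the one genuine constraint on them: $\ket{\psi_1}$ and $\ket{\psi_2}$ must be unit vectors. The asymptotic orders $a = 1-\frac{1}{\Theta(n^{22})}$ and $b_1,b_2=\frac{1}{\Theta(n^{11})}$ from \cref{claim:reduction2} only serve to show that certain error terms are lower order. The normalization condition for $\ket{\psi_k}=a\ket{M}+b_k\ket{G_k}$ reads
\[
1 = a^2 + b_k^2 + 2ab_k\,\Re\braket{M}{G_k},
\]
so $1-a^2 = b_k^2 + 2ab_k\Re\braket{M}{G_k}$ for $k=1,2$. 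The proof then reduces to two steps: bounding the cross term $\braket{M}{G_k}$, and comparing $b_1$ with $b_2$.

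\textbf{Step 1 (cross term).} I expect this to be the main obstacle. The aim is $|\braket{M}{G_k}| = o(1)$, which makes $1-a^2 = b_k^2(1+o(1)/b_k)$. This alone is not enough, because $b_k = \Theta(n^{-11})$ is tiny: the cross term $2ab_k\Re\braket{M}{G_k}$ is negligible relative to $b_k^2$ only if $|\braket{M}{G_k}| = o(b_k) = o(n^{-11})$. So a stronger bound is needed. A graph state $\ket{G}=\frac{1}{2^{n/2}}\sum_x (-1)^{q(x)}\ket{x}$ has all amplitudes of modulus $2^{-n/2}$. The state $\ket{M}$ has $\ell_1$-mass concentrated on strings of Hamming weight at most $2$, with only $O(n^2)$ such strings, each of amplitude $O(1)$. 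Hence $|\braket{M}{G}|\le 2^{-n/2}\,\|M\|_{\ell_1}=O(n^2 2^{-n/2})$, which is exponentially small and in particular $o(n^{-11})$. Here $\|M\|_{\ell_1}$ denotes the sum of the moduli of the computational-basis amplitudes of $\ket{M}$.

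\textbf{Step 2 (relating $b_1$ and $b_2$).} Subtracting the two normalization identities gives $b_1^2 - b_2^2 = -2a(b_1\Re\braket{M}{G_1} - b_2\Re\braket{M}{G_2})$. The right-hand side is $O(b\,n^2 2^{-n/2})$ with $b=\max\{b_1,b_2\}$. Therefore $|b_1-b_2| = O(n^2 2^{-n/2})$, so $b_1/b_2 = 1+o(1)$. Alternatively, since the reduction chooses these coefficients, one could simply fix $a$ first and solve the normalization equation for each $b_k$. It still has to be checked that the two solutions agree up to a $1+o(1)$ factor, which is exactly this computation.

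\textbf{Step 3 (combine).} From the normalization identity for $k=1$,
\[
\frac{1-a^2}{b_1b_2}=\frac{b_1^2+2ab_1\Re\braket{M}{G_1}}{b_1b_2}=\frac{b_1}{b_2}+\frac{2a\Re\braket{M}{G_1}}{b_2}.
\]
By Step 2 the first term is $1-o(1)$. By Step 1 the second term is $O(n^{13}2^{-n/2})=o(1)$. Together these give $\frac{1-a^2}{b_1b_2}\ge 1-o(1)$, which is the claim.
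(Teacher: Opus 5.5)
Your proof is correct and follows the paper's approach. Both use the normalization of $\ket{\psi_k}$ together with the exponentially small overlap between $\ket{M}$ and any graph state (flat amplitudes $2^{-n/2}$ against an $\ket{M}$ supported on $O(n^2)$ strings), and both compare the cross term against $b_k^2=\Theta(n^{-22})$. The only difference is that the paper skips your Step 2 by assuming without loss of generality that $b_1\ge b_2$, so that $\frac{1-a^2}{b_1b_2}\ge\frac{1-a^2}{b_1^2}=1-\epsilon_1/b_1^2$. Your explicit comparison $b_1/b_2=1+o(1)$ is not needed for the one-sided bound, but it does give the stronger two-sided estimate $1\pm o(1)$.
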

\begin{proof}
    First, we can see that $a^2 + b_1^2 \approx 1$ and $a^2 + b_2^2 \approx 1$ by showing that $\ket M$ and $\ket G$ are approximately orthogonal for any graph state $\ket{G}$.
    In particular, since any graph state is equally supported on all $2^n$ computational basis states and $\ket{M}$ is supported on $\Theta(n^2)$ computational basis states, 
    $\abs{\braket{M}{G}}^2$ is exponentially small in $n$.
    Thus, 
    \begin{align*}
        & a^2 + b_1^2 = 1 + \eps_1, \\
        & a^2 + b_2^2 = 1 + \eps_2,
    \end{align*}
    where $|\eps_1| \leq \exp(-n)$ and $|\eps_2| \leq \exp(-n)$.
    WLOG let $b_1 \geq b_2$. Now 
    \begin{align*}
        \dfrac{1 - a^2}{b_1b_2} \geq \dfrac{1 - a^2}{b_1^2}  = \dfrac{b_1^2 - \eps_1}{b_1^2} = 1 - o(1),
    \end{align*}
    as desired.
\end{proof}

\begin{claim}
    Let $\ket{\psi_1}, \ket{\psi_2}$ be defined as above. If there exists a Clifford unitary $C$ such that
    \[\Re(\bra{\psi_1}C\ket{\psi_2}) \geq 1 - \dfrac{1}{\Omega(n^{23})},\]
    then $G_1 \cong G_2$.
\end{claim}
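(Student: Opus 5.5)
We follow the same strategy as in \cref{subsec:cip-gi-hardness}: first show that $C$ must be a qubit permutation, then read off a graph isomorphism from the overlap of the graph-state components.

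\textbf{Step 1: $C$ is a permutation.} The hypothesis $\Re(\bra{\psi_1}C\ket{\psi_2}) \geq 1 - \frac{1}{\Omega(n^{23})}$ is stronger than the hypothesis of \cref{claim:reduction2}. That claim gives $\Re(\bra{M}C\ket{M}) \geq 1 - \frac{1}{\Omega(n^{11})}$. By \cref{lem:perm2}, $C = U_\pi$ for some $\pi \in S_n$. Since $\ket{M}$ is permutation-symmetric, $U_\pi\ket{M} = \ket{M}$. Also $U_\pi\ket{G_2} = \ket{G_2'}$, where $G_2' = \pi(G_2)$ is a graph isomorphic to $G_2$.

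\textbf{Step 2: expand the overlap exactly.} With $C = U_\pi$ and $a, b_1, b_2$ real,
\[
\Re(\bra{\psi_1}C\ket{\psi_2}) = a^2 + b_1 b_2 \Re\braket{G_1}{G_2'} + a b_2 \Re\braket{M}{G_2'} + a b_1 \Re\braket{G_1}{M}.
\]
The proof of \cref{claim:reduction-ab-relationship} shows that $|\braket{M}{G}|$ is exponentially small for every graph state $\ket{G}$. Hence the two cross terms contribute at most $b\cdot\exp(-\Omega(n))$. Rearranging gives
\[
b_1 b_2 \Re\braket{G_1}{G_2'} \geq 1 - a^2 - \frac{1}{\Omega(n^{23})} - \exp(-\Omega(n)).
\]

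\textbf{Step 3: divide by $b_1 b_2 = \Theta(n^{-22})$.} This yields
\[
\Re\braket{G_1}{G_2'} \geq \frac{1-a^2}{b_1 b_2} - O\!\left(\frac{n^{22}}{n^{23}}\right) \geq 1 - o(1),
\]
where the last inequality uses \cref{claim:reduction-ab-relationship}. Two stabilizer states with overlap greater than $1/\sqrt{2}$ must be equal up to phase. For graph states this forces equality of the adjacency structure, so $G_1 = G_2' \cong G_2$.

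The main obstacle is the bookkeeping in Step 3. The signal $b_1 b_2 \Re\braket{G_1}{G_2'}$ has size only $\Theta(n^{-22})$, so every error term must be $o(n^{-22})$. This is exactly why the threshold is tightened from $n^{22}$ to $n^{23}$: it makes the slack $n^{22}/n^{23} = o(1)$. The cross terms are harmless because they are exponentially small. The normalization errors $\eps_1, \eps_2$ are already absorbed into \cref{claim:reduction-ab-relationship}. If needed, the hidden constants can be fixed so that $1 - a^2 \geq b_1 b_2 (1 - o(1))$.
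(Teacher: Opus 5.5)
Your proposal is correct and follows the paper's proof essentially step for step. \cref{claim:reduction2} and \cref{lem:perm2} force $C=U_\pi$; expanding the overlap kills the cross terms via the exponentially small $\abs{\braket{M}{G}}$; dividing by $b_1b_2=\Theta(n^{-22})$ and invoking \cref{claim:reduction-ab-relationship} gives $\braket{G_1}{G_2'}\geq 1-o(1)$, hence $G_1=G_2'\cong G_2$. The only cosmetic difference is that you keep real parts throughout, whereas the paper first passes to $\abs{\bra{\psi_1}C\ket{\psi_2}}$ and bounds absolute values.
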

\begin{proof}
    By \cref{lem:perm2} and \cref{claim:reduction2}, $C$ is a permutation on qubits. Therefore,
    \[C\ket{\psi_2} = aC\ket{M}+b_2C\ket{G_2} = a\ket{M}+b_2\ket{G_2'},\]
    where $\ket{G_2'} = C\ket{G_2}$ is the graph state for some graph $G_2'$. Expanding  yields
    \begin{align*}
        1 - \dfrac{1}{\Omega(n^{23})} \leq \abs{\bra{\psi_1}C\ket{\psi_2}} &= \abs{a^2\braket{M}{M} + b_1b_2\braket{G_1}{G_2'} + ab_2\braket{M}{G_2'} + ab_1\braket{G_1}{M}} \\
        &\leq a^2 + b_1b_2\abs{\braket{G_1}{G_2'}} + (ab_2 + ab_1)s,
    \end{align*}
    where $s$ is the maximum fidelity of $\ket{M}$ with any graph state. Since graph states are equally supported on all $2^n$ computational basis states and $\ket{M}$ has support at most $\dfrac{1}{\Theta(n^2)}$ on any basis state (\cref{claim:hamming-weight-1},\cref{claim:hamming-weight-2}), $s = \exp(-n)$. Rearranging and dropping the $s$ terms yields
    \begin{align*}
        \abs{\braket{G_1}{G'_2}} &\geq \dfrac{1}{b_1b_2} \left(1 - \dfrac{1}{\Omega(n^{23})} - a^2 \right) \\
        &\geq \dfrac{1-a^2}{b_1b_2} - \dfrac{1}{\Omega(n)} \\
        &\geq 1 - o(1),
    \end{align*}
    where we used the fact that $b_1b_2 = \dfrac{1}{\Theta(n^{22})}$ and \cref{claim:reduction-ab-relationship}. Finally, 
    because $\ket{G_1}$ and $\ket{G_2'}$ are stabilizer states with $\abs{\braket{G_1}{G'_2}} \geq 1 - o(1)$, we have $\ket{G_1} = \ket{G_2'}$ and so $G_1 = G_2'$. Since $G_1 = G_2'$ and $G_2' \cong G_2$, we have $G_1 \cong G_2$, as desired.
\end{proof}

\section{Mixed State Group Isomorphism Problems}
\label{sec:mixedstates}
In this section, we study the complexity of Mixed State Group Isomorphism Problems. \cite{lockhart2017quantum} showed that for the symmetric group, this problem is $\QSZK$-hard, but left open whether one can prove tight upper bounds for this problem. We show that not only is this problem $\QSZK$-complete, in fact the Mixed State Group Isomorphism Problem is $\QSZK$-complete for \textit{all} finite groups.

\subsection{$\QSZK$ containment of Mixed State Group Isomorphism}
In this section, we prove that Mixed State Group Isomorphism is in $\QSZK$. We will reduce to the well-known $\QSZK$-complete problem Quantum State Distinguishability \cite{watrous2002quantum}.

We will also use the following claim.
\begin{claim}
    \label{claim:twirls}
    Let $\rho,\sigma$ be two density matrices, and
    let $\mathcal{S}$ be a set of unitaries such that 
    \begin{align*}
        \forall U,V \in \mathcal{S}, F(V\rho V^\dagger,U\sigma U^\dagger) \leq \eps.
    \end{align*}
    Furthermore let $\mathcal{E}$ be the twirling channel for $\mathcal{S}$. That is,
    \begin{align*}
        \mathcal{E}(\rho) = \frac{1}{\abs{\mathcal{S}}}\sum_{U\in \mathcal{S}} U \rho U^\dagger.
    \end{align*}
    Then 
    \begin{align*}
        F(\mathcal{E}(\rho),\mathcal{E}(\sigma)) \leq \eps \abs{\mathcal{S}}.
    \end{align*}
\end{claim}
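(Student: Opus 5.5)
The plan is to write $\mathcal{E}(\rho)=\frac{1}{|\mathcal S|}\sum_V \rho_V$ and $\mathcal{E}(\sigma)=\frac{1}{|\mathcal S|}\sum_U \sigma_U$, where $\rho_V := V\rho V^\dagger$ and $\sigma_U := U\sigma U^\dagger$. We then bound the square-root fidelity $F(A,B)=\Tr|\sqrt{A}\sqrt{B}| = \|\sqrt{A}\sqrt{B}\|_1$ by splitting the sums term by term.

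\textbf{First step: a subadditivity bound.} The first step is to show that fidelity is subadditive in each argument, in the form
\[
\Bigl\|\sqrt{\textstyle\sum_i A_i}\,\sqrt{B}\Bigr\|_1 \le \sum_i \bigl\|\sqrt{A_i}\sqrt{B}\bigr\|_1 .
\]
Note that $\|\sqrt{A}\sqrt{B}\|_1 = \Tr\sqrt{\sqrt{B}A\sqrt{B}}$. Hence the left-hand side equals $\Tr f\bigl(\sum_i \sqrt{B}A_i\sqrt{B}\bigr)$ with $f(x)=\sqrt{x}$. This is exactly where Rotfel'd's inequality enters: for a concave $f:[0,\infty)\to[0,\infty)$ with $f(0)=0$ and positive semidefinite $X_i$, one has $\Tr f(\sum_i X_i)\le \sum_i \Tr f(X_i)$. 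Taking $X_i=\sqrt{B}A_i\sqrt{B}\succeq 0$ gives the displayed bound. By symmetry of $F$ the same holds in the second argument. I expect this to be the only nontrivial step; everything else is bookkeeping. If Rotfel'd were unavailable, I would instead try to derive the two-term case from the variational formula $F(A,B)=\max_W|\Tr(W\sqrt{A}\sqrt{B})|$ together with the operator concavity of the square root, but that route is less clean.

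\textbf{Second step: apply to both twirls.} Using homogeneity $F(cA,dB)=\sqrt{cd}\,F(A,B)$ and applying the subadditivity in both arguments,
\[
F(\mathcal{E}(\rho),\mathcal{E}(\sigma)) \le \frac{1}{|\mathcal S|}\sum_{V,U\in\mathcal S} F(\rho_V,\sigma_U) \le \frac{1}{|\mathcal S|}\,|\mathcal S|^2\,\eps = \eps|\mathcal S|.
\]
The middle inequality uses the hypothesis that $F(\rho_V,\sigma_U)\le\eps$ for all pairs $(V,U)$.

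\textbf{Checking the normalization.} The prefactor deserves a check. Each twirl carries a factor $1/|\mathcal S|$, and homogeneity turns the product of the two into $\sqrt{|\mathcal S|^{-2}}=|\mathcal S|^{-1}$. There are $|\mathcal S|^2$ pairs $(V,U)$, so the total is $|\mathcal S|^{-1}\cdot|\mathcal S|^2\eps=\eps|\mathcal S|$. This is precisely the stated bound.

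For the group case one could sharpen the bound to $|\mathcal S|^{1/2}$-type factors by unitary invariance, but this is not needed here.
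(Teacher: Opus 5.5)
Your proof is correct and is essentially the paper's argument: the paper applies Rotfel'd's inequality $\Tr\sqrt{A+B}\le\Tr\sqrt{A}+\Tr\sqrt{B}$ once to split the sum over the $\sigma_\ell$ (picking up $1/\sqrt{|\mathcal S|}$), uses symmetry of $F$, and applies it again to split the sum over the $\rho_j$, reaching the same bound $\frac{1}{|\mathcal S|}\sum_{j,\ell}F(\rho_j,\sigma_\ell)\le\eps|\mathcal S|$. One caveat: your closing aside is not needed, and it is also false — for $\mathbb{Z}_N$ acting on $\mathbb{C}^N\otimes\mathbb{C}^N$ by $\ket{r,c}\mapsto\ket{r+k,c+k}$, with $\rho$ and $\sigma$ maximally mixed on row $0$ and column $0$ respectively, every pairwise fidelity is $1/N$ while both twirls equal $I/N^2$, so $F(\mathcal E(\rho),\mathcal E(\sigma))=1=|\mathcal S|\eps$ and the linear factor is tight even for groups.
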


\begin{proof}
    First note the following fact due to Rotfel'd \cite{rotfel1969singular}. Let $A$ and $B$ be PSD square matrices. Then 
    \begin{align*}
        \Tr\sqrt{A+B} \leq \Tr\sqrt{A} + \Tr\sqrt{B}.
    \end{align*}
    Now, recall the definition of the fidelity:
    \begin{align*}
        F(\rho,\sigma) = \Tr\sqrt{\rho^{1/2} \sigma \rho^{1/2}}.
    \end{align*}
    Let $\rho_j = U_j\rho U_j^\dagger$, where $U_j$ is the $j$-th element of $\mathcal{S}$ for some fixed ordering, and similarly let $\sigma_\ell = U_\ell \sigma U_\ell^\dagger$. Then
    \begin{align*}
        \mathcal{E}(\rho) &= \dfrac{1}{\abs{\mathcal{S}}}\sum_j \rho_j, \\
        \mathcal{E}(\sigma) &= \dfrac{1}{\abs{\mathcal{S}}}\sum_\ell \sigma_\ell.
    \end{align*}
    Now
    \begin{align}
    \begin{split}
        F(\mathcal{E}(\rho),\mathcal{E}(\sigma)) =& \Tr\sqrt{(\dfrac{1}{\abs{\mathcal{S}}}\sum_j \rho_j)^{1/2}(\dfrac{1}{\abs{\mathcal{S}}}\sum_\ell \sigma_\ell)(\dfrac{1}{\abs{\mathcal{S}}}\sum_j \rho_j)^{1/2}} \\
        =& \Tr\sqrt{\dfrac{1}{\abs{\mathcal{S}}}\sum_{\ell}(\dfrac{1}{\abs{\mathcal{S}}}\sum_j \rho_j)^{1/2}\sigma_\ell(\dfrac{1}{\abs{\mathcal{S}}}\sum_j \rho_j)^{1/2}} \\
        \leq& \dfrac{1}{\sqrt{\abs{\mathcal{S}}}}\sum_{\ell} \Tr\sqrt{(\dfrac{1}{\abs{\mathcal{S}}}\sum_j \rho_j)^{1/2}\sigma_\ell(\dfrac{1}{\abs{\mathcal{S}}}\sum_j \rho_j)^{1/2}} \\
        =& \dfrac{1}{\sqrt{\abs{\mathcal{S}}}}\sum_{\ell} \Tr\sqrt{\sigma_\ell^{1/2}(\dfrac{1}{\abs{\mathcal{S}}}\sum_j \rho_j)\sigma_\ell^{1/2}} \\ 
        =& \dfrac{1}{\sqrt{\abs{\mathcal{S}}}}\sum_{\ell} \Tr\sqrt{\dfrac{1}{\abs{\mathcal{S}}}\sum_j \sigma_\ell^{1/2}\rho_j\sigma_\ell^{1/2}} \\
        \leq& \dfrac{1}{\abs{\mathcal{S}}}\sum_{\ell,j} \Tr\sqrt{\sigma_\ell^{1/2}\rho_j\sigma_\ell^{1/2}} \\ 
        \leq& \abs{\mathcal{S}} \max_{j,l} F(\rho_j,\sigma_l),
    \end{split}
    \end{align}
    where we used the linearity of matrix multiplication, the above trace inequality, and the symmetry of fidelity in its inputs.
\end{proof}

\begin{theorem} [Restatement of the upper bound in \cref{thm:mixed-qszk-complete}]
    Let $G$ be any nontrivial finite group, let $\alpha < 1$ be a constant, and $\beta\geq 1-1/\omega(\log|G|)$. Then $(\alpha,\beta)-\mathrm{MSGI}[G]$ is in \textsf{QSZK}.
\end{theorem}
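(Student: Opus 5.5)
We reduce the complement of MSGI$[G]$ to Quantum State Distinguishability, which is $\QSZK$-complete by \cite{watrous2002quantum}; since $\QSZK$ is closed under complement, this gives $\mathrm{MSGI}[G]\in\QSZK$. Given circuits $C_1,C_2$ preparing $\rho_1,\rho_2$, fix $k=\Theta(\log|G|)$, with the constant chosen below, and output circuits for the $k$-fold twirled states
\[
\rho_1'=\frac{1}{|G|}\sum_{g\in G}\bigl(R(g)\rho_1R(g)^\dagger\bigr)^{\otimes k},\qquad
\rho_2'=\frac{1}{|G|}\sum_{g\in G}\bigl(R(g)\rho_2R(g)^\dagger\bigr)^{\otimes k}.
\]
Each is efficiently preparable. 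We prepare a uniform superposition (or uniform classical mixture) over an encoding of $g$, run $k$ independent copies of $C_j$, apply the controlled $R(g)^{\otimes k}$, and trace out the $g$ register and the garbage registers. This uses only that $G$ is efficiently representable and that uniform sampling from $G$ is efficient, both of which we assume as part of efficient representability.

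In the YES case of MSGI there is $h$ with $F(\rho_1,R(h)\rho_2R(h)^\dagger)\ge\beta$. The twirl is invariant under replacing $\rho_2$ by $R(h)\rho_2R(h)^\dagger$, by the reindexing $g\mapsto gh$ shown in the technical overview. Multiplicativity of fidelity under tensor products, together with joint concavity (or monotonicity under the twirling channel, viewed as a channel that acts on both inputs and applies a random $g$), then gives $F(\rho_1',\rho_2')\ge\beta^k$. With $\beta\ge 1-1/\omega(\log|G|)$ and $k=\Theta(\log|G|)$ we get $\beta^k\ge 1-o(1)$. Fuchs--van de Graaf then yields $D(\rho_1',\rho_2')\le\sqrt{1-\beta^{2k}}=o(1)$.

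In the NO case we have $F(\rho_1,R(g)\rho_2R(g)^\dagger)\le\alpha$ for all $g$. By unitary invariance of fidelity, $F\bigl((R(a)\rho_1R(a)^\dagger)^{\otimes k},(R(b)\rho_2R(b)^\dagger)^{\otimes k}\bigr)=F(\rho_1,R(a^{-1}b)\rho_2R(a^{-1}b)^\dagger)^k\le\alpha^k$ for every pair $a,b$. We apply \cref{claim:twirls} with $\mathcal S=\{R(g)^{\otimes k}\}$ and the states $\rho_1^{\otimes k},\rho_2^{\otimes k}$, which gives $F(\rho_1',\rho_2')\le\alpha^k|G|$. Taking $k=c\log|G|$ with $c$ large enough that $\alpha^{c}\le 1/4$ (for instance) makes this at most $|G|^{-1}\le 1/2$, or even smaller. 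Fuchs--van de Graaf then gives $D(\rho_1',\rho_2')\ge 1-\alpha^k|G|$, which is close to $1$. The resulting instance has trace distance at least $1-o(1)$ in the MSGI-NO case and at most $o(1)$ in the MSGI-YES case, so it satisfies $\alpha'<\beta'^2$ for QSD. This is a valid reduction from the complement of MSGI to QSD.

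The main obstacle is the parameter balancing: the completeness loss $\beta^k$ and the soundness bound $\alpha^k|G|$ both depend on $k$, and \cref{claim:twirls} costs a factor of $|G|$. That factor forces $k=\Omega(\log|G|)$, which in turn is why $\beta$ must be $1-1/\omega(\log|G|)$. We would fix the constant in $k$ from $\alpha$ first and then verify that $\beta^k\to 1$. A secondary check is that $|G|$ may be exponential, which is fine since $k=O(\log|G|)=\poly(n)$ for efficiently represented groups.
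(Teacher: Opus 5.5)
Your proof is correct and follows the paper's argument: you reduce the complement to QSD via the $k$-fold $G$-twirled states. For the isomorphic case you use monotonicity and multiplicativity of fidelity together with the reindexing $g\mapsto gh$, and for the non-isomorphic case you use \cref{claim:twirls} plus Fuchs--van de Graaf. The only difference is that you take $k=\Theta(\log|G|)$ rather than the paper's $k=\omega(\log|G|)$ with $k(1-\beta)\le 1/20$. This puts the $o(1)$ on the completeness side, but your soundness bound is then only $D\ge 1-1/|G|\ge 1/2$, not $1-o(1)$ when $|G|$ is bounded. That weaker bound is still enough for the QSD gap $\alpha'<\beta'^2$.
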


\begin{proof}
    For convenience, we prove the theorem for the complement problem, Mixed State Group Nonisomorphism, which suffices because $\QSZK$ is closed under complement.
    We reduce to the \textsf{QSZK}-complete problem Quantum State Distinguishability (QSD) \cite{watrous2002quantum}. The reduction is as follows:
    Let $\rho$ and $\sigma$ be the input states to the Mixed State Group Nonisomorphism problem.
    We define the inputs to the QSD instance as
    \begin{align*}
        \rho' =& \frac{1}{\abs{G}}\sum_{g \in G} (R(g) \rho R(g)^\dagger)^{\otimes k}, \\
        \sigma' =& \frac{1}{\abs{G}}\sum_{g \in G} (R(g) \sigma R(g)^\dagger)^{\otimes k},
    \end{align*}
    where $k \geq \omega(\log|G|)$ and $k(1-\beta) \leq 1/20$.
    Let's first consider as input to the reduction a YES instance of Mixed State Clifford Nonisomorphism. We are given circuits describing $n$-qubit states $\rho, \sigma$ such that for all $g \in G$, 
    \begin{align*}
        F(\rho, R(g) \sigma R(g)^\dagger) \leq \alpha.
    \end{align*}
    Applying \cref{claim:twirls} with states $\rho^{\otimes k}$, $\sigma^{\otimes k}$ and $\mathcal{S} = \{R(g)^{\otimes k} \mid g \in G\}$ yields
    \begin{align*}
        F(\rho',\sigma') \leq \alpha^k\abs{G} \leq o(1).
    \end{align*}
    We also used the fact that $F(\rho^{\otimes k }, \sigma^{\otimes k}) = F(\rho,\sigma)^k$. Importantly, the states $\rho'$ and $\sigma'$ are efficiently preparable by sampling a random group element and applying it to $k$ copies of $\rho$ and $\sigma$, respectively.
    Now, since the fidelity is negligible, by the Fuchs-van de Graaf inequalities, the trace distance is exponentially close to $1$:
    \begin{align*}
        \dfrac12\norm{\rho' - \sigma'}_1 \geq 1 - o(1).
    \end{align*}

    Now, let's consider a NO instance of Mixed State Nonisomorphism, where we are given $\rho, \sigma$ such that for some $h \in G$,
    \begin{align*}
        F(\rho, R(h) \sigma R(h)^\dagger) \geq \beta.
    \end{align*}
    Let $\mathcal{E}$ be the $k$-twirling channel for $G$, that is 
    \begin{align*}
        \mathcal{E}(\rho^{\otimes k}) = \dfrac{1}{\abs{G}}\sum_{g \in G} (R(g) \rho R(g)^{\dagger})^{\otimes k}.
    \end{align*}
    Then 
    \begin{align*}
        F(\rho', \sigma') =& F(\mathcal{E}(\rho ^{\otimes k}), \mathcal{E}(\sigma^{\otimes k})) \\ 
        =& F(\mathcal{E}(\rho^{\otimes k}), \mathcal{E}((R(h)\sigma R(h)^{\dagger})^{\otimes k})) \\
        \geq& F(\rho^{\otimes k}, (R(h)\sigma R(h)^{\dagger})^{\otimes k}) \\
        =& F(\rho, R(h)\sigma R(h)^{\dagger})^k \\
        =& \beta^k \\ 
        \geq& 1 - k(1-\beta) \\
        \geq& 0.95,
    \end{align*}
    where we used multiplicativity and monotonicity of fidelity, Bernoulli's inequality, and our assumption that $k(1-\beta) \leq \dfrac{1}{20}$. Applying Fuchs-van de Graaf once more yields 
    \begin{align*}
        \dfrac12\norm{\rho' - \sigma'}_1 \leq 1/3.
    \end{align*}
    Thus, our reduction maps YES instances of Mixed State Nonisomorphism to YES instances of QSD, and NO instances of Mixed State Nonisomorphism to NO instances of QSD.
\end{proof}

\subsection{Mixed State Group Isomorphism Problem is $\QSZK$-hard}
\label{subsec:mixed-qszk-complete}

In this section, we prove that the Mixed State Group Isomorphism Problem is \textsf{QSZK}-hard over any finite group $G$.

We will also use the following facts:
\begin{theorem}[Efficient construction of approximate state $t$-designs \cite{brandao2016local}]
    \label{fact:designs}
    Local random circuits of depth $\mathrm{poly}(n,t,\log \frac1\eps)$ form $\eps$-approximate $t$-designs.
\end{theorem}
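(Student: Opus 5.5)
The statement is imported from \cite{brandao2016local}, and here we only need it as a black box. Still, the natural proof is a spectral-gap argument for the $t$-th moment operator of the random circuit, and this is the plan I would follow.

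\textbf{Setup.} Fix the architecture: at each step a uniformly random pair of neighbouring qubits $(i,i+1)$ receives a Haar-random two-qubit unitary. Define the one-step moment operator
\[
M_t=\Ex_{i}\Ex_{U\sim \mathrm{Haar}(U(4))}\left[U_{i,i+1}^{\otimes t}\otimes \overline{U}_{i,i+1}^{\otimes t}\right],
\]
and the global Haar moment operator $P_{\mathrm{Haar}}=\Ex_{U\sim\mathrm{Haar}(U(2^n))}[U^{\otimes t}\otimes \overline U^{\otimes t}]$.

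\textbf{Step 1: fixed spaces agree.} Two-qubit gates on a connected line generate a dense subgroup of $U(2^n)$. Hence the $+1$ eigenspace of $M_t$ is exactly the commutant of $\{U^{\otimes t}\}$, which is spanned by the vectorized permutation operators of $S_t$. This is the range of the projector $P_{\mathrm{Haar}}$, so $P_{\mathrm{Haar}}$ is the limit of $M_t^s$.

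\textbf{Step 2: reduce to a Hamiltonian gap.} Write $M_t = I - \frac{1}{n-1}H$, where
\[
H=\sum_i (I-P_{i,i+1}),
\]
and $P_{i,i+1}$ is the local Haar projector onto permutation invariants of the two sites. Then $H$ is a frustration-free local Hamiltonian whose ground space is the global permutation space. Let $\Delta$ be its spectral gap. It follows that
\[
\|M_t^s-P_{\mathrm{Haar}}\|_\infty\le (1-\Delta/n)^s .
\]

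\textbf{Step 3: the gap bound (main obstacle).} The crux is to show $\Delta\ge 1/\mathrm{poly}(t)$, independently of $n$. I would use Nachtergaele's martingale method. It reduces the gap bound to showing that the ground-space projectors $G_{[1,\ell]}$ and $G_{[\ell-k,\ell+1]}$ of overlapping intervals nearly commute:
\[
\|G_{[1,\ell]}G_{[\ell-k,\ell+1]}-G_{[1,\ell+1]}\|\le \epsilon_0<1/\sqrt{k}.
\]
This overlap is controlled by how close to orthonormal the permutation vectors $\{|\pi\rangle^{\otimes m}\}_{\pi\in S_t}$ are on $m$ qubits. Their Gram matrix deviates from the identity by $O(t^2/2^{m})$. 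This is where the difficulty lies: the estimate requires local dimension $q\gg t^2$. I would first block $O(\log t)$ qubits into a single site. I would then compare the blocked walk to the original one through a detectability/comparison lemma, which costs only a $\mathrm{poly}(t)$ factor in the gap.

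\textbf{Step 4: convert the gap into a design guarantee.} An operator-norm bound on moment operators implies an $\eps$-approximate $t$-design, in the diamond or relative-error sense, after paying a factor of at most $2^{2nt}$. It therefore suffices to take
\[
s=O\!\left(n\,(nt+\log(1/\eps))/\Delta\right)=\mathrm{poly}(n,t,\log\tfrac1\eps)
\]
steps. Parallelizing into a brickwork architecture, or using a comparison argument, gives depth of the same order.

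\textbf{Step 5: state designs.} Applying the circuit to $\ket{0^n}$ yields an approximate state $t$-design, since state designs are obtained by projecting the unitary moment onto $\ketbra{0^n}^{\otimes t}$. This is the form used in our randomized reductions.
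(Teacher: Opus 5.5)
\textbf{Context.} The paper does not prove this statement; it imports it from \cite{brandao2016local}, so the relevant comparison is with that paper's argument. Several of your steps follow it: Steps 1, 2, 4 and 5, namely the moment operator $I-H/(n-1)$ with $H$ frustration-free, the ground space spanned by permutations via universality, the $2^{2nt}$ loss in passing from an operator-norm bound to a diamond or relative-error design, and restriction to $\ket{0^n}$. The use of Nachtergaele's martingale method fed by quasi-orthogonality of permutation operators also matches.

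\textbf{The gap is the last sentence of Step 3.} Write $H_{m,t}$ for your Hamiltonian on $m$ qubits. Blocking $b=O(\log t)$ qubits and running Nachtergaele on the blocked chain can indeed give an $\Omega(1)$ gap for $H^{B}=\sum_j (I-P^{B}_{j,j+1})$, where $P^{B}_{j,j+1}$ is the Haar projector on two blocks. The natural comparison back to the qubit chain, however, is
\[
H\;\ge\;\tfrac12\,\Delta(H_{2b,t})\,H^{B}.
\]
This holds because the qubit-level terms inside a block pair form a copy of $H_{2b,t}$ whose kernel is the range of $P^{B}_{j,j+1}$. So the factor this comparison loses is exactly the $t$-th moment gap of the qubit-level circuit on $2b=O(\log t)$ qubits.

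Asserting that this factor is $1/\mathrm{poly}(t)$ restates the whole problem for a smaller $n$. Neither the detectability lemma nor a comparison lemma bounds it. Moreover, it must hold uniformly in $t$ for a system whose dimension is only $\mathrm{poly}(t)$.

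\textbf{How \cite{brandao2016local} closes it.} The base case is supplied by a separate ingredient: Oliveira's path-coupling (Wasserstein contraction) argument for the random walk on $U(2^{\ell})$. This lower-bounds the moment-operator gap for all $t$ simultaneously by an inverse polynomial in $2^{\ell}$, hence by $1/\mathrm{poly}(t)$ when $\ell=O(\log t)$. They also apply Nachtergaele directly with an overlap of $\ell=O(\log t)$ sites rather than blocking, which leads to the same base case.

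Without an input of this kind, Step 3 does not establish $\Delta\ge 1/\mathrm{poly}(t)$, and the depth bound in Step 4 does not follow.
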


We will also use the following claims:

\begin{claim}[Fidelity concentration inequality for approximate state $t$-designs]
    \label{claim:maxfidelity}
    Let $\rho$ be a fixed $n$-qubit mixed quantum state.
    Let $G$ be a group.
    Let $\ket\psi$ be a pure state drawn from an $\eps$-approximate state $t$-design where $t = \omega(\max\{\log |G|, n\})$ and $\eps \leq \frac{1}{2^{nt}}$. Then with probability $1 - o(1)$, 
    \begin{align*}
        \max_{g \in G} F(\rho, R(g)\ket\psi) \leq o(1).
    \end{align*}
\end{claim}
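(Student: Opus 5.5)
The plan is to reduce the maximum over $G$ to a union bound over the $|G|$ individual overlaps. For each fixed $g$, $F(\rho, R(g)\ket\psi)^2 = \bra\psi R(g)^\dagger \rho R(g)\ket\psi$. Write $\rho_g := R(g)^\dagger \rho R(g)$; this is again a fixed density matrix. So it suffices to show that for a single fixed density matrix $\sigma$, the quantity $X_\sigma := \bra\psi\sigma\ket\psi$ is small except with probability much smaller than $1/|G|$. The union bound over $g\in G$ then gives the claim.

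To control $X_\sigma$, I will use high moments. For an exact Haar random $\ket\psi$,
\[
\Ex[X_\sigma^t] = \Tr\!\left[\sigma^{\otimes t}\,\Ex(\ketbra{\psi}{\psi}^{\otimes t})\right] = \frac{\Tr[\sigma^{\otimes t} P_{\mathrm{sym}}]}{\binom{2^n+t-1}{t}}.
\]
Since $\sigma^{\otimes t}$ is positive semidefinite with trace $1$ and $P_{\mathrm{sym}}\preceq I$, the numerator is at most $1$. Hence $\Ex[X_\sigma^t]\le t!/2^{nt}$ when $t\le 2^n$. For an $\eps$-approximate state $t$-design, the $t$-th moment operator differs from the Haar one by at most $\eps$ in the relevant norm, in the sense of \cref{fact:designs}. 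Pairing this error against $\sigma^{\otimes t}$, whose operator norm is at most $1$ and whose trace is $1$, gives an additive error of at most $\eps\le 2^{-nt}$. Therefore $\Ex[X_\sigma^t]\le (t!+1)2^{-nt}\le 2(t/2^n)^t$.

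Next I apply Markov's inequality to the $t$-th moment at a threshold $\delta$:
\[
\Pr[X_\sigma\ge \delta]\le 2\left(\frac{t}{\delta 2^n}\right)^t.
\]
The natural choice is $\delta = 4t/2^n$, which gives a failure probability of $2\cdot 4^{-t}$. The union bound over $G$ then costs a total of $2|G|4^{-t}=o(1)$, since $t=\omega(\log|G|)$. On the complementary event, every $F(\rho,R(g)\ket\psi)\le \sqrt{4t/2^n}$.

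The main obstacle is checking that $4t/2^n$ is actually $o(1)$. The hypothesis $t=\omega(\max\{\log|G|,n\})$ only bounds $t$ from below. I will therefore take $t$ to be the smallest admissible value, say $t=\log|G|\cdot n$ or any polynomial choice compatible with efficient preparation via \cref{fact:designs}, and assume $\log|G|\le \mathrm{poly}(n)$, which holds for efficiently represented groups. Then $t\le \mathrm{poly}(n)\ll 2^n$, and the bound is $2^{-\Omega(n)}$.

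If a larger $t$ were forced, I would instead use the threshold $\delta=\eta$ for a small constant $\eta$. The bound then becomes $(t/(\eta 2^n))^t$, which again needs $t=o(2^n)$, so the same polynomial restriction on $t$ is needed either way. A secondary point to handle carefully is the precise norm in which \cref{fact:designs} guarantees $\eps$-closeness of the moment operators. I expect to need relative or diamond-type closeness, and I will convert it to the additive bound above, absorbing any factor of $\binom{2^n+t-1}{t}$ into the $2^{-nt}$ assumption on $\eps$.
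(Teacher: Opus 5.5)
Your proposal is correct and follows the paper's argument: bound the $t$-th moment of $F(\rho,R(g)|\psi\rangle)^2=\langle\psi|R(g)^\dagger\rho R(g)|\psi\rangle$ by its Haar value plus the design error $\eps\le 2^{-nt}$, apply Markov's inequality, and union-bound over $G$ using $t=\omega(\log|G|)$; the only cosmetic difference is that the paper thresholds at $2^{-n/2}$ rather than your $4t/2^n$. Your Haar bound $t!/2^{nt}$ is more careful than the paper's stated $1/2^{nt}$, which drops the $t!$, and the upper restriction on $t$ you worry about is unnecessary: an $\eps$-approximate $t$-design is also an $\eps$-approximate $t'$-design for every $t'\le t$, so you can always run the argument with a polynomial $t'$ as long as $\log|G|\le\mathrm{poly}(n)$.
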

\begin{proof}
    Applying Markov's inequality and the moment method, we get
    \begin{align*}
        \Pr_{\ket{\psi}\sim \mu}[F(\rho, R(g)\ket\psi)^{2} \geq 1/\sqrt{d}] &= \Pr_{\ket{\psi}\sim \mu}[F(\rho, R(g)\ket\psi)^{2t}
        \geq 1/d^{t/2}] \\ 
        &\leq d^{t/2}\Ex[F(\rho, R(g)\ket\psi)^{2t}] \\
        &\leq \dfrac{d^{t/2}}{d^t} + \eps d^{t/2} \leq \dfrac{2}{d^{t/2}},
    \end{align*}
    where $d = 2^n$.
    We also used the fact that $\mu$ is an approximate $2t$ design and a standard bound on the moments of the fidelity for Haar random states.
    Since $t \geq \omega(\log|G|)$, and by a union bound over all group elements,
    \begin{align*}
        \Pr_{\ket\psi \sim \mu}[\max_{g \in G}F(\rho, R(g)\ket\psi)^{2} \geq 1/\sqrt{d}] \leq \dfrac{\abs{G}}{2^{\omega(\log|G|)}} = o(1),
    \end{align*}
    and we are done.
\end{proof}

\begin{claim}
    \label{claim:selffidelity}
    Let $S$ be a set of unitaries corresponding to the image of a group $G$ under a representation $R$.
    Let $\ket\psi$ be a pure state drawn from an $\eps$-approximate state $t$-design where $t = \omega(\max\{\log |G|, n\})$ and $\eps \leq \frac{1}{2^{nt}}$. Then with probability $1 - o(1)$, 
    \begin{align*}
        \max_{g \in G \setminus \{I\}} F(R(g)\ket\psi, \ket\psi) \leq \max_{g \in G \setminus \{I\}} \left( \frac{|\Tr(R(g))|}{2^n} \right) + o(1).
    \end{align*}
\end{claim}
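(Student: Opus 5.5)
We follow the moment-method template of \cref{claim:maxfidelity}, but now the quantity of interest is $F(R(g)\ket\psi,\ket\psi)^2 = |\bra\psi R(g)\ket\psi|^2$, a degree-$(2,2)$ polynomial in $\ket\psi$, and its Haar mean is set by $R(g)$. Fix $g \neq I$ and write $U = R(g)$, $d = 2^n$, $\tau_g = \Tr(U)/d$. For Haar-random $\ket\psi$ the amplitude $\bra\psi U\ket\psi$ concentrates around $\tau_g$: its mean is exactly $\Tr(U)/d$, and by L\'evy's lemma, or via moment bounds, it fluctuates by $O(1/\sqrt d)$. So the plan is to bound $\Ex|\bra\psi U\ket\psi - \tau_g|^{2t}$ for a $t$-design, apply Markov, and finish with a union bound over the $|G|-1$ nonidentity elements.

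\textbf{Step 1 (Haar moments).} Let $V = U - \tau_g I$. Then $\Tr V = 0$ and $\|V\|_\infty \le 2$, and $\bra\psi U\ket\psi - \tau_g = \bra\psi V\ket\psi$. Write $X = |\bra\psi V\ket\psi|^2 = \bra{\psi}^{\otimes 2}(V\otimes V^\dagger)\ket{\psi}^{\otimes 2}$. Then
\[X^t = \Tr\big[(V\otimes V^\dagger)^{\otimes t}\,\ketbra{\psi}{\psi}^{\otimes 2t}\big],\]
so $\Ex_{\mathrm{Haar}} X^t = \binom{d+2t-1}{2t}^{-1}\Tr[(V\otimes V^\dagger)^{\otimes t} P_{\mathrm{sym}}]$. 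Expanding $P_{\mathrm{sym}}$ as an average over permutations of $2t$ factors, each term is a product of traces of words in $V,V^\dagger$ along the cycles. Any cycle of length one gives $\Tr V = 0$. Every surviving term therefore has at most $t$ cycles, each contributing at most $2^{\ell} d$ in absolute value. Hence $\Ex_{\mathrm{Haar}}X^t \le (2t)!\,4^{t} d^{t}/d^{2t}\cdot O(1) = (O(t)/d)^{t}$ up to constants. (Alternatively we could cite L\'evy concentration and convert it to moments.)

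\textbf{Step 2 (design error).} For an $\eps$-approximate state $2t$-design the moment differs from the Haar moment by at most $\eps\, \|(V\otimes V^\dagger)^{\otimes t}\|\cdot O(1) \le \eps\, 4^{t}$ in a suitable norm. With $\eps \le 2^{-nt}$ this is $(4/d)^t$, which is negligible. Note the same slight mismatch as in \cref{claim:maxfidelity}: we need a $2t$-design, and we simply assume this with $t$ relabelled.

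\textbf{Step 3 (Markov and union bound).} Markov's inequality gives
\[\Pr\big[X \ge d^{-1/2}\big] \le d^{t/2}\,(O(t)/d)^{t} = (O(t)/\sqrt d)^{t}.\]
Here $t = \omega(n)$ is a problem, since $t$ could exceed $\sqrt d$. To avoid it, we instead take the threshold $X \ge \delta$ for a small $\delta = o(1)$, for example $\delta = t^{-1/2}$ with $t = \mathrm{poly}(n)$ only slightly superpolylog, which gives probability $(O(t)/(\delta d))^t \le 2^{-\Omega(nt)}$. Union bounding over $G$ succeeds because $t = \omega(\log|G|)$. On the good event, for all $g\ne I$,
\[F(R(g)\ket\psi,\ket\psi) = |\bra\psi U\ket\psi| \le |\tau_g| + \sqrt{X} \le \max_{g\ne I}|\Tr R(g)|/2^n + o(1).\]

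\textbf{Main obstacle.} Step 1 is the hard part: controlling the permutation sum cleanly, i.e.\ showing that traceless $V$ kills all terms with more than $t$ cycles, and that the combinatorial factor $(2t)!/\binom{d+2t-1}{2t}$ does not swamp the bound when $t$ is large. If this becomes messy, the fallback is to choose the smallest admissible $t$, polynomial in $n$ with $t \ll \sqrt d$, where the crude bound $(2t)^{2t}/d^{t}$ is already exponentially small.
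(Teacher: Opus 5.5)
Your proposal follows the same route as the paper: compute the Haar mean $\Tr(R(g))/2^n$, show that $\bra{\psi}R(g)\ket{\psi}$ concentrates around it for the design, take a union bound over $G$, and finish with the triangle inequality. The paper only cites design concentration at the key step, whereas you derive it by bounding the $2t$-th moment of the centered quantity $\bra{\psi}V\ket{\psi}$ with $\Tr(V)=0$, which is what kills the fixed points in the permutation expansion.

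There is one arithmetic slip. The bound $(2t)!\,4^t/d^t$ is $(O(t^2)/d)^t$, not $(O(t)/d)^t$. This is harmless once you work with a moment order $t'\le t$ that is both $\mathrm{poly}(n)$ and $\omega(\log|G|)$. Such a $t'$ exists under the standing assumption $\log|G|\le\mathrm{poly}(n)$, and an $\eps$-approximate $t$-design is also an $\eps$-approximate $t'$-design for every $t'\le t$.
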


\begin{proof}
    For all $g \in G$, define the complex random variable $X_g = \langle \psi | R(g) | \psi \rangle = \Tr(R(g) \ketbra{\psi}{\psi})$. Since $\ket\psi$ is drawn from a $t$-design with $t \geq 1$, its first moment is given by the Haar measure:
    \begin{align*}
        \mathbb{E}_{\ket\psi}[X_g] &= \Tr\left(R(g) \mathbb{E}_{\ket\psi}[\ketbra{\psi}{\psi}]\right) \\
        &= \Tr\left(R(g) \frac{I}{2^n}\right) \\
        &= \frac{\Tr(R(g))}{2^n}.
    \end{align*}
    Consequently, the absolute value of the expectation is:
    \begin{align*}
        |\mathbb{E}_{\ket\psi}[X_g]| = \frac{|\Tr(R(g))|}{2^n}.
    \end{align*}
    By concentration of measure for $\eps$-approximate $t$-designs (see for example \cref{claim:maxfidelity} or \cite{mele2024introduction}), $X_g$ deviates from its mean under the Haar measure by more than some $o(1)$ with probability at most $\frac{1}{d^{\Omega(t)}}$. Thus by a union bound, with high probability, the maximum deviation over the group is bounded by
    \begin{equation*}
        \max_{g \in G \setminus \{I\}} |X_g - \mathbb{E}_{\ket\psi}[X_g]| \leq o(1).
    \end{equation*}
    Next we apply the triangle inequality $|X_g| \leq |\mathbb{E}_{\ket\psi}[X_g]| + |X_g - \mathbb{E}_{\ket\psi}[X_g]|$ to obtain, with high probability:
    \begin{equation*}
        \max_{g \in G \setminus \{I\}} F(R(g)\ket\psi, \ket\psi) \leq \max_{g \in G \setminus \{I\}} \left( \frac{|\Tr(R(g))|}{2^n} \right) + o(1).
    \end{equation*}
\end{proof}

We now prove the theorem.

\begin{theorem} [Formal restatement of the lower bound in \cref{thm:mixed-qszk-complete}]
    Let $G$ be a finite group with $\log|G| \leq \mathrm{poly}(n)$, and let $R$ be a unitary representation of $G$ such that $\max_{g \neq I}\dfrac{|\Tr(R(g))|}{2^n} = \mu < 1$. Then $(\alpha,\beta)$-MSGI$[G]$ is \textsf{QSZK}-hard under randomized reductions for any $\alpha > \dfrac{1}{2} + \dfrac{\mu}{2}$ and $\beta > \alpha$. 
\end{theorem}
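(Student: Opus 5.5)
We reduce from $(a,b)$-Quantum State Distinguishability, which is \textsf{QSZK}-complete by \cite{watrous2002quantum}. The technical overview suggests the reduction. Given circuits for $\sigma_1,\sigma_2$, we sample a pure state $\ket\psi$ from an $\eps$-approximate state $t$-design with $t=\omega(\max\{\log|G|,n\})$. By \cref{fact:designs}, this has an efficient circuit. We then output the MSGI instance
\[
\rho_1=\tfrac12\sigma_1+\tfrac12\psi,\qquad \rho_2=\tfrac12\sigma_2+\tfrac12\psi .
\]
Both states are efficiently preparable by flipping a coin and running the corresponding circuit. In the QSD ``close'' case we take $g=I$. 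Joint concavity of the square-root fidelity gives $F(\rho_1,\rho_2)\ge \tfrac12F(\sigma_1,\sigma_2)+\tfrac12\ge 1-\tfrac12 D(\sigma_1,\sigma_2)$. Starting from a QSD instance whose closeness parameter $a$ has been made small by the standard polarization, this fidelity exceeds $\beta$. So QSD ``close'' instances map to MSGI YES instances; the reduction is onto the isomorphism side, matching the theorem.

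Soundness requires bounding $F(\rho_1,R(g)\rho_2R(g)^\dagger)$ for every $g$ when $\sigma_1,\sigma_2$ are far. We apply the Rotfel'd-type subadditivity used in \cref{claim:twirls}, namely $\Tr\sqrt{A+B}\le\Tr\sqrt A+\Tr\sqrt B$, in both arguments. This gives
\[
F(\rho_1,R(g)\rho_2R(g)^\dagger)\le \tfrac12\Big(F(\sigma_1,\sigma_2^g)+F(\sigma_1,\psi^g)+F(\psi,\sigma_2^g)+F(\psi,\psi^g)\Big),
\]
where $X^g:=R(g)XR(g)^\dagger$. Here the prefactor $\tfrac12$ comes from $\sqrt{1/2\cdot 1/2}$ on each term.

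We bound the four terms as follows.
\begin{itemize}
\item The cross terms $F(\sigma_1,\psi^g)$ and $F(\psi,\sigma_2^g)=F(\sigma_2,\psi^{g^{-1}})$ are $o(1)$ for all $g$ simultaneously, with probability $1-o(1)$. This follows from \cref{claim:maxfidelity} applied to $\rho=\sigma_1$ and to $\rho=\sigma_2$.
\item For $g\ne I$, \cref{claim:selffidelity} gives $F(\psi,\psi^g)\le\mu+o(1)$. In that case the total is at most $\tfrac12(1+\mu)+o(1)<\alpha$, using only $F(\sigma_1,\sigma_2^g)\le1$.
\item For $g=I$ we have $F(\psi,\psi)=1$. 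The bound is then $\tfrac12(F(\sigma_1,\sigma_2)+1)+o(1)$, and we need $F(\sigma_1,\sigma_2)$ small.
\end{itemize}
To make $F(\sigma_1,\sigma_2)$ small, we first apply Watrous's polarization so that $D(\sigma_1,\sigma_2)\ge 1-2^{-n}$ in the far case. Fuchs--van de Graaf then gives $F(\sigma_1,\sigma_2)\le\sqrt{1-D^2}=o(1)$, so the total is $\tfrac12+o(1)<\alpha$. Hence every $g$ gives fidelity at most $\alpha$, and we have a NO instance.

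The main obstacle is the decomposition step. Fidelity is not subadditive in the naive sense, and the Rotfel'd argument in \cref{claim:twirls} is stated for uniform twirls. We will redo it for two-term mixtures with weights $\tfrac12$, expanding $\sqrt{\rho_1}$ via the square root of a sum. If the factor comes out as $1$ per term instead of $\tfrac12$, the cutoff $\tfrac12+\tfrac\mu2$ breaks. The fallback is to use concavity-type bounds on $\Tr\sqrt{\rho_1^{1/2}X\rho_1^{1/2}}$ term by term, which introduce at most $\sqrt{\cdot}$ weights.

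A secondary concern is the probability bookkeeping. The union bounds in \cref{claim:maxfidelity} and \cref{claim:selffidelity} must hold for the states $\sigma_1,\sigma_2$ fixed before $\psi$ is sampled. That fixed order is the reason the reduction is randomized.
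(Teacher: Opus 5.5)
Your proposal is correct and matches the paper's proof: you use the same mixtures $\tfrac12\sigma_i+\tfrac12\psi$ with a design state $\psi$, the identity as the witness in the close case, and the Rotfel'd-based four-term expansion split into the cases $g\neq I$ and $g=I$ via \cref{claim:maxfidelity} and \cref{claim:selffidelity}. Your worry about the prefactor is unfounded: subadditivity of $\Tr\sqrt{\cdot}$ in each argument together with $F(cA,B)=\sqrt{c}\,F(A,B)$ gives exactly $\sqrt{\tfrac12\cdot\tfrac12}=\tfrac12$ per term, so no fallback is needed.
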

\begin{proof}
    We will proceed by reducing from $(\alpha',\beta')$-Quantum State Distinguishability, which importantly remains $\QSZK$-hard even for $\alpha' = \exp(-n)$, $\beta' = 1 - \exp(-n)$.
    Given states $\rho,\sigma$, we define the inputs to the Mixed State Group Isomorphism instance as 
    \begin{align*}
        \sigma_0 =& \dfrac12 \rho + \dfrac12 \ketbra{\psi}{\psi} \\
        \sigma_1 =& \dfrac12 \sigma + \dfrac12 \ketbra{\psi}{\psi},
    \end{align*}
    where $\ket{\psi}$ is drawn from an $\eps$-approximate $t$-design where  $t = \omega(\max\{\log |G|, n\})$ and $\eps \leq \frac{1}{2^{nt}}$. Importantly, preparing the descriptions of the states is efficient since sampling random circuits according to (\cref{fact:designs}) is efficient.

    Now, suppose $\rho,\sigma$ corresponds to a NO instance of $(\alpha',\beta')$-QSD. Taking the isomorphism to be the identity, $h = I$, we have
    \[  
        \norm{\left(\dfrac12 \rho + \dfrac12 \ketbra{\psi}{\psi} \right) - \left(\dfrac12 \sigma + \dfrac12\ketbra{\psi}{\psi}\right)}_1 = \dfrac12\norm{\rho - \sigma}_1 \leq \alpha'.
    \]
    Thus, by the Fuchs-van de Graaf inequalities, 

    \[
        F(\sigma_0,\sigma_1) \geq 1 - \alpha' = 1- o(1).
    \]

    On the other hand, suppose $\rho,\sigma$ corresponds to a YES instance of $(\alpha',\beta')$-QSD. Let's analyze the fidelity under an arbitrary unitary $U \in G$ with $U \neq I$. Using the Rotfel'd trace inequality and by \cref{claim:maxfidelity} and \cref{claim:selffidelity}, with probability $1 - o(1)$,
    \begin{align*}
        F(U\sigma_0 U^\dagger,\sigma_1) \leq& \dfrac12F(U\rho U^\dagger, \sigma) + \dfrac12F(U\rho U^\dagger, \ketbra{\psi}{\psi}) + \dfrac12F(U\ketbra{\psi}{\psi}U^\dagger, \sigma) + \dfrac12F(U\ketbra{\psi}{\psi}U^\dagger, \ketbra{\psi}{\psi}) \\
        \leq& \dfrac12F(U\rho U^\dagger, \sigma) + \dfrac12F(U\ketbra{\psi}{\psi}U^\dagger, \ketbra{\psi}{\psi}) +  o(1) \\
        \leq& \dfrac12 + \dfrac{\mu}{2} +  o(1).
    \end{align*}
    On the other hand, for $U = I$ we have 
    \begin{align*}
        F(\sigma_0,\sigma_1) \leq \dfrac12F(\rho,\sigma) + F(\rho,\ketbra{\psi}) + \dfrac12F(\ketbra{\psi},\ketbra{\psi}) \leq \dfrac12 + o(1).
    \end{align*}
 \end{proof}

One might ask whether the condition $\max_{g \in G}\dfrac{|\Tr(R(g))|}{2^n} = \mu < 1$ is natural, as well as whether it is necessary. Importantly, one can choose pathological low-dimensional representations for which the condition is not satisfied, such as the one-dimensional representation of a cyclic group embedded in a single diagonal entry of a larger unitary. However, we note that for natural representations appearing in the literature \cite{bouland2025state, hinsche2025abelian}, the trace condition is met.

\subsection{Mixed StateHSP is $\QSZK$-hard}
\label{subsec:mixedHSP-qszk-hard}

Here we prove that the mixed state version of StateHSP is $\QSZK$-hard for some abelian groups. In particular, finding a hidden subgroup that stabilizes a mixed state is hard when the group $G$ contains an element of order $2$. Importantly, this captures groups for which the pure StateHSP problems are in $\BQP$, such as the group and representation of $\mathbb{Z}_2^n$ given in \cite{bouland2025state}. The $\QSZK$-hardness also holds for the decision problem, which asks whether the hidden subgroup is trivial. We now present the theorem.

\begin{theorem}
    Let $G$ be a finite abelian group such that there exists an element $g \in G$ with order $2$. Then Mixed StateHSP over $G$ is \textsf{QSZK}-hard.
\end{theorem}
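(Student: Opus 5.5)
We reduce from $(\alpha',\beta')$-Quantum State Distinguishability, following the sketch in the technical overview. Let $\sigma_1,\sigma_2$ be the QSD inputs, prepared by circuits $Q_1,Q_2$. We may take $\alpha'=\exp(-n)$ and $\beta'=1-\exp(-n)$. Fix $h\in G$ of order $2$ and the representation $R$ of $G$ on the label space.

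\textbf{Label vectors.} First we need orthonormal vectors $\ket{v_1},\ket{v_2}$ with $R(h)\ket{v_1}=\ket{v_2}$. If the given $R$ does not provide them, we replace the label representation by one that does. The simplest choice is the left-regular representation on $\mathbb{C}[G]$, with $\ket{v_1}=\ket{e}$ and $\ket{v_2}=\ket{h}$. This is efficient whenever group multiplication is efficient. An alternative is to tensor with an efficiently representable sign representation $\chi$ of $G$ satisfying $\chi(h)=-1$. Such a $\chi$ exists because $G$ is abelian: the character group separates points, and $h^2=e$ forces $\chi(h)=\pm1$. For this option we would set $\ket{v_{1,2}}=(\ket{a}\pm\ket{b})/\sqrt2$ in a suitable two-dimensional block.

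\textbf{Reduction.} The instance is the state
\[
\rho=\tfrac12\ketbra{v_1}{v_1}\otimes\sigma_1+\tfrac12\ketbra{v_2}{v_2}\otimes\sigma_2,
\]
with $R'(g)=R(g)\otimes I$. This state is efficiently preparable: flip a coin $b$, prepare $\ket{v_b}$, and run $Q_b$. The key computation uses orthogonality of the labels:
\[
D\bigl(R'(h)\rho R'(h)^\dagger,\rho\bigr)=\tfrac12\bigl\|\tfrac12\ketbra{v_2}{v_2}\otimes(\sigma_1-\sigma_2)+\tfrac12\ketbra{v_1}{v_1}\otimes(\sigma_2-\sigma_1)\bigr\|_1=D(\sigma_1,\sigma_2).
\]
So in the close case $h$ approximately stabilizes $\rho$, and in the far case it moves $\rho$ by at least $\beta'$.

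\textbf{Promise.} MixedStateHSP requires a promise for every group element, not only $h$. I would handle this with the regular-representation choice. There, $R(g)\ket{e}=\ket{g}$, so for $g\notin\{e,h\}$ the label support $\{g,gh\}$ is disjoint from $\{e,h\}$. Hence $D(R'(g)\rho R'(g)^\dagger,\rho)=1$, since the two states have orthogonal supports. The same computation gives distance $D(\sigma_1,\sigma_2)$ for $g=h$, and $0$ for $g=e$. Therefore the hidden subgroup is $\{e,h\}$ in the close case and $\{e\}$ in the far case, with parameters $(\alpha',\beta')$ inherited directly from QSD. This also yields hardness of the decision version, which asks whether the hidden subgroup is trivial. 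Given an output $H$, we decide QSD by testing $h\in H$. This is subgroup membership for abelian groups, which is efficient by \cite{mckenzie1987parallel}, or trivial here because $|H|\le 2$.

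\textbf{Main obstacle.} The hard part is making the representation legitimate in both senses: efficient and satisfying the promise for every $g$. The regular representation makes the promise clean but requires $|G|$-dimensional labels on $\lceil\log|G|\rceil$ qubits with efficient multiplication. If one insists on keeping the originally given $R$, the non-$h$ elements may violate the promise. In that case I would fall back on padding, namely appending an extra register with an orbit-separating state as in the MSGI hardness proof (using \cref{claim:selffidelity}), to push all $g\notin\langle h\rangle$ far from stabilizing.
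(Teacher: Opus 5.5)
Your proof is correct, and its core is the paper's reduction: the same state $\rho$, the same $R'(g)=R(g)\otimes I$, the same block-diagonal identity $D(R'(h)\rho R'(h)^\dagger,\rho)=D(\sigma_1,\sigma_2)$, and the same final test of whether $h\in H$. Where you genuinely depart from the paper is the choice of label representation, and what that choice lets you verify.

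The paper keeps the given $R$ and takes $\ket{v_{1,2}}=(\ket{h_+}\pm\ket{h_-})/\sqrt2$ from eigenvectors of $R(h)$. This tacitly assumes three things:
\begin{itemize}
\item both eigenvalues $\pm1$ occur (the construction breaks if $R(h)=\pm I$);
\item these eigenvectors are efficiently preparable;
\item every $g\notin\{e,h\}$ either nearly stabilizes $\rho$ or moves it far, as the MixedStateHSP promise requires. The paper analyzes only the element $h$ and never checks this.
\end{itemize}
Your left-regular choice settles all three points. The labels $\ket e,\ket h$ are basis states, and each $g\notin\{e,h\}$ moves the label support to $\{g,gh\}$, which is disjoint from $\{e,h\}$. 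So the instance provably meets the promise, with hidden subgroup $\{e,h\}$ versus $\{e\}$, parameters inherited from QSD, and a trivial membership test.

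The price is that you prove hardness for one representation picked by the reduction, which needs efficient multiplication and inversion on an encoding of $G$. The paper's route instead aims at any given representation with $R(h)\neq\pm I$. Since the theorem does not fix $R$, your argument proves it as stated. It also still serves the paper's motivation: for $G=\mathbb Z_2^n$ your representation is just $\ket x\mapsto\ket{x\oplus g}$, a natural representation for which pure StateHSP is efficient.

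Your asides are best dropped, because neither secures the promise on its own.
\begin{itemize}
\item A label block carrying $\mathbf{1}\oplus\chi$ can violate the promise when $\chi$ takes non-real values. For example, take $G=\mathbb Z_{2^j}$ and $h=2^{j-1}$. Every $\chi$ with $\chi(h)=-1$ is faithful, so the far-case distances $|\sin(\theta_g/2)|$, where $\chi(g)=e^{i\theta_g}$, become dense in $[0,1]$ as $j$ grows, and no gapped promise can hold.
\item Padding with a design state cannot single out $\langle h\rangle$, since \cref{claim:selffidelity} controls self-fidelity only through $|\mathrm{Tr}\,R(g)|/2^n$ and so treats $h$ like every other non-identity element. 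A padding state fixed by $R(h)$ but moved far by all $g\notin\langle h\rangle$, such as $(\ket e+\ket h)/\sqrt2$ in the regular representation, simply returns you to your main construction.
\end{itemize}
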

\begin{proof}
    We reduce from $(\alpha,\beta)$-Quantum State Distinguishability. Importantly QSD remains $\QSZK$-hard even when $\alpha = 1/\exp(n)$ and $\beta = 1 - 1/\exp(n)$.
    
    Let $\sigma_1,\sigma_2$ be the input states to the QSD instance.
    Let $h \in G$ be a group element of order $2$, and let 
    $\ket{v_1}, \ket{v_2}$ be orthonormal vectors such that $R(h)\ket{v_1} = \ket{v_2}$ and $R(h)\ket{v_2} = \ket{v_1}$. We call these the \textit{label} vectors. Importantly, such vectors exist because $R(h)^2 = I$ implies that $R(h)$ is Hermitian with eigenvalues $\pm1$, and therefore has orthonormal eigenvectors $\ket{h_+},\ket{h_-}$. Thus, we can take 
    \begin{align*}
        \ket{v_1} &= \frac{1}{\sqrt2}\left( \ket{h_+} + \ket{h_-} \right) \\
        \ket{v_2} &= \frac{1}{\sqrt2}\left( \ket{h_+} - \ket{h_-} \right) 
    \end{align*}
    to satisfy
    \begin{align*}
        R(h)\ket{v_1} &= \frac{1}{\sqrt2}\left( \ket{h_+} - \ket{h_-} \right) = \ket{v_2} \\
        R(h)\ket{v_2} &= \frac{1}{\sqrt2}\left( \ket{h_+} + \ket{h_-} \right) = \ket{v_1}.
    \end{align*}
    
    We take as input to the mixed StateHSP instance the state
    \begin{align*}
        \rho = \frac12\ketbra{v_1}{v_1} \otimes \sigma_1 + \frac12\ketbra{v_2}{v_2} \otimes \sigma_2,
    \end{align*}
    which is efficiently preparable.
    We define the representation $R'$ as
    \begin{align*}
        R'(g) = R(g) \otimes I,
    \end{align*}
    where $R(g)$ acts on the label register.

    Now, note that
    \begin{align*}
        \norm{\rho - R(h)\rho R(h)^\dagger}_1 &= \norm{\frac12\ketbra{v_1}{v_1} \otimes \sigma_1 + \frac12\ketbra{v_2}{v_2} \otimes \sigma_2 - \frac12\ketbra{v_2}{v_2} \otimes \sigma_1 - \frac12\ketbra{v_1}{v_1} \otimes \sigma_2}_1 \\
        &= \dfrac12 \norm{\ketbra{v_1}{v_1} \otimes \left( \sigma_1 - \sigma_2 \right)}_1 + \dfrac12 \norm{\ketbra{v_2}{v_2} \otimes \left( \sigma_1 - \sigma_2 \right)}_1 \\
        &= \norm{ \sigma_1 - \sigma_2}_1.
    \end{align*}
    Thus, $\rho$ is $(1-1/\exp(n))$-stabilized by $R(h)$ if and only if $\sigma_1,\sigma_2$ are exponentially close in trace distance. Therefore, it suffices to check whether $h$ is in the hidden subgroup $H$ of $\rho$. Fortunately, this is an instance of the group membership problem, which is efficient for abelian groups \cite{mckenzie1987parallel}. Thus, if we can solve mixed StateHSP, we can determine whether $\sigma_1$ and $\sigma_2$ are close or far in trace distance.
\end{proof}

\section{Bosonic Group Isomorphism Problems}
\label{sec:bosons}

We define a bosonic version of the state isomorphism problem, based on Gaussian and linear optical quantum elements. Formally, we define

\begin{definition}
[$(\alpha,\beta; r)$-Pure State Linear Optical Isomorphism Problem]
Given two core states $\ket{c_1}, \ket{c_2}$ over $n$ modes with $r=O(1)$ photons, and precision parameters $(\alpha,\beta)$  the problem of $(\alpha,\beta; r)$-PSGI$[\mathcal O_n]$ is to decide which one of the following statements holds:
\begin{itemize}
    \item There exists a linear optical unitary $V\in\mathcal O_n$ such that $\Re\left(\bra{c_2}V\ket{c_1}\right) \geq \beta$.
    \item For all $V\in\mathcal O_n$ we have $\abs{\bra{c_2}V\ket{c_1}} \le \alpha$.
\end{itemize}
\end{definition}

Below we show a Graph Isomorphism lower bound for this problem, with $r=3$.

\subsection{Graph Isomorphism lower bound}

\begin{theorem} [Formal restatement of the lower bound in \cref{thm:bosonic-PSGI*}]\label{thm:cv-gi}
Let $\alpha \geq 1-\frac{1}{96 n^5}$ and $\beta > \alpha$. Then
\begin{align}
\text{\textsc{GI}}\leq_{\mathsf{P}} (\alpha,\beta; 3)\text{-\textsc{PSGI}}[\mathcal O_n],
\end{align}
\end{theorem}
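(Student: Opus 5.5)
The plan is to map graphs $G_1,G_2$ on $n$ vertices with adjacency matrices $A^{(1)},A^{(2)}$ to the rank-$3$ core states described in the technical overview. To keep the photon number homogeneous at $r=3$, I would use $\ket{\psi_k}\propto \sum_i \ket{3_i} + \lambda\sum_{i<j} A^{(k)}_{ij}\ket{1_i1_j 1_{j'}}$, where the graph part is padded to three photons (for instance with one photon in an auxiliary $(n+1)$-st mode). Equivalently, I would work with stellar polynomials $P_k(z)=\sum_i z_i^3/\sqrt6 + \lambda\, z_{n+1}\, z^\top A^{(k)} z$, where $\lambda$ is a fixed small weight. By \cref{fact:bosons} an element $U\in\mathrm U(n)$ acts as $P\mapsto P(Uz)$. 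Completeness is immediate: if $\pi$ is an isomorphism, the permutation matrix $U_\pi$ fixes $\sum_i z_i^3$ and sends $A^{(2)}$ to $A^{(1)}$, so $\bra{\psi_1}R(U_\pi)\ket{\psi_2}=1\ge\beta$. The reduction is clearly polynomial-time, since the states have $\poly(n)$ Fock amplitudes.

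For soundness, I would argue by contraposition: if some $U$ gives $|\bra{\psi_1}R(U)\ket{\psi_2}|>1-\frac{1}{96n^5}$, then $G_1\cong G_2$. The first step is to split the overlap as in \cref{claim:reduction}. The cubic part $\sum_i\ket{3_i}$ has weight close to $1$, so both its self-overlap under $R(U)$ and the graph-part overlap must each be within $O(n^{-5})$ of maximal. Write $T=\sum_i \ket{3_i}$, equivalently the symmetric tensor $\sum_i e_i^{\otimes 3}$, so that $\bra{T}U^{\otimes 3}\ket{T}=\sum_{i,j}U_{ij}^3$ up to normalization. A large value of this quantity forces approximate preservation of the $\ell_3$ structure. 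Concretely, $|\sum_{i,j}U_{ij}^3|\le \sum_{ij}|U_{ij}|^3\le \sum_j \max_i|U_{ij}|\cdot\sum_i|U_{ij}|^2\le n$, and the near-equality $\ge n(1-\delta)$ forces every column to have an entry with $|U_{ij}|\ge 1-O(n\delta)$. Unitarity then makes $U=D\Pi+E$ with $D$ a diagonal phase, $\Pi$ a permutation, and $\|E\|$ small. The cube phases must also nearly align, so $D^3\approx I$.

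Next, I would use the graph term. Its overlap is $\propto \mathrm{tr}(A^{(1)} U^\top A^{(2)} U)$-type expressions, suitably symmetrized, possibly with a phase from the auxiliary mode. Replacing $U$ by $D\Pi$ costs $O(n^2\|E\|)$, because the entries of $A$ are $0/1$ and there are $n^2$ of them. The phases cancel to magnitude $\le 1$ on each edge term. The overlap of the graph components is then at most $|E(G_1)\cap \pi(E(G_2))|/|E|$ plus $O(n^2\|E\|)$. Since this is an integer count divided by at most $n^2$, being within $1/n^2$ of $1$ forces exact equality, i.e. $\pi(G_2)=G_1$. The $n^5$ in the threshold is designed so that $n^2\|E\|$ and the loss in the first step together stay below $1/n^2$; for this I would take $\lambda$ a constant and track the constants to land on $96n^5$.

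The main obstacle is the robust $\ell_3$ rigidity step: showing quantitatively that $\sum_{ij}|U_{ij}|^3\ge n-\eta$ with unitary $U$ implies $U$ is $O(\sqrt{n\eta})$-close, or similar, to a phased permutation. I would first try the column-wise bound $\sum_i|u_i|^3\le \max_i|u_i|$ for unit vectors, deducing that each column has a dominant entry. A pigeonhole and orthogonality argument then shows that these dominant entries lie in distinct rows. The remaining care is in how the error terms in this step interact with the polynomial losses in the graph-overlap estimate.
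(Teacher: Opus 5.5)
Your robust $\ell_3$-rigidity step (each column has a dominant entry, and orthogonality puts these entries in distinct rows) and your final integrality step match \cref{lem:helper-gapped-cv} and the paper's closing lemma. The gap is that your homogenized encoding breaks the step that decouples the cubic part from the graph part. With the auxiliary mode, $\ket{T}$ and $\ket{G_k}$ both lie in the $3$-photon sector of an $(n+1)$-mode system acted on by $\mathrm{U}(n+1)$. Hence
\[
\bra{\psi_1}R(U)\ket{\psi_2}=a^2\bra{T}R(U)\ket{T}+ab\bigl(\bra{T}R(U)\ket{G_2}+\bra{G_1}R(U)\ket{T}\bigr)+b^2\bra{G_1}R(U)\ket{G_2},
\]
and the cross terms do not vanish for general $U$. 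In \cref{claim:reduction} the analogous terms were negligible for \emph{every} Clifford, because magic states have exponentially small stabilizer fidelity. There is no such uniform bound here, so ``both overlaps are within $O(n^{-5})$ of maximal'' does not follow.

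Tuning the weights does not repair this. If $a^2$ is close to $1$, the graph overlap enters only with weight $b^2$, so you recover it only up to $\varepsilon/b^2+2(a/b)|\text{cross}|$. The cross terms are a priori bounded only by $1$. Once you know $U$ is near $PD$ they are $O(\|U-PD\|)$, but that bound is itself degraded by the cross terms in the cubic step, so the argument is circular. Moreover, at $U=I$ a NO instance already has overlap at least $a^2$, so soundness at $1-\frac{1}{96n^5}$ needs $b^2\ge\frac{1}{96n^5}$. With $\lambda$ constant the two parts have squared norms $n$ and $4\lambda^2|E|$, so neither dominates. You would also have to show that $U$ is close to block-diagonal in $\mathrm{U}(n+1)$; writing $U\in\mathrm{U}(n)$ skips this.

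The missing idea is not to homogenize. The paper uses $P_{\mathcal G}(\mathbf z)=\frac{1}{\sqrt{12n}}\sum_i z_i^3+\frac{1}{\sqrt{2|E|}}\sum_{i<j}A_{ij}z_iz_j$ on $n$ modes. Here $r$ is the stellar rank, i.e.\ a photon-number cutoff, so a core state with a $3$-photon and a $2$-photon component is admissible. Passive linear optics conserve photon number, so the cross terms vanish identically. The overlap is then exactly half the normalized cubic overlap plus half the normalized graph overlap, so each is at least $1-2\varepsilon$, and \cref{lem:helper-gapped-cv} gives $\|U-PD\|\le\sqrt{6n\varepsilon}$.

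There is also a quantitative slip. Your cost $O(n^2\|E\|)$ for replacing $U$ by $D\Pi$ in the normalized graph overlap cannot reach $n^5$. At $\varepsilon=\frac{1}{96n^5}$ the rigidity bound only gives $\|E\|\le\frac{1}{4n^2}$, so $n^2\|E\|=\Theta(1)$; normalizing by $|E|$ does not help for sparse graphs. The paper bounds each cross term by $|\mathrm{tr}(A_1\Delta^tA_2PD)|\le\|A_1\|_F\|A_2\|_F\|\Delta\|=2|E|\|\Delta\|$. The normalized cost is therefore $O(\|\Delta\|)$ whatever the edge density, and $2\sqrt{6n\varepsilon}=\frac{1}{2n^2}$ is exactly where the $96n^5$ comes from.

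A minor point: since the overlap enters in absolute value, the cubic term forces the $D_k^3$ only to share a common phase, not $D^3\approx I$. This is harmless, because the phases are discarded anyway when taking entrywise absolute values.
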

The idea of the proof is as follows: We encode an undirected graph $G = (V, E)$ into a bosonic state with three photons with the following polynomial
\begin{align}\label{eq:encode-graph-into-poly}
F_{\mathcal G}(\mathbf z) = \sum_i z_i^3 + \sum_{i<j} A_{ij} z_i z_j.
\end{align}
For two graphs $\mathcal G_1, \mathcal G_2$, one can show there exists a unitary $U\in\mathrm{U}(n)$ such that $F_{\mathcal G_1}(\mathbf z) = F_{\mathcal G_2}(U\mathbf z)$ if and only if $\mathcal G_1$ and $\mathcal G_2$ are isomorphic. The reason is that the cubic term $\sum_i z_i^3$ enforces the unitary $U$ to be a permutation up to phases. This is the same observation that the only linear transformations that preserve $\ell_p$ norm are permutations for any $p\neq 1,2$. This was also pointed out by Aaronson in \cite{aaronson2004quantum}. Below is a quick proof of this fact for the sake of completeness.

\begin{lem}\label{lem:aux}
Let
\begin{align}
f(\mathbf z) = \sum_i z_i^3.
\end{align}
Then, for any unitary $T$ such that $f(T\mathbf z) = f(\mathbf z)$, we have that
\begin{align}
T = PD
\end{align}
for some phase diagonal matrix $D = \mathrm{diag}(e^{i\phi_1}, \cdots, e^{i\phi_n})$ and some permutation matrix $P$.
\end{lem}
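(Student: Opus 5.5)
We compare coefficients of the cubic form, which reduces the statement to a linear-algebra fact about the rows of $T$. Write $T=(t_{ij})$ and $w_j := (t_{1j},\dots,t_{nj})$ for the $j$-th column. The hypothesis $f(T\mathbf z)=f(\mathbf z)$ reads
\[
\sum_i \Big(\sum_j t_{ij} z_j\Big)^3 = \sum_j z_j^3 .
\]
Expanding and matching monomial coefficients gives, for all indices $j,k,l$,
\[
\sum_i t_{ij}t_{ik}t_{il} = \delta_{jk}\delta_{kl}.
\]
This says the symmetric tensor $\sum_i r_i^{\otimes 3}$, where $r_i$ is the $i$-th row of $T$, equals $\sum_j e_j^{\otimes 3}$.

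The plan is to show that each row $r_i$ has exactly one nonzero entry. We use the duality form of the equation: $\sum_i (r_i\cdot \mathbf z)^3=\sum_j z_j^3$. Since $T$ is unitary, it is invertible, so we substitute $\mathbf z = T^{-1}\mathbf y$. This gives $\sum_i y_i^3 = \sum_j (s_j\cdot \mathbf y)^3$, where the $s_j$ are the rows of $T^{-1}$. So the same identity holds with the roles swapped. We then apply the third-order derivative operators. Contracting the tensor identity against $e_j$ in two slots gives
\[
\sum_i t_{ij}^2\, r_i = e_j .
\]
Taking the inner product of this with columns of $T$ is awkward without conjugates, so we instead use unitarity directly.

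Since $T$ is unitary, $T^{-1}=T^\dagger$, so the columns $\bar w_k$ are orthonormal. Note that $\sum_i t_{ij}^2 r_i = T^{\mathsf T}(w_j\odot w_j)$, where $\odot$ is the entrywise product. So the identity says $T^{\mathsf T} (w_j\odot w_j) = e_j$. Because $T^{\mathsf T}$ is also unitary, this gives $w_j\odot w_j = \bar T e_j = \bar w_j$. Entrywise, $t_{ij}^2=\bar t_{ij}$ for all $i,j$. Hence each $|t_{ij}|\in\{0,1\}$, and the nonzero entries have modulus $1$ (in fact they are cube roots of unity). A unitary matrix whose entries all have modulus $0$ or $1$ has exactly one unimodular entry in each column, since columns have unit norm. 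Columns are orthogonal, so these entries sit in distinct rows. Therefore $T=PD$, with $P$ a permutation matrix and $D$ diagonal with phases.

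The main obstacle is the middle step: extracting a usable consequence of the cubic tensor identity without conjugations, which the holomorphic setting does not provide. The trick is to contract two slots with $e_j$ and recognize $T^{\mathsf T}$ as unitary, which converts the identity into the entrywise equation $t_{ij}^2=\bar t_{ij}$. Everything else is routine bookkeeping. One should only double-check the index conventions: whether $f(T\mathbf z)$ uses rows or columns, and whether $T^{\mathsf T}$ versus $T$ appears. These do not affect the argument, since the transpose, conjugate, and inverse of a unitary are all unitary.
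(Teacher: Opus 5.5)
Your proof is correct, and it takes a somewhat different route from the paper's. Both proofs use the same identity $\sum_i t_{ij}t_{ik}t_{il}=\delta_{jk}\delta_{kl}$, which in your notation says $T^{\mathsf T}(w_j\odot w_k)=\delta_{jk}\,e_j$. The paper obtains it by comparing Hessians, $T^{t}\,\mathrm{diag}(T\mathbf z)\,T=\mathrm{diag}(\mathbf z)$, which is this tensor contracted with $\mathbf z$ in one slot.

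The two proofs use different slices of this identity:
\begin{itemize}
\item The paper uses the off-diagonal slice $j\neq k$ and only the invertibility of $T$, so that $T\mathbf z$ sweeps out $\mathbb C^n$. This gives $w_j\odot w_k=0$: distinct columns have disjoint supports, so $T$ is monomial. Unitarity is invoked only at the end, to make the surviving entries unimodular.
\item You use the diagonal slice $j=k$ and invert $T^{\mathsf T}$ via $(T^{\mathsf T})^{-1}=\bar T$. This gives the entrywise equation $t_{ij}^2=\bar t_{ij}$, and the monomial structure then follows from the columns having unit norm and being orthogonal.
\end{itemize}

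The trade-off is as follows. The paper's argument is more general: it shows that every invertible linear map preserving $\sum_i z_i^3$ is monomial, with no need for unitarity. Yours depends on unitarity, but it directly gives a sharper conclusion that the lemma does not state. The nonzero entries satisfy $t^3=1$, so the phases in $D$ must be cube roots of unity. This is indeed forced, since $f(PD\mathbf z)=\sum_j d_j^3 z_j^3$.

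Two small edits would tighten the write-up:
\begin{itemize}
\item The substitution $\mathbf z=T^{-1}\mathbf y$ and the remark about third-order derivative operators play no role and can be deleted.
\item Your announced plan (one nonzero entry in each row) is not literally what you prove, which is one unimodular entry per column in distinct rows. The two are equivalent for a unitary, so this is harmless.
\end{itemize}
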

\begin{proof}
Let $g(\mathbf z) = f(T\mathbf z)$, and consider the Hessian matrices of $g$ and $f$. We have
\begin{align}
\begin{split}
\boldsymbol\nabla^2 f &= \left(\frac{\partial^2}{\partial z_i \partial z_j} f\right)_{ij} = 6\, \mathrm{diag}(z_1,\cdots, z_n),\\
\boldsymbol\nabla^2 g &= 6 \left( \sum_k T_{ki} T_{kj} (T\mathbf z)_{k}\right) = 6 \, T^t \mathrm{diag}\left((T\mathbf z)_1, \cdots, (T\mathbf z)_n  \right) T.
\end{split}
\end{align}
Imposing $f=g$ gives
\begin{align}
\begin{pmatrix}
z_1 & \cdots & 0\\
 & \ddots &\\
 0 & \cdots &z_n
\end{pmatrix}
=
T^t
\begin{pmatrix}
(T\mathbf z)_1 & \cdots & 0\\
 & \ddots &\\
 0 & \cdots &(T\mathbf z)_n
\end{pmatrix}
T,
\end{align}
for all $\mathbf z \in \mathbb C^n$. Note that this equation tells us that
\begin{align}
T^t
\begin{pmatrix}
(T\mathbf z)_1 & \cdots & 0\\
 & \ddots &\\
 0 & \cdots &(T\mathbf z)_n
\end{pmatrix}
T
\end{align}
is diagonal for all $\mathbf z \in \mathbb C^n$. Since $T$ is invertible, we have that $T\mathbf z$ covers all of $\mathbb C^n$. This yields that $T^t \Lambda T$ is diagonal for all diagonal matrices $\Lambda$. In other words, for all $\lambda_i\in\mathbb C$ and any $i\neq j$ we have:
\begin{align}
\sum_{k} \lambda_{k} T_{ik} T_{jk} = 0.
\end{align}
Choosing $(\lambda_k)_k$ to be non-zero only at some $k$ we get
\begin{align}
T_{ik} T_{jk} = 0 \quad \forall j\neq i, \forall k
\end{align}
This yields that $T$ has at most one non-zero entry in each column. Therefore, we can write it as
\begin{align}
T = PD
\end{align}
for $D$ being the elements in each column, and $P$ being a permutation matrix (which essentially specifies the non-zero entry in every column). Finally, we get that $D$ must be only phases due to the unitarity of $T$.
\end{proof}

\iffalse
We now proceed with the rest of our proof. Note that we can write
\begin{align}
P_{\mathcal G}(\mathbf z) = C(\mathbf z) + Q_{\mathcal G}(\mathbf z),
\end{align}
where
\begin{align}
C(\mathbf z) = \sum_i z_i^3, \quad Q_{\mathcal G}(\mathbf z) = \sum_{i,j} A_{ij} z_i z_j. 
\end{align}
Note that a unitary transformation preserves the degree of our polynomial. In other words, the transformation $U$ must map $C$ to $C$ and $Q_{\mathcal G_1}$ to $Q_{\mathcal G_2}$. From the transformation between cubic parts, and \cref{lem:aux} we get that
\begin{align}
U = PD.
\end{align}
Denoting the adjacency matrices of $\mathcal G_1, \mathcal G_2$ by $A_1, A_2$, and plugging the transformation into the quadratic part, we get that
\begin{align}\label{eq:A-transformation}
\begin{split}
U^T A_1 U &= A_2,\\
\therefore D(P^tA_1P) D &= A_2.
\end{split}
\end{align}
Finally, we note:
\begin{lem}\label{lem:DBD}
For a matrix $B\in\mathbb R_{\geq 0}^{n\times n}$ and a diagonal phase matrix $D = \mathrm{diag}(e^{i\phi_1}, \cdots, e^{i\phi_n})$, if $DBD\in\mathbb R^{n\times n}_{\geq 0}$, then $B = DBD$.
\end{lem}
\begin{proof}
Note that
\begin{align}
\bra{k} DBD\ket{j} = e^{i\phi_j+\phi_k} B_{kj},
\end{align}
and from the non-negativity of entries of $DBD$ we have
\begin{align}
\bra{k} DBD\ket{j} = |e^{i\phi_j+\phi_i} B_{kj}| = B_{kj}.
\end{align}
\end{proof}
Finally, putting \eqref{eq:A-transformation} together with \cref{lem:DBD} we get that
\begin{align}
P^tA_1P = A_2,
\end{align}
which is saying that $\mathcal G_1$ and $\mathcal G_2$ are isomorphic graphs.
\fi

We now proceed to the proof of \cref{thm:cv-gi}.

\begin{proof}[Proof of \cref{thm:cv-gi}]
We consider the encoding presented in \cref{eq:encode-graph-into-poly} with the following normalization
\begin{align}
P_{\mathcal G}(\mathbf z) = \frac{1}{\sqrt{12 n}} \sum_{i} z_i^3 + \frac{1}{\sqrt{2|E|}} \sum_{i<j} A_{ij} z_i z_j.
\end{align}
First, note that if the two graphs are isomorphic, then the equivalence between the states holds trivially. We need to prove the other direction: that if the overlap exceeds $1-O(\frac1{n^5})$, then the two graphs are isomorphic. Without loss of generality, we assume that the given graphs have the same number of edges. If this condition is not satisfied, we can simply reject (e.g., map to the isomorphism of $\ket{0}$ and $\ket{1}$).

We first show that if the overlap is larger than $1-\varepsilon = 1-O(\frac1{n^5})$, then from the cubic part's transformation we get $\norm{U - PD} \leq \varepsilon$ for some diagonal phase matrix $D$ and some permutation $P$. Then, using the fact that the distance between any two distinct binary matrices is at least $1$, we show that the transformation $U$ must be a permutation. The following lemma establishes a bound between $U$ and some $PD$.

\begin{lem}\label{lem:helper-gapped-cv}
Let $U$ be an $n\times n$ unitary. For any $0<\delta<\frac1n \left( \frac{3-\sqrt 5}{2} \right)\approx 0.38/n$, we have that if 
\begin{align}
\abs{\left\langle\frac{1}{\sqrt{6n}}C(U\mathbf z) |\frac{1}{\sqrt{6n}}C(\mathbf z) \right\rangle} \geq 1 -\delta,
\end{align}
then 
\begin{align}
\norm{U - PD} \leq \sqrt{3n \delta}
\end{align}
for some diagonal phase matrix $D$ and some permutation $P$.
\end{lem}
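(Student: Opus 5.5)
We first pin down the inner product. Here $C(\mathbf z)=\sum_i z_i^3$, and the Fock inner product satisfies $\langle \mathbf z^{\mathbf k},\mathbf z^{\mathbf j}\rangle=\mathbf k!\,\delta_{\mathbf k\mathbf j}$. Hence $\|C\|^2=\sum_i 3!=6n$, and the state in the statement is the normalized core state $\frac{1}{\sqrt{6n}}\sum_i\sqrt 6\,\ket{3_i}$. By \cref{fact:bosons}, this state equals the symmetric qudit state $\frac{1}{\sqrt n}\sum_i \ket{i}^{\otimes 3}$. The overlap is then
\begin{align*}
\frac1n\sum_{i,k}\bigl(U_{ki}\bigr)^3,
\end{align*}
up to conjugation and transposition conventions, since $\bra{k}^{\otimes 3}U^{\otimes 3}\ket{i}^{\otimes 3}=U_{ki}^3$. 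The hypothesis therefore reads
\begin{align*}
\Bigl|\sum_{i,k}U_{ki}^3\Bigr|\ge n(1-\delta).
\end{align*}

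Next we convert this into a statement about column maxima. Write $m_i:=\max_k|U_{ki}|$. Because each column is a unit vector, $\sum_k|U_{ki}|^3\le m_i\sum_k|U_{ki}|^2=m_i$. Combining with the triangle inequality,
\begin{align*}
n(1-\delta)\le\sum_i m_i ,
\end{align*}
so $\sum_i(1-m_i)\le n\delta$, and in particular $1-m_i\le n\delta$ for every $i$. Let $\pi(i)$ be the row attaining $m_i$.

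The main obstacle is showing that $\pi$ is a permutation. We handle it by contradiction. Suppose $\pi(i)=\pi(j)=k$ with $i\ne j$. Row $k$ of $U$ is also a unit vector, so $m_i^2+m_j^2\le 1$. But $m_i,m_j\ge 1-n\delta$, which forces $2(1-n\delta)^2\le 1$, i.e.\ $n\delta\ge 1-1/\sqrt2\approx0.29$. The hypothesis $n\delta<(3-\sqrt5)/2\approx0.38$ does not immediately exclude this range. We therefore expect to need a sharper estimate. The first thing we would try is to use the phase alignment as well: the magnitude bound alone is lossy, and the sharper inequality should involve $\sum_k|U_{ki}|^3\le m_i^3+(1-m_i^2)m_i'$, where $m_i'$ is the second-largest entry of column $i$. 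The constant $(3-\sqrt5)/2$ is suggestive: it satisfies $x^2-3x+1=0$. This is presumably the threshold where a quadratic bound of the form $m_i^2\ge 1-2n\delta+\ldots$ is exactly strong enough to force $m_i^2>1/2$, and we would tune the column inequality to produce that quadratic. Once $m_i^2>1/2$ holds for every $i$, two columns can no longer share a row, so $\pi$ is injective and hence a permutation.

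Finally we construct $P$ and $D$ and bound the distance. Let $P$ be the permutation matrix of $\pi$, and set $D_{ii}=U_{\pi(i)i}/|U_{\pi(i)i}|$. Then
\begin{align*}
\|U-PD\|_F^2=\sum_i\bigl(2-2m_i\bigr)\le 2n\delta .
\end{align*}
This is at most $3n\delta$, which gives the claimed bound $\sqrt{3n\delta}$ in operator norm, since the operator norm is at most the Frobenius norm. The slack between $2n\delta$ and $3n\delta$ can absorb any loss from the sharper column estimate needed in the obstacle step.
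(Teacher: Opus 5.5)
\textbf{Gap: injectivity of $\pi$.} Your overlap computation, the bound $\sum_i(1-m_i)\le n\delta$, and the final estimate $\|U-PD\|_F^2=2\sum_i(1-m_i)\le 2n\delta$ are all correct. The last is even sharper than the paper's $3n\delta$. The problem is the step showing that $\pi$ is injective. As written you do not prove it, and the route you set up cannot prove it.

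Once you weaken $\sum_i(1-m_i)\le n\delta$ to the per-index bound $m_i\ge 1-n\delta$, the best that $m_i^2+m_j^2\le 1$ gives you is $2(1-n\delta)^2\le 1$. That only excludes $n\delta<1-1/\sqrt2\approx 0.29$. Your proposed refinement via phase alignment or second-largest entries is never actually specified. So the step ``once $m_i^2>1/2$ holds for every $i$'' is unsupported on $n\delta\in[1-1/\sqrt2,(3-\sqrt5)/2)$.

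The missing idea is simply to keep the sum. Suppose $\pi(i)=\pi(j)=k$ with $i\neq j$. Then $m_i^2+m_j^2\le 1$, so Cauchy--Schwarz gives $m_i+m_j\le\sqrt2$. Since every term $1-m_l$ is nonnegative, $n\delta\ge(1-m_i)+(1-m_j)\ge 2-\sqrt2\approx 0.59$. This contradicts $n\delta<(3-\sqrt5)/2\approx 0.38$. With that one line your argument is complete.

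\textbf{Where $(3-\sqrt5)/2$ comes from.} Do not try to reverse-engineer this constant with a tuned quadratic estimate. The paper's proof uses exactly your per-index bound $u=\min_i\max_k|U_{ik}|\ge 1-n\delta$, together with orthogonality of the two offending rows. From these it gets $u^2\le 2n\delta-n^2\delta^2=1-(1-n\delta)^2$. Combined with $u^2\ge(1-n\delta)^2$, this is again $2(1-n\delta)^2\le 1$, the same $\approx 0.29$ threshold you found.

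The paper's displayed inequality $1-n\delta\le 2n\delta-n^2\delta^2$, whose relevant root is $(3-\sqrt5)/2$ (the $x^2-3x+1$ you spotted), drops the square on the left-hand side. So your diagnosis that the naive argument stops at $1-1/\sqrt2$ is correct. The lemma is nonetheless true in its full stated range, and the summed bound above is what proves it. The downstream use, with $\varepsilon=1/(96n^5)$, is far inside either range anyway.
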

\begin{proof}
Let $\mathbf z' = U \mathbf z$. Hence, $z'_i = \sum_{j} U_{ij} z_j$. Then
\begin{align}
\langle z_k^3 | z_i'{}^3\rangle = U_{ik}^3 \norm{z_k^3}^2 = 6 U_{ik}^3.
\end{align}
As a result, we get
\begin{align}
\frac{1}{6n}\bra{C(U \mathbf z)} C(\mathbf z)\rangle = \frac{1}{n} \sum_{i,k=1}^n  U_{ik}^3.
\end{align}
From the bound on the inner product, we obtain
\begin{align}\label{eq:U_ik}
1-\delta \leq \frac1n \sum_{i,k} |U_{ik}|^3 \leq \frac1n \sum_i \left( \sum_{k} |U_{ik}|^3 \right) \leq \frac1n \sum_i \max_{k} |U_{ik}| \cdot \sum_{k} |U_{ik}|^2 =  \frac1n \sum_{i} \max_k|U_{ik}|,
\end{align}
where the last equality follows from the fact that each column of $U$ has a unit norm. Let $k_i:= \arg \max_{k} |U_{ik}|$. We show that if $n\delta \le 0.39$, then there does not exist a pair $i\neq j$ with $k_i=k_j$. To do so, let $u = \min_i \max_k |U_{ik}|$. From \eqref{eq:U_ik} we have that
\begin{align}
1-\delta \le \frac1n\sum_{i} \max_k |U_{ik}| \le \frac{n-1}n + \frac{u}{n},
\end{align}
which concludes
\begin{align}\label{eq:bound-on-u}
u\ge 1-n\delta.
\end{align}
Now, by contradiction, assume there exist $i\neq j$ such that $k_i=k_j$. Let $\bm u_i$ and $\bm u_k$ denote the $i$-th and $k$-th columns of $U$. Letting $k=k_i=k_j$, for $\ell\in\{i,j\}$ we can write $\bm u_\ell = u_{\ell k}\bm e_k + \bm r_\ell$ such that $\bm r_\ell^\dag \bm e_{k}=0$. As each column of $U$ has unit norm, we get that
\begin{align}
\norm{\bm r_\ell}^2\le 1-|u_{\ell k}|^2\le 2n\delta - n^2\delta^2, \text{ for }\ell=i,j.
\end{align}
where the last inequality is due to \eqref{eq:bound-on-u}. However, due to orthogonality of $\bm u_i$ and $\bm u_j$ we also have
\begin{align}
0=\bm u_i^\dag \bm u_j = u_{ik} u_{jk} + \bm r_i^\dag \bm r_j
\end{align}
which implies
\begin{align}\label{eq:uu}
|u_{ik} u_{jk}| = |\bm r_i^\dag \bm r_j|\le \norm{\bm r_i}\cdot\norm{\bm r_j}\le 2n\delta -n^2\delta^2.
\end{align}
However, we also have $u^2\le |u_{ik} u_{jk}|$ which combined with \eqref{eq:bound-on-u} and \eqref{eq:uu} gives
\begin{align}
1-n\delta \le 2n\delta -n^2\delta^2.
\end{align}
This is violated for all $n\delta < \frac{3-\sqrt 5}{2} \approx 0.38$. Therefore, we conclude that if $n\delta\le \frac{3-\sqrt 5}{2}$, then all $(k_i)_{i=1}^n$ must be distinct.

Then, we can now define $T$ to be the matrix that has only the following non-zero entries
\begin{align}
T_{ik_i} = \frac{U_{ik_i}}{|U_{i k_i}|}.
\end{align}
Since $k_i$ are all distinct, we get that $T$ is of the form $T=PD$.
In the rest of the proof, we show that $\norm{U - T}$ satisfies the mentioned bound. Note that
\begin{align}
\norm{U - T}^2 = \sum_{ik} |U_{ik} - T_{ik}|^2 = \sum_{i} (1-|U_{ik_i}|^2) + (1-|U_{ik_i}|)^2 \leq 3\sum_{i} 1-|U_{ik_i}| \leq 3n\delta
\end{align}
where the second to last inequality follows from inequalities $(1-x)^2\leq 1-x$ and $1-x^2 \leq 2(1-x)$ that hold for all $0\leq x\leq 1$.
\end{proof}
Going back to the original problem, we note that
\begin{align}
\langle P_{\mathcal G}(\mathbf z) | P_{\mathcal G} (U\mathbf z) \rangle = \frac{1}{12 n} \langle C(\mathbf z)|C(U\mathbf z)\rangle + \frac{1}{2|E|} \langle Q_{\mathcal G}(\mathbf z)|Q_{\mathcal G}(U\mathbf z)\rangle.
\end{align}
Therefore, $\abs{\langle P_{\mathcal G}(\mathbf z) | P_{\mathcal G} (U\mathbf z) \rangle} \geq 1 - \varepsilon$ implies that
\begin{align}\label{eq:both-bounds-eps}
\abs{\frac{1}{6 n} \langle C(\mathbf z)|C(U\mathbf z)\rangle} \geq 1-2\varepsilon,\quad \text{and} \, \frac{1}{|E|} \abs{\langle Q_{\mathcal G}(\mathbf z)|Q_{\mathcal G}(U\mathbf z)\rangle} \geq 1-2\varepsilon.
\end{align}
For any $\varepsilon\le 0.19/n$, from \cref{lem:helper-gapped-cv} we obtain that there exists a diagonal phase matrix $D$ and a permutation matrix $P$ such that
\begin{align}\label{eq:closeness-of-u}
\norm{U - PD} \leq \sqrt{6n\varepsilon}.
\end{align}
Furthermore, we note that
\begin{align}
\frac{1}{2|E|} \langle Q_{\mathcal G_1}(\mathbf z)|Q_{\mathcal G_2}(U\mathbf z)\rangle = \frac{1}{|E|} \mathrm{tr}\left( A_1U^tA_2U \right).
\end{align}
By using the second bound in \eqref{eq:both-bounds-eps} we obtain
\begin{align}\label{eq:auau}
1-2\varepsilon \leq \frac{1}{2|E|} \abs{\mathrm{tr}\left( A_1U^tA_2U \right)}.
\end{align}
Also, letting $\Delta := U - PD$ we have \eqref{eq:closeness-of-u} we have
\begin{align}\label{eq:last-bound-cv}
\begin{split}
1-2\varepsilon &\leq \frac{1}{2|E|} \abs{\mathrm{tr}\left( A_1U^tA_2U \right)}\\
&\leq \frac{1}{2|E|} |\mathrm{tr}\left( A_1 D P^t A_2 P D\right)| + \frac{1}{2|E|} \abs{\mathrm{tr}\left( A_1\Delta^t A_2 PD \right)} + \frac{1}{2|E|}\abs{\mathrm{tr}\left( A_1 DP^t A_2 \Delta \right)} + \frac{1}{2|E|} \abs{\tr(A_1 \Delta^t A_2 \Delta)}\\
&\overset{(i)}{\leq} \frac{1}{2|E|} \abs{\mathrm{tr}\left( A_1 D P^t A_2 P D\right)} + 2\sqrt{6n\varepsilon} + 6n\varepsilon\\
&\overset{(ii)}{\leq} \frac{1}{2|E|} \mathrm{tr}\left( A_1P^t A_2 P\right) + 2\sqrt{6n\varepsilon} + 6n\varepsilon
\end{split}
\end{align}
where $(i)$ is using $\abs{\mathrm{tr}(A_1\Delta^t A_2 PD)} \leq \norm{A_1} \norm{A_2} \norm{\Delta} \norm{PD}$ together with $\norm{\Delta}\leq \sqrt{6n\varepsilon}$ from \eqref{eq:closeness-of-u} and also together with the fact that $\norm{PD} = 1$ and $\norm{A_i}^2 = 2|E|$. Finally, in $(ii)$ we have used the fact that $\abs{\mathrm{tr}(B)}\leq \mathrm{tr}((|B|_{ij})_{ij}) $, or in other words, taking entry-wise absolute value can only increase the absolute value of the trace.\footnote{We are also using $\abs{\mathrm{tr}(AP^tAP)} = \mathrm{tr}(AP^tAP)$.} Lastly, plugging $\varepsilon = \frac{1}{96n^5}$ into \eqref{eq:last-bound-cv} we get
\begin{align}
\frac{1}{2|E|}\abs{\mathrm{tr}\left( A_1 P^t A_2 P \right)} \geq 1 - \frac{1}{n^2}.
\end{align}
We now use a final auxiliary lemma to complete our proof.
\begin{lem}
Let $A$ and $B$ be adjacency matrices over a graph of size $n$, and both having $|E|$ edges. If 
\begin{align}
\frac{1}{2|E|}\tr(AB) \geq 1-\frac{1}{n^2}
\end{align}
then $A = B$.
\end{lem}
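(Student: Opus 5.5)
Since $A$ and $B$ are symmetric $0/1$ matrices with zero diagonal and exactly $2|E|$ nonzero entries each, the argument reduces to counting entries. First, $\tr(AB)=\sum_{i,j}A_{ij}B_{ji}=\sum_{i,j}A_{ij}B_{ij}$, which is the number of ordered pairs $(i,j)$ with $A_{ij}=B_{ij}=1$. Equivalently, $\tr(AB)=2|E_A\cap E_B|$, where $E_A,E_B$ are the edge sets. The hypothesis therefore says
\[
\frac{|E_A\cap E_B|}{|E|}\ \ge\ 1-\frac{1}{n^2}.
\]

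Now use integrality. The quantity $|E|-|E_A\cap E_B|$ is a nonnegative integer, and the bound gives $|E|-|E_A\cap E_B|\le |E|/n^2$. For a simple graph, $|E|\le \binom{n}{2}<n^2/2$, so $|E|/n^2<1/2<1$. Hence $|E|-|E_A\cap E_B|=0$, so $E_A\subseteq E_B$. Since both edge sets have size $|E|$, this forces $E_A=E_B$, and therefore $A=B$.

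The only possible obstacle is bookkeeping. One has to check that the convention $\|A_i\|^2=2|E|$ in the preceding argument matches symmetric adjacency matrices with zero diagonal, so that $\tr(AB)$ really is $2|E_A\cap E_B|$ and not $|E_A\cap E_B|$. Even if it were $|E_A\cap E_B|$, the same integrality argument works with the slack $|E|/n^2<1$: either the hypothesis becomes unsatisfiable (vacuous), or equality of edge sets follows. Either way the lemma holds. The degenerate case $|E|=0$ is trivial, since then $A=B=0$.
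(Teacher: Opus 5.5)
Your proof is correct and is essentially the paper's argument: the paper shows that if $A\neq B$ then $\tr(AB)=\sum_{ij}A_{ij}B_{ij}\le 2|E|-1$, and uses $|E|<n^2/2$ to conclude that the hypothesis rules this out. You run the same counting and integrality argument on $|E|-|E_A\cap E_B|$ directly; this also makes explicit that equal edge counts are what turn $E_A\subseteq E_B$ into equality.
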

\begin{proof}
Note that
\begin{align}
\mathrm{tr}(AB) = \sum_{ij} A_{ij} B_{ij} \leq 2|E|-1
\end{align}
if $A\neq B$. Therefore, if
\begin{align}
\frac{1}{2|E|}\tr(AB) > 1 - \frac{1}{2|E|},
\end{align}
then $A=B$. As $|E|< n^2/2$, we conclude that the condition in the statement of the lemma ensures equality of the graphs.
\end{proof}

This shows that if $\beta > 1-\frac{1}{96 n^5}$, then the two graphs are isomorphic. The contrapositive of this statement concludes that if $G_1 \not\cong G_2$ then the overlap must be smaller than $1-\frac{1}{96n^5}$.
\end{proof}

Next, we establish an upper bound for this problem.

\subsection{$\SZK$ upper bound}

We reduce this problem to the Statistical Difference (\textsc{StatDist}), an $\SZK$-complete problem \cite{sahai2003complete}, and hence, prove containment in $\SZK$.

\begin{theorem} [Formal restatement of the upper bound in \cref{thm:bosonic-PSGI*}]
$(\alpha,\beta; r)$-\textsc{PSGI}$[\mathcal O_n]$ is contained in $\SZK$ for $\alpha<1$ a constant and  $\beta = 1-O(\frac{1}{n^{r}})$, where $n$ is the number of modes and $r$ is the number of photons.
\end{theorem}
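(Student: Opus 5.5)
We would give a Karp reduction to \textsc{StatDist} (with constant gap, e.g.\ TV distance $\le 1/3$ versus $\ge 2/3$, after polarization), working with the complement (non-isomorphism) since $\SZK$ is closed under complement. The core states $\ket{c_1},\ket{c_2}$ live in the span of Fock states of total photon number $r$ (or at most $r$). This space has dimension $d=\binom{n+r-1}{r}=n^{O(r)}$, so every vector and every matrix $R(U)$ restricted to it can be written down explicitly in polynomial time. By \cref{fact:bosons}, $\bra{\mathbf k}R(U)\ket{\mathbf j}=\bra{s_{\mathbf k}}U^{\otimes r}\ket{s_{\mathbf j}}$, so these entries are degree-$r$ polynomials in the entries of $U$ and can be computed classically.

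Next we build the two samplers $D_1,D_2$. Fix $m=\Theta(1)$ copies (or $\Theta(\log(1/\eta))$ if a smaller constant gap is needed). For each $j\in\{1,2\}$, the sampler does the following. First, it samples an approximately Haar-random $U\in\mathrm U(n)$ to polynomial precision, for instance by taking a QR decomposition of a complex Gaussian matrix rounded to $\poly(n)$ bits. Second, it computes the vector $v=R(U)\ket{c_j}\in\mathbb C^d$. Third, it outputs a rounded, noise-smoothed version: it adds i.i.d.\ Gaussian noise of variance $\sigma^2$ to each coordinate and rounds to a fine grid.

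The noise is there so that the output distribution depends on $v$ in a Lipschitz way. For two Gaussians $\mathcal N(v,\sigma^2 I)$ and $\mathcal N(w,\sigma^2 I)$, the total variation distance is at most $\norm{v-w}/(2\sigma)$. We would work with projective orbits: either multiply $v$ by a uniformly random phase before adding noise, or choose $U\in\mathrm{SU}(n)$ times a random global phase. This way the global phase ambiguity $\Re$ versus $\abs{\cdot}$ is harmless.

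\textbf{Completeness (isomorphic case).} Suppose $\Re\bra{c_2}R(U_0)\ket{c_1}\ge\beta$. Then $\norm{R(U_0)\ket{c_1}-\ket{c_2}}^2\le 2(1-\beta)$. By Haar invariance, the law of $R(UU_0)\ket{c_1}$ equals that of $R(U)\ket{c_1}$. Coupling the two samplers through $U\mapsto UU_0$ therefore gives
\[
\mathrm{TV}(D_1,D_2)\le \frac{\sqrt{2(1-\beta)}}{2\sigma}.
\]
The discretization error of the Haar sampling adds at most an inverse-polynomial term. With $1-\beta=O(n^{-r})$, which is the regime fixed in the statement (presumably $\sigma$ is tied to $d^{-1/2}\approx n^{-r/2}$), choosing $\sigma$ a suitable constant multiple of $\sqrt{1-\beta}$ makes this distance at most $1/3$.

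\textbf{Soundness (non-isomorphic case), the main obstacle.} We must show that $D_1$ and $D_2$ are far apart when every $U$ gives $\abs{\bra{c_2}R(U)\ket{c_1}}\le\alpha$. The noiseless supports are the two orbits, which lie on the unit sphere of $\mathbb C^d$ and are at distance at least $\sqrt{2(1-\alpha)}$ from each other. The noise, however, has total norm about $\sigma\sqrt{2d}$, which could swamp this separation.

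The fix is to avoid letting the optimal test be a nearest-orbit comparison against full-dimensional noise. Instead we bound the TV distance from below by an explicit distinguisher. Project the noisy sample $y$ onto each candidate $R(U)\ket{c_i}$ and compute $\max_U\abs{\langle R(U)c_i,y\rangle}$. The noise contributes at most the supremum of a Gaussian process over a $\dim\mathrm U(n)=n^2$-parameter family. By Dudley or a net argument, this is $O(\sigma n\sqrt{\log n})$. This is below the gap $(1-\alpha)/2$ provided $\sigma=o(1/(n\sqrt{\log n}))$.

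The remaining difficulty is to reconcile this with completeness, which needs $\sqrt{1-\beta}\ll\sigma$. It is satisfied when $1-\beta=O(n^{-r})$ with $r\ge 3$, or more simply by taking $m$ independent noisy copies per sample, which amplifies soundness without hurting the coupling. The unbounded prover of $\SZK$ can run this exponential-time test, so only the TV lower bound matters. Finally, polarization yields the standard \textsc{StatDist} gap.
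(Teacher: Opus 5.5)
Your sampler and your completeness argument are essentially the paper's. You take a Haar $U$, compute $R(U)\ket{c_j}$ in the $d=n^{O(r)}$-dimensional photon-number sector, and add $\mathcal N_{\mathbb C}(0,\sigma^2 I)$ noise. For completeness you couple $U\mapsto UU_0$ and bound the Gaussian total variation via Pinsker.

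The difference is soundness. The paper takes $\sigma=\Theta(b/n^{r/2})$ with $b=\sqrt{2(1-\alpha)}$, so that $\norm{\eta}\le b/2$ with constant probability. It then uses the $b/2$-neighbourhood of the orbit of $\ket{c_1}$ as the distinguishing event. Completeness then needs $\sqrt{1-\beta}\ll\sigma\sim n^{-r/2}$, which is exactly the hypothesis $\beta=1-O(n^{-r})$ with a small enough constant. So your worry that the noise norm $\sigma\sqrt{2d}$ ``could swamp'' the separation does not arise in the stated regime.

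Your Gaussian-process bound $\sup_U\abs{\langle R(U)c_i,\eta\rangle}=O(\sigma n\sqrt{\log n})$ depends only on $\dim\mathrm U(n)=n^2$, not on $d$. It is correct: take a net over $\mathrm U(n)$ and use that $U\mapsto R(U)\ket{c}$ is $r$-Lipschitz. For $r\ge 3$ it actually yields more than the theorem: containment whenever $1-\beta=o(1/(n^2\log n))$, independently of $r$.

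As written, however, your argument does not cover $r\in\{1,2\}$, which the statement includes. There $1-\beta=O(n^{-r})$ does not force $\sqrt{1-\beta}\ll 1/(n\sqrt{\log n})$, so your two constraints on $\sigma$ are incompatible.

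Your proposed remedy fails in either reading:
\begin{itemize}
\item If you output $m$ independent noisy copies $R(U)\ket{c_j}+\eta_1,\dots,R(U)\ket{c_j}+\eta_m$ of the same vector, their average is a sufficient statistic. This is therefore equivalent to one copy with noise $\sigma/\sqrt m$, and the completeness bound becomes $\sqrt m\,\sqrt{2(1-\beta)}/(2\sigma)$. That is the same trade-off as shrinking $\sigma$, so the coupling \emph{is} hurt.
\item If you instead mean $m$ independent samples from $D_j$, that only amplifies a single-sample soundness gap you have not established in this regime.
\end{itemize}

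The repair is one line. Also use Cauchy--Schwarz, $\sup_U\abs{\langle R(U)c_i,\eta\rangle}\le\norm{\eta}\approx\sigma\sqrt{2d}$, and take the better of the two bounds. With $\sigma=\Theta(\sqrt{1-\alpha}\,n^{-r/2})$ your max-overlap test then separates $1-\norm{\eta}$ from $\alpha+\norm{\eta}$ with constant probability. This is the paper's argument, and it closes the $r\le 2$ cases.
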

\begin{proof}
We show a reduction to the \textsc{StatDist} problem. Note that the states we have at hand live in the subspace of the Fock space with $r$ photons. This is a Hilbert space of dimension $d = {n+r-1 \choose r}$, which is polynomially large as $r=O(1)$. Hence, we can classically write down these states. In what follows, we use this fact and produce random vectors in this Hilbert space. We show that if the two states are close up to a linear optical element, then the random vectors that we produce from the two input states are close in total-variation distance. Similarly, we show that if the two states are not close, the randomly generated vectors have far distributions.

Below is a concrete description of how to get the desired sampler from a given core state: Given a core state $\ket{c}$ (with stellar polynomial $P_c$), define sampler $S_c$ that:
\begin{enumerate}
    \item Takes a random unitary $U$ from the Haar measure on $\mathrm{U}(n)$, and calculates the state $\ket{c_U}:= R(U) \ket{c}$.
    \item Samples a random vector $\ket{\eta}$ from a $\mu$ approximation to the Gaussian distribution $\sim \mathcal N_{\mathbb C}(0, \sigma^2 \mathbb I)$ in a Hilbert space of dimension $d={n+r-1\choose r}\le n^r$.
    \item Outputs $\ket{c_{U}} + \ket{\eta}$.
\end{enumerate}

In what follows, we use $T(\cdot,\cdot)$ to denote the total variation distance between two random variables.
We use the following fact, which we then use to show that the addition of the Gaussian random noise helps making the two distributions close in YES cases (intuitively, if the orbits are close, then the random Gaussian noise would make them indistinguishable).
\begin{lem}\label{lem:gauss-smooth}
Let $X, Y$ be two random Gaussian vectors of variance $\sigma^2\mathbb I$, and mean vectors $u,v$. We have
\begin{align}
T(X,Y) \le \frac{\norm{u-v}}{2\sigma}.
\end{align}
\end{lem}
\begin{proof}
The proof follows by the fact that the KL-divergence between $X,Y$ is given by
\begin{align}
D_{\mathrm{KL}}(X\|Y) = \frac{\norm{u-v}^2}{2\sigma^2}. 
\end{align}
Then, via Pinsker's inequality we get
\begin{align}
T(X,Y) \le \sqrt{\frac{1}{2}D_{\mathrm{KL}}(X\|Y)} = \frac{\norm{u-v}}{2\sigma}.
\end{align}
\end{proof}
Note that \cref{lem:gauss-smooth} also implies that if $X=Y+e$ for some random $e$ with $\norm{e}\le \delta$ (almost surely) then we have $T(X\|Y)\le \delta/2\sigma$, due to the convexity of the total variation distance. As a result, we have that the total variation distance between the $\mu,\nu$-approximate sampler and the exact sampler is bounded by $(r\nu+\mu)/2\sigma$. Hence, choosing $\nu=\mu/r=o(\sigma^2)$\footnote{As we show below, we have $\sigma = \Theta(1/n^{r/2})$}, we can analyze the problem as if we had sampled from the exact random distributions we desire. The approximate sampling would have only slight effect of $o(1)$ on the total variation distances we calculate.

We claim that
\begin{lem}\label{lem:szk-yes}
If $\inf_{G\in\mathcal O_n} \norm{\ket{c_1} - G\ket{c_2}} \leq a$, then
\begin{align}
T(S_1, S_2) \leq \frac{2a}{\sigma}.
\end{align}
\end{lem}
We then show a converse
\begin{lem}\label{lem:szk-no}
If $\inf_{G\in\mathcal O} \norm{\ket{c_1} - G\ket{c_2}} \geq b$, then
\begin{align}
T(S_1, S_2) \geq \frac23,
\end{align}
assuming $\sigma = \Theta\left(\frac{b}{n^{r/2}}\right)$.
\end{lem}
Therefore, for any $a$ such that $a = O(b/n^{r/2})$ we get inclusion in $\mathsf{SZK}$. All that is left to do is to prove the above lemmas.
\begin{proof}[Proof of \cref{lem:szk-yes}]
Let $X_i$ be the output of $S_i$ (for $i=1,2$). Also, let $X_{i,U}$ denote $X_i$ conditioned on having sampled unitary $U$ in step 1 of the sampling procedure. It is the case that $X_{i,V}$ is a Gaussian vector drawn from $\mathcal N_{\mathbb C}(\ket{c_{i,U}}, \sigma^2 \mathbb I)$. Let $W\in\mathrm{U}(n)$ be the unitary such that 
\begin{align}
\norm{\ket{c_1} - G_W\ket{c_2}}\leq a.
\end{align}
Then, consider the pair $(X_{1, U}, X_{2, UW})$. We have that
\begin{align}
\norm{\ket{c_{1,U}} - \ket{c_{2, UW}}} \leq a.
\end{align}
This means that $X_{1,U}$ and $X_{2,UW}$ are two random Gaussian vectors with covariance matrix $\sigma^2 \mathbb I$, and with vectors of means that are $a$-close. Using \cref{lem:gauss-smooth}, we get that the total-variation distance between the distributions of $X_{1,U}$ and $X_{2,UW}$ is bounded by
\begin{align}
T\left(P(X_{1,U}) , P(X_{2,UW})\right) \leq \frac{a}{2\sigma}.
\end{align}
Mixing the distributions of $X_{1,U}$ and $X_{2,UW}$ over random choices of $U$ can only decrease the total-variation distance (due to convexity of total-variation distance). Finally, since $U\sim\mathrm{Haar}(\mathrm U(n))$, implies $UW$ is also Haar random, we get that $X_{1,U}$ and $X_{2,UW}$ randomized over $U$, become $X_1$ and $X_2$, respectively, which completes the proof.
\end{proof}

\begin{proof}[Proof of \cref{lem:szk-no}]
Let us denote the set of all possible $\ket{c_{i,U}}$ vectors by $\mathcal V_i$. It is the case that
\begin{align}
\inf_{\ket x, \ket y: \ket x\in \mathcal V_1, \ket{y} \in \mathcal V_2} \norm{x-y} \ge b.
\end{align}
Now, define the region of distance $b/2$ from $\mathcal V_1$ by
\begin{align}
\mathcal Z := \left\{ \ket{z}: \inf_{\ket{x}\in\mathcal V_1}\norm{x-z} \le \frac{b}2\right\}.
\end{align}
Since a Gaussian's norm with covariance $\sigma \mathbb I_d$ is highly concentrated around $\sigma\sqrt{d}$, we get that with the choice of $\sigma = \Theta(\frac{b}{\sqrt{n^{r}}})$ we are guaranteed
\begin{align}
\Pr(\norm{\eta} \le b/2) = p,
\end{align}
for some $p=\Theta(1)$, where $\ket\eta \sim \mathcal N_{\mathbb C}(0, \sigma^2 \mathbb I)$. As a result, we get that
\begin{align}
\begin{split}
&\Pr(X_1\in\mathcal Z) = \Pr(\norm{\eta}\le b/2) = p,\\
&\Pr(X_2\in\mathcal Z) \le \Pr(\norm{\eta}\ge b/2) = 1-p.
\end{split}
\end{align}
As a result, we obtain that
\begin{align}
T(S_1,S_2) \ge 2p-1.
\end{align}
Setting $p=5/6$ gives the desired result. Finally, converting from the trace norm to fidelity, we get the desired result.
\end{proof}
Lastly, we highlight that the Verifier must use finite-precision representations. However, with polynomially many bits of precision, we can ensure the total-variation distances change by at most $o(1)$.
\end{proof}

Finally, we note that an $\mathsf{NP}$ upper bound for any $\beta-\alpha\ge \frac{1}{2^{\mathsf{poly}(n)}}$ is straightforward, as the prover can send the unitary $U$ that corresponds to the passive map that approximately maps one state into another.

\printbibliography

\end{document}